\let\originalleft\left
\let\originalright\right
\renewcommand{\left}{\mathopen{}\mathclose\bgroup\originalleft}
\renewcommand{\right}{\aftergroup\egroup\originalright}
\newcommand{\E}{{\mathbb{E}}}
\newcommand{\R}{{\mathbb{R}}}
\def\E{{\mathbb E}}
\def\V{{\mathbb V}}
\def\P{{\mathcal P}}
\renewcommand{\d}{\mathrm{d}}
\newcommand{\rangez}[1]{[{#1}]_0}
\newcommand{\be}{\begin{equation}}
\newcommand{\ee}{\end{equation}}
\newcommand{\bes}{\begin{equation*}}
\newcommand{\ees}{\end{equation*}}
\newcommand{\bea}{\begin{eqnarray}}
\newcommand{\eea}{\end{eqnarray}}
\newcommand{\beas}{\begin{eqnarray*}}
\newcommand{\eeas}{\end{eqnarray*}}
\newtheorem{theorem}{Theorem}
\newtheorem{lemma}{Lemma}
\newtheorem{corollary}{Corollary}
\newtheorem{problem}{Problem}
\newtheorem{proposition}{Proposition}
\newtheorem{assumption}{Assumption}
\numberwithin{equation}{section}
\newcommand{\eq}[1]{(\ref{eq:#1})}
\renewcommand{\sec}[1]{\hyperref[sec:#1]{Section~\ref*{sec:#1}}}
\newcommand{\app}[1]{\hyperref[app:#1]{Appendix~\ref*{app:#1}}}
\newcommand{\thm}[1]{\hyperref[thm:#1]{Theorem~\ref*{thm:#1}}}
\newcommand{\prop}[1]{\hyperref[prop:#1]{Proposition~\ref*{prop:#1}}}
\newcommand{\lem}[1]{\hyperref[lem:#1]{Lemma~\ref*{lem:#1}}}
\newcommand{\cor}[1]{\hyperref[cor:#1]{Corollary~\ref*{cor:#1}}}
\newcommand{\prb}[1]{\hyperref[prb:#1]{Problem~\ref*{prb:#1}}}
\newcommand{\tab}[1]{\hyperref[tab:#1]{Table~\ref*{tab:#1}}}
\newcommand{\ass}[1]{\hyperref[ass:#1]{Assumption~\ref*{ass:#1}}}
\newcommand{\fig}[1]{\hyperref[fig:#1]{Figure~\ref*{fig:#1}}}
\let\@@magyar@captionfix\relax
\begin{document}

\title{Quantum-accelerated multilevel Monte Carlo \newline methods for stochastic differential equations in mathematical finance}

\author[1]{Dong An}
\author[2]{Noah Linden} 
\author[3,4,5]{Jin-Peng Liu}
\author[2,6]{Ashley Montanaro}
\author[2]{Changpeng Shao}
\author[1]{Jiasu Wang}
\affil[1]{Department of Mathematics, University of California, Berkeley, CA 94720, USA}
\affil[2]{School of Mathematics, Fry Building, University of Bristol, BS8 1UG, UK}
\affil[3]{Joint Center for Quantum Information and Computer Science, University of Maryland, MD 20742, USA}
\affil[4]{Institute for Advanced Computer Studies, University of Maryland, MD 20742, USA}
\affil[5]{Department of Mathematics, University of Maryland, MD 20742, USA}
\affil[6]{Phasecraft Ltd, Quantum Technologies Innovation Centre, Bristol BS1 5DD, UK}

\maketitle

\begin{abstract}
  Inspired by recent progress in quantum algorithms for ordinary and partial differential equations, we study quantum algorithms for stochastic differential equations (SDEs). 
Firstly we provide a quantum algorithm that gives a quadratic speed-up for multilevel Monte Carlo methods in a general setting. As applications, we apply it to compute expectation values determined by classical solutions of SDEs, with improved dependence on precision.
We demonstrate the use of this algorithm in a variety of applications arising in mathematical finance, such as the Black-Scholes and Local Volatility models, and Greeks. We also provide a quantum algorithm based on sublinear binomial sampling for the binomial option pricing model with the same improvement.
\end{abstract}

\section{Introduction}
\label{sec:introduction}

Differential equations are ubiquitous throughout mathematics, science, and engineering. Specifically, ordinary differential equations (ODEs) and partial differential equations (PDEs) characterize continuous processes of systems arising extensively in many fields, from solid mechanics, fluid dynamics, and electromagnetism to biology \cite{Eva10}. Calculations of properties of such deterministic systems, typically require numerical schemes, that is one discretizes the differential equation in order to provide an approximate value of the quantity of interest \cite{Atk08}. 

For numerous systems arising in statistical physics, molecular dynamics, finance, and other real-world models, the dynamics  is captured by a stochastic differential equation (SDE) \cite{KP13,mao2007stochastic}. 
 Given a typical SDE, a fundamental computational problem is to provide an expected value of a random variable $Y$, denoted $\E[Y]$, which is a functional determined by the solution of the SDE.
Such a computational problem has been widely studied in mathematical finance, where the quantity $Y$ represents the payoff in option and derivative pricing. 
It is often computationally expensive to estimate $\E[Y]$, since a scheme that approximates the SDE is necessarily run many times to average over the randomness. 
In this domain, Monte Carlo (MC) methods are basic tools with a provable complexity analysis.

Monte Carlo simulation, known for its flexibility and generality, refers to performing Monte Carlo methods in simulating SDEs \cite{KP13,KW09}. This method makes use of randomness to estimate $\E[Y]$ of a SDE as introduced above. In general, Monte Carlo simulation generates $k$ independent approximate samples from $Y$ by performing a chosen scheme, and then outputs an average of those $k$ outputs as an approximate expectation of $Y$. Assuming the variance of $Y$ is bounded by $\sigma^2$, according to Chebyshev's inequality, it suffices to utilize $k=O(\sigma^2/\epsilon^2)$ samples to estimate $\E[Y]$, where $\epsilon$ is the additive error \cite{KW09}. 

Simulating SDEs using Monte Carlo is very useful when the cost of each sample is cheap. A typical example is the Geometric Brownian Motion (GBM) in the Black-Scholes (BS) model \cite{Bod09,BS73,Hul03}, which characterizes stock prices in the financial market. Classical algorithms are designed to directly sample from the solution of GBM, whose explicit form is known, rather than simulating the paths of the GBM itself. Assuming the cost of sampling the solution is ignorable, i.e. $O(1)$, the computational complexity equals the number of samples, $O(1/\epsilon^2)$. However, a challenge faced by Monte Carlo occurs when creating each sample is costly, so even quadratic dependence on $\epsilon$ may result in an  computational complexity significantly larger than $O(1/\epsilon^2)$ in practice. More concretely, we consider a general SDE without an explicit solution. To compute an approximate solution by numerical methods, we discretize the SDE on the time interval $[0,T]$ with the step size $h$, and perform a scheme of strong order $r$ (which produces a random variable $\widehat{Y}$ that is $\epsilon$-close to $Y$, where $\epsilon = O(h^r)$) \cite{KP13}. Taking the Milstein scheme with strong order $1$ as an example, \emph{i.e.} $r=1$, then $T/h = \Omega(1/\epsilon)$ number of iterations is required to produce one sample. Under this circumstance, the computational complexity of classical Monte Carlo simulation is $O(1/\epsilon^3)$ in total. Generally, when performing a scheme of strong order $r$, the complexity can be improved to $O(1/\epsilon^{2+1/r})$. Usually, it is challenging to perform a scheme with  large $r$, due to the higher smoothness requirement of the SDE, and, in practice, it is harder to implement explicit forms of higher strong order schemes \cite{BBT04,KP13,Kuz18}. So the acceleration is moderate in practice. 

To reduce such an expensive computational cost, multilevel Monte Carlo (MLMC) methods have attracted very considerable attention recently, and have successfully been applied to simulate SDEs with applications in finance \cite{Gil08,Gil15}. Recalling the goal of estimating $\E[Y]$ of a general random variable $Y$, given a sequence of estimators $P_0, P_l, \ldots, P_L$ that approximates $Y$ with increasing accuracy and cost, multilevel Monte Carlo aims to estimate $\E[P_L]$ by simulating a sum of $\E[P_l-P_{l-1}]$, with different numbers of samples at each level $l$.
The main improvement of multilevel Monte Carlo is from reducing the number of samples when the variance of $P_l-P_{l-1}$ is large. 
By balancing the sample numbers and variances for different $P_l-P_{l-1}$ and summing them together, multilevel Monte Carlo gives an optimal overall cost of estimating $\E[P_L]$ that approximates $\E[Y]$ within the mean-squared error $\epsilon^2$. 
As described by Giles et al, multilevel Monte Carlo with a scheme of strong order $r>1$, is capable of estimating $\E[Y]$ of general SDEs with the overall cost $\widetilde O(1/\epsilon^2)$ (\cite[Theorem 1]{Gil15}), where $\widetilde O$ neglects logarithmic factors. Compared to the standard Monte Carlo simulation with the same scheme, it removes a $1/\epsilon^{1/r}$ 
factor in the overall complexity. Moreover, this approach does not require the use of higher strong order schemes, and hence avoids the smoothness requirement and the implementation difficulty as well. 
We note that when the samples are allowed not to be random and independent, but based on certain lattice rules, it could lead to a $O(1/\epsilon^p)$ cost with $p<2$ under certain conditions \cite{DKS13,GW09}. In our case we assume the samples are chosen randomly and independently, then $O(1/\epsilon^2)$ is the best known complexity that classical algorithm can achieve. \cite{Gil15} 
However such a quadratic dependence on $1/\epsilon$ of the overall complexity that multilevel Monte Carlo can achieve is still far from ideal, particularly when applications to mathematical finance are considered. 

Quantum computers are expected to outperform classical computers for solving a system of linear equations \cite{Amb12,AL19,CKS15,GSLW18,HHL08,LT19,SSO18,TAWL20} and differential equations \cite{Ber14,BCOW17,CL19,CJS13,CPPTK13,MP16,CJO19,CLO20,ESP19,LKK20,LMS20,XDGS18,XDGS19}. Quantum algorithms for certain stochastic differential equations, such as simulating GBM of the Black-Scholes model as discussed above, have  attracted increasing attention in quantum computational finance \cite{BDJ20,FJO19,OML19,RGB18, GRSS21}. 
Reference~\cite{GRSS21} claims an exponential speedup over classical algorithms to solve the Black-Scholes PDE, but does not include a detailed complexity analysis; it uses a very different approach to that used here, which does not seem to be easily extendible to general SDEs.
For the payoff models, using quantum-accelerated Monte Carlo methods \cite{montanaro2015quantum} based on sampling from the explicit solution of GBM, quantum algorithms can approximate the expected value of the price of portfolio within error $\epsilon$ in complexity $\widetilde{O}(1/\epsilon)$,
a quadratically improved dependence on $\epsilon$ compared to classical Monte Carlo simulation \cite{KMTY21,OML19,RGB18,SES20}. However, previous quantum algorithms, that sample the explicit solution of GBM, cannot be extended to simulate general SDEs with no explicit formula for the solution. Reference \cite{KMTY20} presented a practical quantum circuit for simulating the Local Volatility (LV) model, which generalizes the GBM and does not have an explicit solution \cite{DK94,Dup94}. However, reference \cite{KMTY20} did not provide a concrete complexity for simulating the LV model. In summary, quantum speedups for payoff models of general SDEs lacking explicit solutions are far from well established.

In this paper, we provide quantum algorithms for approximating classical outputs determined by general SDEs (i.e. ones with no explicit formulas for the solutions) with computational complexity $\widetilde O(1/\epsilon)$. We will apply these to several payoff models of SDEs which arise in finance, and the outputs are the payoffs in pricing. To achieve such an improvement, we first propose a quantum-accelerated multilevel Monte Carlo (QA-MLMC) method in a general setting, and then apply QA-MLMC for general SDEs. Compared to the classical counterpart, QA-MLMC achieves a quadratic speedup in precision up to a logarithmic factor. The main ingredient to this acceleration is the quantum speedup of the Monte Carlo method \cite{montanaro2015quantum}. Roughly, to approximate $\E[P_l - P_{l-1}]$ in the MLMC approach as we discussed above, we only need to use $\widetilde{O}(1/\epsilon)$ samples. We shall prove that this speedup is preserved in the telescoping sum $\sum_{l=0}^L \E[P_l - P_{l-1}]$ to approximate $\E[P_L]$. 
We remark that a somewhat similar idea was used in~\cite{montanaro2015quantum} for the special case of computing a partition function as a telescoping \emph{product} of terms, each of which is approximated using QA-MC.

Instead of the mean-squared error considered in the classical case, QA-MLMC only returns an approximation of $\E[P_L]$ in the sense of the additive error. However, this will not incur any unfair comparison between MLMC and QA-MLMC since the two types of errors are almost equivalent (see \app{errors}). Because of the different types of errors, MLMC and QA-MLMC have slight differences in their assumptions. 
This will become clear below when we apply QA-MLMC to solve practical problems in finance.

As discussed above, to solve a general SDE with no explicit solutions, usually we need a discretization scheme of strong order $r$. Classically, to apply MLMC, a numerical scheme of strong order $r>1$ is usually sufficient to obtain a complexity of $\widetilde{O}(1/\epsilon^2)$. However, in the quantum case, to ensure a quadratic speedup, i.e. to obtain a complexity of $\widetilde{O}(1/\epsilon)$, we have to apply a numerical scheme of strong order $r>2$. 
Note that when an order $r$ scheme is used, the Monte Carlo method can solve the SDE with a complexity of $\widetilde{O}(1/\epsilon^{2+1/r})$. And a direct corollary of \cite{montanaro2015quantum} shows that in the quantum case this method can be improved to has a complexity of $\widetilde{O}(1/\epsilon^{1+1/r})$. However, for the reasons discussed above we cannot choose $r$ as large as we want. So QA-MLMC still has advantages over QA-MC both in theory and in practice.

\begin{table}[ht]
\begin{center}
\footnotesize{
\renewcommand{\arraystretch}{1.25}
\begin{tabular}{|@{\hspace{1.5mm}}c|c|c|c|c|}
    \hline
     & \textbf{Algorithm} & \textbf{Model} & \textbf{Result} \\
    \hline
    \parbox[t]{1.5mm}{\multirow{4}{*}{\rotatebox[origin=c]{90}{Classical}}}
     & MC with direct sampling \cite{Bod09,Hul03} & Black-Scholes model & $\epsilon^{-2}$ \\
     \cline{2-4}
     & MC with scheme of strong order $r$ (\prop{classical_MC}) & payoff models of general SDEs & $\epsilon^{-2-1/r}$ \\
     \cline{2-4}
     & MLMC with scheme of strong order $r>1$ (\prop{MLMC_SDE})  & payoff models of general SDEs & $\epsilon^{-2}$ \\
    \cline{2-4}
     & MC with binomial sampling (\prop{BOPM}) & binomial option pricing model & $\epsilon^{-2}$ \\
    \hline
    \parbox[t]{1.5mm}{\multirow{4}{*}{\rotatebox[origin=c]{90}{Quantum}}}
     & QA-MC with direct sampling \cite{OML19,RGB18,SES20} & Black-Scholes model & $\epsilon^{-1}$ \\
    \cline{2-4}
    & QA-MC with scheme of strong order $r$ (\thm{quantum_MC})  & payoff models of general SDEs & $\epsilon^{-1-1/r}$ \\
    \cline{2-4}
     & QA-MLMC with scheme of strong order $r>2$ (\thm{QMLMC_SDE})  & payoff models of general SDEs & $\epsilon^{-1}$ \\
    \cline{2-4}
     & QA-MC with binomial sampling (\thm{QBOPM})  & binomial option pricing model & $\epsilon^{-1}$ \\
    \hline
\end{tabular}
}
\end{center}
\caption{
Summary of the time complexities of classical and quantum algorithms for financial models with the additive error $\epsilon$, in which logarithmic factors are omitted. 
\label{tab:alg-compare}
}
\end{table}

As applications, we apply QA-MLMC to solving various payoff models, which satisfy our smoothness requirements and are of great interest in mathematical finance. Examples include the well-known Black-Scholes model that prices a variety of financial derivatives; the Local Volatility model that generalizes the Black-Scholes model by treating volatility as a function of the asset and the time; the Greeks that label the sensitivity of the price of the option for hedge portfolios \cite{Hul03}; and the binomial option pricing model as introduced above. 

For the analytically solvable Black-Scholes model, in which QA-MC has been applied to reduce the complexity from $\widetilde O(\epsilon^{-2})$ to $\widetilde O(\epsilon^{-1})$ \cite{RGB18}, we verify QA-MLMC is able to achieve the same quantum speedup. For the rest of the models, we establish the first quantum acceleration to achieve the complexity $\widetilde O(\epsilon^{-1})$, by applying QA-MLMC to the Local Volatility model and Greeks.
\tab{alg-compare} compares the performance of our approaches to classical ones for various financial models with respect to the dependence on error tolerance $\epsilon$.  

Furthermore, we also study the Black-Scholes model with European and digital payoffs in numerical experiments.
We provide concrete numerical implementations of the schemes of strong order up to $3$ to test those parameters with different payoff functions, and numerical results are in good agreement with our theoretical estimates. This provides evidence that our complexities for MLMC and QA-MLMC are reasonable and sharp.

We also consider the binomial option pricing model (BOPM), also known as the binomial lattice model (BLM), which provides an alternative option pricing model different from the Black-Scholes model, by constructing a binomial tree with the same expectation and variance as the Geometric Brownian Motion \cite{CRR79,Hul03,Ren79}. Inspired by the binomial structure, it is natural to simulate BOPM by performing sublinear binomial sampling, a recently developed technique that samples a binomial tree in sublinear time \cite{BKP15,FT15}. This technique has been applied to construct a fast random walk, which is a specific binomial tree, to develop fast classical and quantum algorithms for heat equation \cite{LMS20}. Observing that sublinear binomial sampling provides a numerical method to output an estimate of $Y$ with ignorable cost per sample, we can estimate $\E[Y]$ preserving the computational complexity the same as the sampling complexity, neglecting logarithmic factors. Therefore, we propose classical and quantum algorithms for BOPM, with the dependence on precision $\widetilde O(1/\epsilon^2)$ and $\widetilde O(1/\epsilon)$, respectively. Complementing the multilevel Monte Carlo methods, this provides an alternative method for efficiently studying properties of structured stochastic models in mathematical finance.

Overall, our theoretical and numerical results provide promise for potential applications of quantum computing in computational finance.

The rest of this paper is structured as follows. \sec{problem} introduces the payoff problem of SDEs that we study, and quantum-accelerated Monte Carlo for solving SDEs. \sec{QMLMC} states the main theorem for the quantum-accelerated multilevel Monte Carlo method. \sec{SDE} applies the quantum-accelerated multilevel Monte Carlo method to the SDE problem. \sec{BSM} presents the Black-Scholes model as an application for estimating the price of the option. \sec{LVM} generalizes the Black-Scholes model to the Local Volatility model. \sec{Greeks} introduces the Greeks as an application for estimating the sensitivity of the price. \sec{BLM} covers the binomial lattice model as an alternative option pricing model. \sec{discussion} includes a discussion and raises some open problems. 
\app{errors} discusses the relationship between mean square error and additive error.
Finally, \app{numerical results} presents numerical results of several schemes for the Black-Scholes model with European and digital payoffs.

\section{Payoff models of general SDEs}
\label{sec:problem}

\subsection{Problem settings}

Suppose that we have a stochastic differential equation (SDE) with general drift and volatility terms
\begin{equation}
\d{X_t} = \mu(X_t,t)\d t + \sigma(X_t,t)\d W_t,
\label{eq:SDE}
\end{equation}
for $t\in[0,T]$, where $X_t\in\R$ is an It\^{o} process, $W_t$ is a standard Brownian motion. The payoff problem we are concerned with is as follows.

\begin{problem}\label{prb:problem}
Assuming there exists an oracle $O_I$ that samples from an initial distribution $\pi_0$ to produce $X_0$ as an initial condition, an oracle $O_W$ that samples from the Brownian motion $W_t$, and an oracle $O_{\P}$ that produces the payoff $\P(X)$ as a functional of a specific $X$, and oracles $O_{\mu}$ and $O_{\sigma}$ that produce $\mu(X,t)$ and $\sigma(X,t)$ as functions of $X$ and $t$, respectively.

Given an evolution time $T>0$, we aim to compute
\begin{equation}
\E[\P(X_T) ~|~ X_0\in\pi_0],
\end{equation}
within an error $\epsilon>0$, where $X_T$ is generated by \eq{SDE}.
\end{problem}

\prb{problem} is widely investigated in mathematical finance. Different kinds of financial models, such as option pricing and Greeks, can be formulated as \prb{problem} with various assumptions, which are introduced in detail in \sec{SDE}. Note that if $\pi_0=\delta_{X_0}$,
then $X\in\pi_0 \to X=X_0$, and \prb{problem} is reduced to a deterministic initial value problem. 

Before we proceed, it is helpful to clarify the meaning of the errors. 
By saying to compute $\E[\P]$ within an error $\epsilon > 0$ in \prb{problem}, there are two different scenarios (here the random variable $Y$ is an estimator of $\E[\P]$): 
\begin{itemize}
    \item the mean-squared error $\E(Y-\E[\P])^2$ is bounded by $\epsilon^2$,
    \item the additive error $|Y-\E[\P]|$ is bounded by $\epsilon$ with probability at least 0.99.
\end{itemize}
Most research on classical Monte Carlo simulation uses mean-squared error, whereas research on the quantum-accelerated Monte Carlo method typically uses additive error. 
To reduce technical difficulty and be consistent with existing literature, we will follow this convention, using mean-squared error in our classical algorithms and using additive error in our quantum algorithms. 
We note however that this still allows fair comparison between classical and quantum algorithms, because these two types of error bounds are indeed almost equivalent, which is elaborated in detail in \app{errors}. 

\subsection{Monte Carlo method}

For simplicity we assume that the costs of querying $O_I$, $O_W$, $O_{\P}$, $O_{\mu}$ and $O_{\sigma}$ are $O(1)$. For our quantum algorithms, we assume that the oracles 
can produce coherent superpositions corresponding to these distributions. This assumption can usually be satisfied, for example given a classical algorithm that generates samples based on uniformly random bits; see~\cite{montanaro2015quantum} for a discussion. In particular, note that we will not actually need to put the initial distribution (which may be quite complicated) into superposition -- the algorithm will compute $\E[\P(X_T)]$ for a given starting position $X_0$, and sampling $X_0$ can be performed classically. If a sample of $X_T$ generated by \eq{SDE} can be obtained with cost $O(1)$, the computational complexity of solving \prb{problem} equals the total number of samples required to estimate $\E[\P(X_T)]$ within $\epsilon$. While for general SDEs, the cost of simulating \eq{SDE} to calculate $X_T$ each time should be taken into account. 

For instance, we consider the widely used Milstein discretization with the step size $h$, giving
\begin{equation}
\widehat{X}_{k+1} = \widehat{X}_k + \mu(\widehat{X}_k,t_k)h + \sigma(\widehat{X}_k,t_k) \Delta W_k + \frac{1}{2}\sigma(\widehat{X}_k,t_k)\partial_X \sigma(\widehat{X}_k
,t_k)((\Delta W_k)^2-h),
\end{equation}
where $k\in\rangez{n}$, where $n = T/h$. This has computational cost $O(1/h)$ to produce a sample $\widehat{X}_n$ that approximates $X_T$. Based on Monte Carlo methods \cite{KP13,KW09}, a simple estimation for $\E[\P(X_T)]$ is to generate $N$ samples, each independently outputs $\P(X_T)$, and then to produce an average,
\be  \label{Expection for introducing MLML}
Y = \frac{1}{N} \sum_{i=1}^N \P(\widehat{X}_{n}^{(i)}).
\ee
Note that the mean-squared error can be decomposed as 
\begin{equation}\label{eq:meansquared_decom}
    \E|Y - \E[\P(X_T)]|^2 = \V [Y] + |\E [Y] - \E[\P(X_T)]|^2,
\end{equation}
which can be bounded by $O(N^{-1} + h^2)$. Here we let $\V[Y]$ denote the variance of a random variable $Y$.
If we seek to bound the mean-squared error by $\epsilon^2$, then we can take $N=O(1/\epsilon^2)$, $h = \Omega(\epsilon)$, so the expected computational cost equals $O(N/h) = O(\epsilon^{-3})$.
This result is first stated in \cite{Gil08}. In general, for high order schemes, the complexity of classical Monte Carlo method can be bounded as follows. 
\begin{proposition}\label{prop:classical_MC}
    Let $Y$ be an estimator of $\E[\P(X_T)]$ with bounded variance, based on a numerical discretization with time step size $h$ using a numerical scheme with strong order $r$ (defined in \eq{def_strong_r}). 
    Then in order to achieve the accuracy of mean-squared error $\epsilon^2$, the computational complexity of the Monte Carlo method is $O(\epsilon^{-2-1/r})$. 
\end{proposition}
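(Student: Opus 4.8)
The plan is to run the standard bias--variance analysis of the Monte Carlo estimator \eqref{Expection for introducing MLML}, feed in the defining property of a strong order $r$ scheme, and then optimize the two free parameters — the number of samples $N$ and the step size $h$ — against the target accuracy $\epsilon^2$. Concretely, let $\widehat X_n$ be produced by the order-$r$ scheme with $n=T/h$ steps (cost $O(1/h)$ per sample) and set $Y=\frac1N\sum_{i=1}^N\P(\widehat X_n^{(i)})$. By the decomposition \eqref{eq:meansquared_decom}, $\E|Y-\E[\P(X_T)]|^2 = \V[Y] + |\E[\P(\widehat X_n)]-\E[\P(X_T)]|^2$, so the problem reduces to bounding the statistical (variance) term and the discretization (bias) term separately.

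For the variance term, independence of the $N$ samples gives $\V[Y]=\V[\P(\widehat X_n)]/N$, which is $O(1/N)$ uniformly in $h$ since $Y$ is assumed to have bounded variance. For the bias term, I would invoke the strong order $r$ property \eqref{eq:def_strong_r}, which yields $\E|\widehat X_n - X_T| = O(h^r)$ (by Jensen/Cauchy--Schwarz from the $L^2$ form, if that is how strong order is defined); assuming the payoff functional $\P$ is Lipschitz, $|\E[\P(\widehat X_n)]-\E[\P(X_T)]| \le \E|\P(\widehat X_n)-\P(X_T)| \le L\,\E|\widehat X_n-X_T| = O(h^r)$, so the squared bias is $O(h^{2r})$. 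Hence the mean-squared error is $O(1/N + h^{2r})$. To force this below $\epsilon^2$ it suffices to take $N=\Theta(\epsilon^{-2})$ and $h=\Theta(\epsilon^{1/r})$, i.e.\ $n=\Theta(\epsilon^{-1/r})$ steps per sample; the total expected cost is then $N\cdot O(1/h) = O(\epsilon^{-2}\cdot\epsilon^{-1/r}) = O(\epsilon^{-2-1/r})$, as claimed.

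The main obstacle is really the bias estimate: turning the pathwise (strong) error bound into a bound on the weak error $|\E[\P(\widehat X_n)]-\E[\P(X_T)]|$. This is immediate when $\P$ is globally Lipschitz, but for payoffs that are only piecewise Lipschitz or discontinuous (e.g.\ digital options) one needs extra regularity — typically of the law of $X_T$ near the kink or jump — to retain the $O(h^r)$ rate. I would note that under the smoothness assumptions in force here this step goes through, and that in general it is the weak order, rather than the strong order, that controls the bias; this subtlety is what makes the choice $h=\Theta(\epsilon^{1/r})$ legitimate.
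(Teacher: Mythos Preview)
Your argument is correct and matches the paper's proof essentially line for line: the same bias--variance decomposition \eqref{eq:meansquared_decom}, the same Lipschitz-plus-strong-order bound on the bias (the paper just squares first, bounding $|\E[\P(\widehat X_n)]-\E[\P(X_T)]|^2\le \E|\P(\widehat X_n)-\P(X_T)|^2=O(h^{2r})$, while you bound the bias and then square), and the same choice $N\sim\epsilon^{-2}$, $h\sim\epsilon^{1/r}$. The paper also explicitly assumes global Lipschitz $\P$ for simplicity and defers the piecewise case to \sec{SDE}, exactly as you flag in your last paragraph.
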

\begin{proof}
    For simplicity we assume $\P$ is globally Lipschitz continuous (we refer to \sec{SDE} for a more general analysis).  
    To bound the mean-squared error, we only need to bound the right hand side of the equation \eq{meansquared_decom}. 
    Note that 
    \begin{equation}
    \V [Y] = \frac{1}{N} \V[\P(\widehat{X}_T)] = O(N^{-1}),
    \end{equation}
    and 
    \begin{equation}
    |\E [Y] - \E[\P(X_T)]|^2 = |\E [\P(\widehat{X}_n)] - \E[\P(X_T)]|^2 \leq \E |\P(\widehat{X}_n) - \P(X_T)|^2 = O(h^{2r}). 
    \end{equation}
    Therefore 
    \begin{equation}
        \E|Y - \E[\P(X_T)]|^2 =  O(N^{-1}+h^{2r}). 
    \end{equation}
    In order to bound the mean-squared error by $\epsilon^2$, it suffices to choose $N \sim \epsilon^{-2}$ and $h \sim \epsilon^{1/r}$, thus the complexity becomes $O(N/h) = O(\epsilon^{-2-1/r}).$
\end{proof}

\prop{classical_MC} tells us that in the classical Monte Carlo method, the total complexity can be indeed improved by using a higher order scheme, but it is bottlenecked by $O(\epsilon^{-2})$ due to the variance of the estimator. 
To minimize the overall computational cost without needing higher order schemes, we will introduce an advanced Monte Carlo approach, known as multilevel Monte Carlo \cite{Gil08}, in \sec{QMLMC}.

However before doing so, we show that quantum-accelerated Monte Carlo, as described in \cite{montanaro2015quantum} can give a speed up of non-multilevel methods.

\subsection{Quantum-accelerated Monte Carlo method}

In \cite{montanaro2015quantum}, Montanaro showed that using a quantum computer, the number of samples used in Monte Carlo can be reduced quadratically.

\begin{lemma}[Theorem 5 of \cite{montanaro2015quantum}]
\label{lem:Montanaro2015}
Let $\mathcal{A}$ be a (classical or quantum) algorithm. Let $v(\mathcal{A})$ be the random variable corresponding to $v(x)$ when the outcome of $\mathcal{A}$ is $x$.
Assume that $\V[v(\mathcal{A})]\leq \sigma^2$, then there is a quantum algorithm that estimates $\E[v(\mathcal{A})]$ up to additive error $\epsilon$ with success probability at least $2/3$
by using
\be
O\Bigl((\sigma/\epsilon)(\log \sigma/\epsilon)^{3/2} (\log\log \sigma/\epsilon)\Bigr)
\ee
samples. 
\end{lemma}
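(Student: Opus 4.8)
The result is \cite[Theorem~5]{montanaro2015quantum}; here is the route I would take to reprove it. The plan is to build the estimator in three layers: a mean-estimation primitive from quantum amplitude estimation for outputs in $[0,1]$; a ``bounded second moment'' estimator obtained from it by dyadic truncation; and finally the general bounded-variance estimator, reached by first recentering with one classical sample.

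\textbf{Layer 1 (the $[0,1]$ primitive).} If $\mathcal{B}$ is any algorithm whose output $w(\mathcal{B})$ lies in $[0,1]$, I would adjoin an ancilla and apply, conditioned on the output register, the rotation $\ket{x}\ket{0}\mapsto\ket{x}\bigl(\sqrt{1-w(x)}\,\ket{0}+\sqrt{w(x)}\,\ket{1}\bigr)$; the amplitude on the ancilla-$1$ subspace is then exactly $\sqrt{\E[w(\mathcal{B})]}$. Running quantum amplitude estimation on the associated Grover operator with $t$ iterations returns an estimate $\widehat p$ of $p=\E[w(\mathcal{B})]$ with $|\widehat p-p|=O\bigl(\sqrt{p}/t+1/t^{2}\bigr)$ at constant success probability, using $O(t)$ controlled calls to $\mathcal{B}$ and $\mathcal{B}^{\dagger}$ — this is the only place the hypothesis ``$\mathcal{A}$ may be quantum'' enters, since amplitude estimation merely needs $\mathcal{A}$ as a coherently invertible unitary. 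Choosing $t=O(1/\eta)$ gives additive error $\eta$, and a median over $O(\log(1/\delta))$ independent runs boosts the success probability to $1-\delta$.

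\textbf{Layer 2 (bounded second moment).} Next I would treat $v(\mathcal{A})$ with $\E[v(\mathcal{A})^{2}]\le 1$, splitting into positive and negative parts so it suffices to take $v\ge 0$. Write $v = v\,\mathbf{1}[v\le 1]+\sum_{\ell\ge 1} v\,\mathbf{1}[2^{\ell-1}<v\le 2^{\ell}]$ and truncate the sum at $L=\lceil\log_{2}(1/\epsilon)\rceil$; by Markov applied to $v^{2}$ the discarded tail contributes at most $\E[v^{2}]/2^{L}\le\epsilon$ to $\E[v]$. For $0\le\ell\le L$ the shell variable $u_{\ell}:=2^{-\ell}\,v\,\mathbf{1}[2^{\ell-1}<v\le 2^{\ell}]$ lies in $[0,1]$, and, again by Markov, $\E[u_{\ell}]\le\Pr[v>2^{\ell-1}]\le 4\cdot 2^{-2\ell}$: the high shells carry exponentially little mass. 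Estimating each $\E[u_{\ell}]$ with the Layer-1 primitive using $t_{\ell}$ iterations incurs error $O\bigl(2^{\ell}(\sqrt{\E[u_{\ell}]}/t_{\ell}+1/t_{\ell}^{2})\bigr)=O(1/t_{\ell}+2^{\ell}/t_{\ell}^{2})$ in $2^{\ell}\E[u_{\ell}]$, so forcing this below $\epsilon/(L+1)$ needs only $t_{\ell}=\widetilde{O}(1/\epsilon)$. The crucial point is that the $\sqrt{\E[u_{\ell}]}$ factor of amplitude estimation, combined with $\E[u_{\ell}]\lesssim 4^{-\ell}$, kills the growth in $2^{\ell}$; a naive $O(1/t_{\ell})$ bound would instead demand $t_{\ell}\sim 2^{\ell}/\epsilon$ and a disastrous total of $\widetilde{O}(1/\epsilon^{2})$. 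Summing the $L+1=O(\log(1/\epsilon))$ shell estimates yields $\E[v]$ to additive error $O(\epsilon)$ using $\widetilde{O}(1/\epsilon)$ samples; a non-uniform allocation of the per-shell precisions and confidence levels, as in \cite{montanaro2015quantum}, tightens the polylogarithmic overhead to the stated $(\log(\sigma/\epsilon))^{3/2}(\log\log(\sigma/\epsilon))$.

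\textbf{Layer 3 (general bounded variance) and the main obstacle.} For the actual hypothesis $\V[v(\mathcal{A})]\le\sigma^{2}$, rescaling $v\mapsto v/\sigma$ makes $\V=1$ and turns ``error $\epsilon$'' into ``$\sigma/\epsilon$ samples''. Since $\mu=\E[v]$ is unknown, I would first draw one classical sample $x_{0}$ and set $\widehat v=v(x_{0})$; because $\E_{x_{0}}[(\mu-\widehat v)^{2}]=\V[v]\le 1$, with constant probability $|\mu-\widehat v|=O(1)$, so $\E[(v-\widehat v)^{2}\mid x_{0}]=\V[v]+(\mu-\widehat v)^{2}=O(1)$. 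Applying Layer~2 to $(v-\widehat v)$, rescaled by an absolute constant, with target precision $O(1)$ costs $\widetilde{O}(1)$ samples and produces $\widetilde m$ with $|\widetilde m-\mu|\le 1$. Then $\E[(v-\widetilde m)^{2}]=\V[v]+(\mu-\widetilde m)^{2}\le 2$, so one more application of Layer~2 to $(v-\widetilde m)/\sqrt2$ with precision $\epsilon/\sqrt2$, at cost $\widetilde{O}(1/\epsilon)$, estimates $(\mu-\widetilde m)/\sqrt2$; adding back $\widetilde m$ gives $\mu$ to additive error $\epsilon$, and $O(1)$ repetitions with a median push the overall success probability past $2/3$. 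I expect the genuinely delicate part to be Layer~2 — making the truncation level, the per-shell target precisions, and the refined amplitude-estimation error bound $O(\sqrt{\E[u_{\ell}]}/t_{\ell}+1/t_{\ell}^{2})$ interlock so that the errors sum to $\epsilon$ while every $t_{\ell}$ stays $\widetilde{O}(1/\epsilon)$; the recentering in Layer~3 and the confidence boosting are comparatively routine.
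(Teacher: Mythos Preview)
The paper does not prove this lemma at all; it is quoted verbatim as Theorem~5 of \cite{montanaro2015quantum} and used as a black box. Your three-layer sketch (amplitude estimation for $[0,1]$-valued outputs, dyadic truncation for bounded second moment, then classical recentering for bounded variance) is exactly the architecture of the proof in \cite{montanaro2015quantum}, and the key observation you isolate --- that the $\sqrt{\E[u_\ell]}$ factor from amplitude estimation cancels the $2^\ell$ rescaling because $\E[u_\ell]\lesssim 4^{-\ell}$ --- is indeed the heart of why the total cost stays $\widetilde O(1/\epsilon)$ rather than $\widetilde O(1/\epsilon^2)$.
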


The powering lemma stated below can increase the success probability of \lem{Montanaro2015} to $1-\delta$ for any arbitrarily small $\delta$.

\begin{lemma}[Lemma 1 of \cite{montanaro2015quantum}]
\label{lem:powering lemma}
Let $\mathcal{A}$ be a (classical or quantum) algorithm which aims to estimate some quantity $\mu$, and whose output $\tilde{\mu}$ satisfies $|\mu - \tilde{\mu}|\leq \epsilon$ except with probability $\gamma$, for some fixed $\gamma<1/2$. Then, for any $\delta>0$, it suffices to repeat $\mathcal{A}$ $O(\log 1/\delta)$ times and take the median to obtain an estimate which is accurate to within $\epsilon$ with probability at least $1 - \delta$.
\end{lemma}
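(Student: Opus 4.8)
The plan is to use the standard ``median amplification'' argument. First I would run $\mathcal{A}$ independently $k = \Theta(\log 1/\delta)$ times to obtain outputs $\tilde{\mu}_1,\dots,\tilde{\mu}_k$, and let $\tilde{\mu}$ be their median. Call the $i$-th run \emph{good} if $|\mu - \tilde{\mu}_i| \le \epsilon$; by hypothesis, and because the runs are independent, each run is good independently with probability $p := 1-\gamma$, and $p > 1/2$ since $\gamma < 1/2$.

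The key deterministic observation is that if strictly more than $k/2$ of the runs are good, then the median $\tilde{\mu}$ is itself within $\epsilon$ of $\mu$. Indeed, if $\tilde{\mu} > \mu + \epsilon$, then at least $\lceil k/2 \rceil$ of the $\tilde{\mu}_i$ are $\ge \tilde{\mu} > \mu+\epsilon$ and hence not good, contradicting that a strict majority are good; the case $\tilde{\mu} < \mu - \epsilon$ is symmetric. Therefore the failure probability of the amplified estimator is at most the probability that at most half of the $k$ runs are good.

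Next I would bound $\Pr[\,\#\{\text{good runs}\} \le k/2\,]$ by a Chernoff--Hoeffding inequality: the number of good runs is a sum of $k$ i.i.d.\ Bernoulli$(p)$ random variables with mean $pk$, and $p - 1/2 = 1/2 - \gamma$ is a fixed positive constant, so this probability is at most $\exp\bigl(-2k(1/2-\gamma)^2\bigr)$. Choosing $k = \bigl\lceil \ln(1/\delta)\big/\bigl(2(1/2-\gamma)^2\bigr)\bigr\rceil = O(\log 1/\delta)$, with the implicit constant depending only on the fixed $\gamma$, makes this at most $\delta$, which is exactly the claimed bound.

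There is no serious obstacle here; the argument is elementary. The only points requiring a little care are the deterministic median claim above (that a strict majority of good samples forces the median to be good), and the fact that the concentration bound must be applied to the \emph{count} of good runs rather than to the $\tilde{\mu}_i$ directly, since the outputs need not satisfy any boundedness or averaging-friendly property --- this is precisely why one takes the median rather than the mean.
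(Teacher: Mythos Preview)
Your argument is correct and is the standard median-amplification proof. The paper itself does not prove this lemma; it is quoted verbatim as Lemma~1 of \cite{montanaro2015quantum} without proof, so there is nothing further to compare.
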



In \lem{Montanaro2015} and \lem{powering lemma}, the randomized or quantum algorithm $\mathcal{A}$ is used to produce a random variable $X$ and then compute the payoff function $\P(X)\in\R$. For the detailed implementation of $\mathcal{A}$, we refer to the start of Section 2 of \cite{montanaro2015quantum} for more details.

Based on \lem{powering lemma}, the success probability of \lem{Montanaro2015} can be improved to $1-\delta$ by using $O((\sigma/\epsilon)(\log \sigma/\epsilon)^{3/2} (\log\log \sigma/\epsilon)(\log 1/\delta))$ samples. Thus, we are able to develop a quantum-accelerated Monte Carlo method for \prb{problem}.

\begin{theorem}\label{thm:quantum_MC}
    Let $\mathcal{A}$ be an algorithm that generates a sample of numerical solution $\widehat{X}_n$ of the SDE using a numerical discretization with time step size $h$ using $r$-th order scheme in the sense that $\E|\widehat{X}_n-X_T| = O(h^{r})$. 
    Assume that $\P(\widehat{X}_n)$ has bounded variance independent of $h$. 
    Then there exists a quantum algorithm that achieves the accuracy of additive error $\epsilon$ with probability at least $0.99$, with computational complexity $\widetilde O(\epsilon^{-1-1/r})$. 
\end{theorem}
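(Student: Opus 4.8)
The plan is to combine the error-decomposition used in \prop{classical_MC} with the quantum speedup of \lem{Montanaro2015}, rather than the classical Chebyshev bound. First I would fix the step size $h$ so that the bias term is controlled: since the scheme has strong order $r$, we have $|\E[\P(\widehat{X}_n)] - \E[\P(X_T)]| \le \E|\P(\widehat{X}_n) - \P(X_T)| = O(h^r)$ (using that $\P$ is Lipschitz, or the more general payoff analysis deferred to \sec{SDE}), so choosing $h = \Theta(\epsilon^{1/r})$ makes the bias at most $\epsilon/2$. This requires $n = T/h = \Theta(\epsilon^{-1/r})$ time steps, so the cost of producing one sample of $\widehat{X}_n$ via the algorithm $\mathcal{A}$ is $O(1/h) = O(\epsilon^{-1/r})$.

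Next I would invoke the quantum-accelerated mean estimation. Let $v(\mathcal{A})$ be the random variable $\P(\widehat{X}_n)$; by assumption its variance is bounded by some $\sigma^2$ independent of $h$. Applying \lem{Montanaro2015} together with the powering \lem{powering lemma} gives a quantum algorithm that estimates $\E[\P(\widehat{X}_n)]$ to additive error $\epsilon/2$ with success probability at least $0.99$ using $\widetilde O(\sigma/\epsilon) = \widetilde O(1/\epsilon)$ invocations of (a coherent version of) $\mathcal{A}$. Each invocation costs $O(1/h) = O(\epsilon^{-1/r})$, so the total computational complexity is $\widetilde O(\epsilon^{-1} \cdot \epsilon^{-1/r}) = \widetilde O(\epsilon^{-1-1/r})$. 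Finally, by the triangle inequality the output is within $\epsilon/2 + \epsilon/2 = \epsilon$ of $\E[\P(X_T)]$ with probability at least $0.99$, which is the claimed bound.

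The main subtlety — and the step I would be most careful about — is that \lem{Montanaro2015} requires $\mathcal{A}$ to be run coherently (it uses amplitude-estimation-style access to the distribution, not just classical samples), so one must check that the discretization scheme $\mathcal{A}$, which consumes the randomness from $O_W$ (and $O_I$) over $n$ steps, can be implemented as a unitary producing the appropriate superposition. This is exactly the ``coherent superposition'' assumption already made for the oracles in the Monte Carlo subsection, and it is standard given a classical sampling algorithm built from uniform random bits, as noted with the reference to~\cite{montanaro2015quantum}; so I would simply remark that the assumption carries over and the per-sample cost $O(1/h)$ is unchanged up to constant factors when made reversible. A secondary point worth a sentence is that the logarithmic overhead in \lem{Montanaro2015} depends on $\sigma/\epsilon$, which is $\poly(1/\epsilon)$-bounded, so it is absorbed into the $\widetilde O$ notation; likewise the $O(\log(1/\delta))$ from powering with $\delta = 0.01$ is a constant.
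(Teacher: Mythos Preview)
Your proposal is correct and follows essentially the same approach as the paper: split the error into bias $|\E[\P(\widehat{X}_n)]-\E[\P(X_T)]|=O(h^r)$ and quantum estimation error, choose $h=\Theta(\epsilon^{1/r})$ so the bias is $\epsilon/2$, then apply \lem{Montanaro2015} with \lem{powering lemma} to estimate $\E[\P(\widehat{X}_n)]$ to $\epsilon/2$ in $\widetilde O(1/\epsilon)$ queries of cost $O(\epsilon^{-1/r})$ each. Your additional remarks on coherent access and the absorption of log factors are sound but go slightly beyond what the paper's proof spells out.
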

\begin{proof}
    Similarly to \prop{classical_MC}, for technical simplicity we assume $\P$ is globally Lipschitz continuous (we refer to \sec{SDE} for analysis on more general payoff functions). By \lem{Montanaro2015} and \lem{powering lemma}, there exists a quantum algorithm that generates an estimator $Y$ such that $$|Y - \E [\P(\widehat{X}_n)]| < \epsilon/2$$ with probability at least 0.99 using $\widetilde{O}(\epsilon^{-1})$ queries to $\mathcal{A}$. 
    Furthermore, 
    \begin{equation}
    |\E [\P(\widehat{X}_n)] - \E[\P(X_T)]| \le \E |\P(\widehat{X}_n) - \P(X_T)| = O(h^{r}). 
    \end{equation}
    Hence we can choose $h = O(\epsilon^{-1/r})$ to bound $|\E [\P(\widehat{X}_n)] - \E[\P(X_T)]|$ by $\epsilon/2$, and the complexity of each query to $\mathcal{A}$ becomes $O(\epsilon^{-1/r})$. 
    Combining the above two estimates, 
    in order to bound the additive error $|Y - \E[\P(X_T)]|$ by $\epsilon$ with probability at least 0.99, the total complexity is $\widetilde{O}(\epsilon^{-1-1/r})$. 
\end{proof}

\section{Quantum-accelerated MLMC}
\label{sec:QMLMC}

\subsection{Multilevel Monte Carlo method}

Heinrich \cite{Hei01} developed the first work on multilevel Monte Carlo (MLMC) methods for parametric integration, then Giles \cite{Gil08} introduced MLMC to simulate SDEs. 
In the following, we first briefly introduce this method. Then we show how to accelerate this method using a quantum-accelerated Monte Carlo method \cite{montanaro2015quantum}. For more about MLMC, we refer to the survey paper \cite{Gil15}. 

Instead of focusing on SDEs, we review the idea of MLMC in the general setting.
Now let $P$ be a random variable, our goal is to estimate $\E[P]$, given a sequence $P_0,P_1,\ldots,P_{L}$ that approximates $P$ with increasing accuracy, but also increasing cost. For instance, for the SDE \eq{SDE}, $P$ is the payoff, $P_l=\P(\widehat{X}_{n_l})$ with $n_l=T/h_l=2^{l}T$.
Now we have the following telescoping sum
\be \label{telescoping sum}
\E[P_L] = \sum_{l=0}^L \E[P_l-P_{l-1}],
\ee
where $P_{-1}=0$. We can estimate $\E[P_L]$ by using the Monte Carlo method to approximate each term $\E[P_l-P_{l-1}]$. So we obtain the following approximation of $\E[P_L]$
\be \label{approximation of EPL}
Y = \sum_{l=0}^L  Y_l, \quad {\rm where}~Y_l := \frac{1}{N_l}
\sum_{i=0}^{N_l} \left(P_l^{(l,i)} - P_{l-1}^{(l,i)} \right) .
\ee
The superindex $l$ means that the samples are generated independently.

Note that for SDE, $P_l - P_{l-1}$ comes from two discrete approximations with different timesteps but the same Brownian path. To generate random samples $P_l^{(l,i)} - P_{l-1}^{(l,i)}$, one method suggested in \cite{Gil08} is as follows. First constructing the Brownian increments for the simulation of the discrete path leading to the evaluation of $P_l^{(l,i)}$. Then summing them in groups of size $2$ to give the discrete Brownian increments for the evaluation of $P_{l-1}^{(l,i)}$.

To determine the cost, the MLMC approach considers the mean-squared error, 
\be
\E(Y-\E[P])^2,
\ee
which can be decomposed as follows
\be \label{mean-square error}
\E(Y-\E[P])^2 = \V [Y] + (\E[P_L] - \E[P])^2 \leq \V [Y] + \E[P_L - P]^2.
\ee
In order to achieve $\E (Y-\E[P])^2\le\epsilon^2$, it is sufficient to ensure that $\V[Y]\leq \epsilon^2/2$ and $\E[P_L-P]^2 \leq \epsilon^2/2$.
For $l\geq 0$, let $C_l,V_l$ be the cost and variance of one sample of $P_l-P_{l-1}$. Then
the overall cost and variance of $Y$ given in (\ref{approximation of EPL}) is
$\sum_{l=0}^L N_lC_l$ and $\sum_{l=0}^L N_l^{-1} V_l$.

To minimize the cost with a fixed variance $\epsilon^2/2$, we introduce the Lagrange multiplier $\lambda^2$ and minimize 
\be
\sum_{l=0}^L N_lC_l + \lambda^2 \left( \frac{\epsilon^2}{2} - \sum_{l=0}^L N_l^{-1} V_l \right).
\ee
This leads to $N_l = \lambda \sqrt{V_l/C_l}$
and $\lambda = 2\epsilon^{-2} \sum_{l=0}^L \sqrt{V_lC_l}$. The total computational cost
is
\be
2\epsilon^{-2} \left(\sum_{l=0}^L \sqrt{V_lC_l} \right)^2.
\ee

We restate the rigorous argument in \cite{Gil15} as follows.

\begin{lemma}[Theorem 1 of \cite{Gil15}]
\label{lem:MLMC}
Let $P$ be a random variable and $P_l$ be the corresponding level $l$ numerical approximation.
Let $Y_l$ be an approximation of $\E[P_l-P_{l-1}]$ based on Monte Carlo method such that the expected cost and variance of one sample is $C_l$ and $V_l$ respectively.
If there exist positive constants $\alpha,\beta,\gamma$ such that 
{$\alpha\geq \frac{1}{2} \min(\beta,\gamma)$} 
and
\begin{itemize}
\item $|\E[P_l-P]| = O(2^{-\alpha l})$,
\item $\E[Y_l] = \E[P_l-P_{l-1}],~l\geq 0$, where $P_{-1}=0$,
\item {$V_l= O(2^{-\beta l})$},
\item $C_l= O(2^{\gamma l})$,
\end{itemize}
then for any $\epsilon< 1/e$ there exists an $L$ such that
$Y=\sum_{l=0}^L Y_l$
has a mean-squared error with bound
{$\E(Y - \E[P])^2 \leq \epsilon^2$}. Moreover, the total computational cost is
\be
\begin{cases}
O(\epsilon^{-2}), & \beta>\gamma, \\
O(\epsilon^{-2}(\log\epsilon)^2), & \beta=\gamma, \\
O(\epsilon^{-2-(\gamma-\beta)/\alpha}), & \beta<\gamma.
\end{cases}
\ee
\end{lemma}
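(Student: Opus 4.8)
The plan is to make rigorous the Lagrange-multiplier heuristic sketched just above the statement: use the decomposition $\E(Y-\E[P])^2 = \V[Y] + (\E[P_L]-\E[P])^2$, force the squared-bias term below $\epsilon^2/2$ by a suitable choice of $L$, and force the variance term below $\epsilon^2/2$ by the optimal allocation of samples $N_l$, then read off the cost. For the bias: since $\E[Y]=\E[P_L]$ by linearity of expectation together with the telescoping identity and the unbiasedness hypothesis $\E[Y_l]=\E[P_l-P_{l-1}]$, the squared bias of $Y$ equals $(\E[P_L]-\E[P])^2 = O(2^{-2\alpha L})$. Choosing $L = \lceil \alpha^{-1}\log_2(c/\epsilon)\rceil$ for a suitable constant $c$ (using $\epsilon<1/e$ to guarantee $L\ge 1$) makes this at most $\epsilon^2/2$; the key consequence, used repeatedly below, is $2^L = \Theta(\epsilon^{-1/\alpha})$.

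For the variance and cost: with $N_l$ samples at level $l$ one has $\V[Y]=\sum_{l=0}^L V_l/N_l$ and expected cost $\sum_{l=0}^L N_l C_l$. Taking the Lagrangian optimum rounded up, $N_l = \lceil 2\epsilon^{-2}\sqrt{V_l/C_l}\,S_L\rceil$ with $S_L := \sum_{k=0}^L \sqrt{V_k C_k}$, the lower bound $N_l \ge 2\epsilon^{-2}\sqrt{V_l/C_l}\,S_L$ gives $\V[Y]\le \epsilon^2/2$ immediately (levels with $V_l=0$ simply need no samples), while the upper bound $N_l \le 1 + 2\epsilon^{-2}\sqrt{V_l/C_l}\,S_L$ gives expected cost at most $2\epsilon^{-2}S_L^2 + \sum_{l=0}^L C_l$, the last sum being the overhead from rounding $N_l$ to an integer.

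It remains to estimate the two sums using the geometric decay/growth in the hypotheses, $\sqrt{V_l C_l} = O(2^{(\gamma-\beta)l/2})$ and $C_l = O(2^{\gamma l})$. Summing over $l\le L$ splits into the three regimes: if $\beta>\gamma$ then $S_L=O(1)$; if $\beta=\gamma$ then $S_L=O(L)=O(\log(1/\epsilon))$; if $\beta<\gamma$ then $S_L$ is dominated by its top term, $S_L=O(2^{(\gamma-\beta)L/2})=O(\epsilon^{-(\gamma-\beta)/(2\alpha)})$ by the value of $2^L$. Substituting into $2\epsilon^{-2}S_L^2$ yields exactly the three stated leading costs $O(\epsilon^{-2})$, $O(\epsilon^{-2}(\log\epsilon)^2)$, $O(\epsilon^{-2-(\gamma-\beta)/\alpha})$. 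One then checks that the rounding overhead $\sum_{l=0}^L C_l = O(2^{\gamma L}) = O(\epsilon^{-\gamma/\alpha})$ is absorbed into the leading term in each regime; this is precisely where $\alpha \ge \tfrac12\min(\beta,\gamma)$ enters, since that inequality is equivalent to $\gamma/\alpha \le 2$ when $\gamma\le\beta$ and to $\gamma/\alpha \le 2+(\gamma-\beta)/\alpha$ when $\gamma>\beta$. Combining the pieces gives $\E(Y-\E[P])^2 \le \V[Y] + (\E[P_L]-\E[P])^2 \le \epsilon^2$.

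The main obstacle is this last bookkeeping step: verifying in all three cases that rounding each $N_l$ up to an integer (and thereby keeping $N_l\ge 1$) leaves the asymptotic cost unchanged, and recognizing that the hypothesis $\alpha \ge \frac12\min(\beta,\gamma)$ is exactly the condition that makes the overhead subdominant rather than an incidental technical assumption. The geometric-series bounds themselves, the choice of constants in $L$, and the final assembly of the mean-squared error are routine.
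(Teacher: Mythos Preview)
Your proposal is correct and follows exactly the approach the paper outlines: the paper does not give its own proof of this lemma but cites it from \cite{Gil15}, preceding the statement only with the Lagrange-multiplier heuristic that you have made rigorous (choice of $L$ for the bias, $N_l \propto \sqrt{V_l/C_l}$ for the variance, then the three-case geometric-series bound on $S_L^2$). Your identification of the rounding overhead $\sum_l C_l = O(\epsilon^{-\gamma/\alpha})$ as the place where $\alpha \ge \tfrac12\min(\beta,\gamma)$ is used is precisely the standard argument.
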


\subsection{Quantum-accelerated multilevel Monte Carlo method}

Recall that in MLMC, in the telescoping sum (\ref{telescoping sum}), the Monte Carlo method is utilized to approximate each mean value $\E[P_l- P_{l-1}]$, in which we can obtain a quadratic speedup using a quantum computer. To approximate the telescoping sum (\ref{telescoping sum}) via  (\ref{approximation of EPL}), we have a classical algorithm to do the sampling, and thus the assumption of the existence of the classical algorithm to do the sampling is satisfied for MLMC in \lem{Montanaro2015}.
The proof of the following theorem is similar to that of \lem{MLMC}.

\begin{theorem}
\label{thm:QMLMC}
Let $P$ denote a random variable, and let $P_l~(l=0,1,\ldots,L)$ denote a sequence of random variables such that 
$P_l$ approximates $P$ at level $l$. Further define $P_{-1}=0$.
Let $C_l$ be the cost of sampling from $P_l$, and let $V_l$ be the variance of $P_l-P_{l-1}$.
If there exist positive constants $\alpha,
\beta = 2 \hat{\beta},\gamma$
such that 
$\alpha\geq \min(\hat{\beta},\gamma)$
and
\begin{itemize}
\item $|\E[P_l-P]| = O(2^{-\alpha l})$,
\item { $V_l= O(2^{-\beta l}) = O( 2^{-2\hat{\beta} l})$},
\item $C_l = O(2^{\gamma l})$,
\end{itemize}
then for any $\epsilon< 1/e$ 
there is a quantum algorithm that 
estimates $\E[P]$ up to additive error
$\epsilon$ with probability at least 0.99, and with cost
\be
\begin{cases} \vspace{.1cm}
O\Bigl(\epsilon^{-1} (\log 1/\epsilon)^{3/2} (\log\log 1/\epsilon)^2\Bigr), & \hat{\beta}>\gamma, \\ \vspace{.1cm}
O\Bigl(\epsilon^{-1}(\log 1/\epsilon)^{7/2} (\log\log  1/\epsilon)^2\Bigr)
, & \hat{\beta}=\gamma, \\
O\Bigl(\epsilon^{-1-(\gamma-\hat{\beta})/\alpha} (\log 1/\epsilon)^{3/2} (\log\log 1/\epsilon)^2\Bigr), & \hat{\beta}<\gamma.
\end{cases}
\ee
\end{theorem}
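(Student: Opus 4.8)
The plan is to transcribe the classical argument behind \lem{MLMC}, but to estimate each increment $\E[P_l - P_{l-1}]$ with the quantum-accelerated estimator of \lem{Montanaro2015} (boosted via \lem{powering lemma}) in place of a classical sample mean, and then to redo the Lagrange-multiplier cost optimization with the quantum sample count $N_l \sim \sqrt{V_l}/\epsilon_l$ in place of the classical $N_l \sim V_l/\epsilon_l^2$. The whole difference from Giles' proof is the halved exponent in the per-level sample complexity, plus careful treatment of a ``floor'' term that the stronger hypothesis $\alpha\ge\min(\hat\beta,\gamma)$ is designed to absorb.

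Concretely: writing $|\E[P_l - P]| \le M 2^{-\alpha l}$, set $L = \lceil \alpha^{-1}\log_2(2M/\epsilon)\rceil = O(\alpha^{-1}\log(1/\epsilon))$, so the truncation bias satisfies $|\E[P_L] - \E[P]| \le \epsilon/2$ (and $\epsilon<1/e$ ensures $L\ge 1$). For each $l\in\{0,\dots,L\}$, one run of the coupled sampler producing $(P_l,P_{l-1})$ on a common path costs $O(C_l + C_{l-1}) = O(C_l)$ and its output difference has variance $V_l$; applying \lem{Montanaro2015} to this sampler and boosting the success probability to $1-0.01/(L+1)$ via \lem{powering lemma} (an extra factor $O(\log((L+1)/0.01)) = O(\log\log 1/\epsilon)$) yields $Y_l$ with $|Y_l - \E[P_l - P_{l-1}]| \le \epsilon_l$ at cost $\widetilde O((\sqrt{V_l}/\epsilon_l)C_l + C_l)$, where $\widetilde O$ hides a per-level polylog factor that is $O((\log 1/\epsilon)^{3/2}(\log\log 1/\epsilon)^2)$ uniformly in $l$ (this uniformity will hold because the $\epsilon_l$ chosen below are each bounded below by a fixed power of $\epsilon$, so every $\log(\sqrt{V_l}/\epsilon_l)$ and $\log\log(\sqrt{V_l}/\epsilon_l)$ in \lem{Montanaro2015} is $O(\log 1/\epsilon)$, resp. $O(\log\log 1/\epsilon)$). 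Output $Y = \sum_{l=0}^L Y_l$; a union bound over the $L+1$ levels gives that all increments are estimated correctly with probability $\ge 0.99$, and on that event $|Y - \E[P]| \le \sum_l \epsilon_l + \epsilon/2$, so it remains to choose $\epsilon_l$ with $\sum_l \epsilon_l = \epsilon/2$ minimizing $\widetilde O\bigl(\sum_l (\sqrt{V_l}/\epsilon_l)C_l + \sum_l C_l\bigr)$.

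A Lagrange multiplier (equivalently Cauchy–Schwarz) shows $\sum_l (\sqrt{V_l}C_l)/\epsilon_l$ is minimized over $\sum_l \epsilon_l = \epsilon/2$ by $\epsilon_l \propto (\sqrt{V_l}C_l)^{1/2} = V_l^{1/4}C_l^{1/2}$, with optimal value $\tfrac{2}{\epsilon}\bigl(\sum_{l=0}^L V_l^{1/4}C_l^{1/2}\bigr)^2$, so the total cost is
\[
\widetilde O\!\left( \frac{1}{\epsilon}\left(\sum_{l=0}^L V_l^{1/4}C_l^{1/2}\right)^{2} + \sum_{l=0}^L C_l \right).
\]
Substituting $V_l = O(2^{-2\hat\beta l})$, $C_l = O(2^{\gamma l})$ gives $V_l^{1/4}C_l^{1/2} = O(2^{(\gamma-\hat\beta)l/2})$ and $\sum_{l\le L} C_l = O(2^{\gamma L}) = O(\epsilon^{-\gamma/\alpha})$; the geometric sum $\sum_{l\le L}2^{(\gamma-\hat\beta)l/2}$ equals $O(1)$, $O(L) = O(\log 1/\epsilon)$, or $O(2^{(\gamma-\hat\beta)L/2}) = O(\epsilon^{-(\gamma-\hat\beta)/(2\alpha)})$ according to whether $\hat\beta>\gamma$, $\hat\beta=\gamma$, or $\hat\beta<\gamma$. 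Squaring, dividing by $\epsilon$, and reinstating the hidden polylog reproduces the three main terms $O(\epsilon^{-1}(\log 1/\epsilon)^{3/2}(\log\log 1/\epsilon)^2)$, $O(\epsilon^{-1}(\log 1/\epsilon)^{7/2}(\log\log 1/\epsilon)^2)$, and $O(\epsilon^{-1-(\gamma-\hat\beta)/\alpha}(\log 1/\epsilon)^{3/2}(\log\log 1/\epsilon)^2)$, matching the statement (the extra $(\log 1/\epsilon)^2$ in the middle case coming from $L^2$). Finally, $\alpha\ge\min(\hat\beta,\gamma)$ is exactly what makes the floor term $\sum_{l\le L}C_l = O(\epsilon^{-\gamma/\alpha})$ dominated by the corresponding main term: if $\hat\beta\ge\gamma$ it gives $\gamma/\alpha\le 1$, and if $\hat\beta<\gamma$ it gives $\gamma/\alpha \le 1 + (\gamma-\hat\beta)/\alpha$.

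The main obstacle I anticipate is precisely this floor term and the shape of the hypothesis. In the classical proof the analogous ``at least one sample per level'' also contributes $\sum_{l\le L}C_l$, but it is harmless there because one is already spending $\epsilon^{-2}$; here the budget has dropped to $\epsilon^{-1}$, so it is only absorbed under the strengthened condition $\alpha\ge\min(\hat\beta,\gamma)$ rather than the classical $\alpha\ge\min(\hat\beta,\gamma/2)$ (i.e.\ $\alpha\ge\tfrac12\min(\beta,\gamma)$ with $\beta=2\hat\beta$). Getting this comparison right across the three regimes, and verifying that every logarithm occurring in $N_l$ is genuinely $O(\log 1/\epsilon)$ or $O(\log\log 1/\epsilon)$ uniformly in $l$ — which requires the polynomial-in-$\epsilon$ lower bound on each $\epsilon_l$ — is where the real care goes; the remainder is Giles' telescoping argument with the sample-complexity exponent halved.
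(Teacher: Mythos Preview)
Your proposal is correct and takes essentially the same approach as the paper: telescoping sum, choose $L=\lceil\alpha^{-1}\log_2(2/\epsilon)\rceil$ to bound the bias by $\epsilon/2$, estimate each increment with \lem{Montanaro2015} boosted to failure probability $O(1/L)$ via \lem{powering lemma}, then optimize the per-level tolerances $\epsilon_l$ and absorb the ``$+1$ sample'' floor term $\sum_l C_l=O(\epsilon^{-\gamma/\alpha})$ using $\alpha\ge\min(\hat\beta,\gamma)$. The only cosmetic difference is that you carry out the optimization abstractly via a Lagrange multiplier / Cauchy--Schwarz argument to obtain $\epsilon_l\propto V_l^{1/4}C_l^{1/2}$ and total main cost $\tfrac{2}{\epsilon}\bigl(\sum_l V_l^{1/4}C_l^{1/2}\bigr)^2$, whereas the paper simply writes down the resulting explicit choices $\epsilon_l\propto 2^{(\gamma-\hat\beta)l/2}$ (with the appropriate normalization in each of the three regimes) and verifies the sums directly; these are the same $\epsilon_l$, and both treatments of the log factors and the floor term coincide.
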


\begin{proof}
In MLMC, we use the telescoping sum $\E[P_L] = \E[P_0] + \sum_{l=1}^L \E[P_l-P_{l-1}]$ to estimate $\E[P_L]$. 
Let $Y_l$ be the approximation of $\E[P_l-P_{l-1}]$
obtained by the quantum-accelerated Monte Carlo method (see \lem{Montanaro2015}).
By \lem{Montanaro2015} and \lem{powering lemma}, for any $\epsilon_l\geq 0$,
to make sure
$|\E[P_l-P_{l-1}] - Y_l| \leq \epsilon_l$ with probability at least $1-\delta$, we need $N_l=O(( 2^{-\hat{\beta} l}/\epsilon_l) (\log 2^{-\hat{\beta} l}/\epsilon_l)^{3/2}
(\log\log  2^{-\hat{\beta} l}/\epsilon_l)  (\log 1/\delta) )$ samples.
The error in approximating $\E[P_L]$ with 
$Y:=\sum_{l=0}^L Y_l$ satisfies
$|Y-\E[P_L]| \leq 
\sum_{l=0}^L |\E[P_l-P_{l-1}] - Y_l| \leq \sum_{l=0}^L \epsilon_l.
$
As a result, the error in approximating $\E[P]$ satisfies the bound
\be \label{eq:error}
|Y-\E[P]| \leq
|\E[P_L]-\E[P]| + |\E[P_L] - Y| \le |\E[P_L]-\E[P]| + 
\sum_{l=0}^L \epsilon_l
\ee
with probability at least $(1-\delta)^{L+1}$. 
The total cost equals 
$C = \sum_{l=0}^L C_lN_l$.

Choose 
\be \label{eq:choice of L}
L = \left\lceil \frac{\log(2 \epsilon^{-1})}{\alpha}
\right\rceil
\ee
so that 
$
2^{-\alpha L} \leq \epsilon/2.
$
This ensures the first error term
of \eq{error} is bounded by $\epsilon/2$.
As for the second term, we choose different $N_l$ based on the values $\epsilon_l,\hat{\beta},\gamma$.
Choosing $\delta = 1/(100(L+1))$, then we can make sure that the  success probability is at least
$(1-\delta)^{L+1} \geq e^{-1/100}>0.99$. 

We now split into cases to analyse and optimise the complexity of this approach. In the following analysis, for notational convenience we just write
$N_l= \lceil 2^{-\hat{\beta} l}/\epsilon_l\rceil
$. 
But in the end, the
cost should be multiplied by 
$ (\log 1/\delta)(\log 1/\epsilon)^{3/2} (\log\log 1/\epsilon) $ for all cases.

(a). If $\hat{\beta}>\gamma$, 
then choose $\epsilon_l= \frac{\epsilon}{2}(1-2^{-(\hat{\beta}-\gamma)/2} ) 2^{-(\hat{\beta}-\gamma)l/2}$. The second error term of \eq{error} is bounded by 
\be
\frac{\epsilon}{2}(1-2^{-(\hat{\beta}-\gamma)/2} )
\sum_{l=0}^L 2^{-(\hat{\beta}-\gamma)l/2} 
= \frac{\epsilon}{2}(1-2^{-(\hat{\beta}-\gamma)/2} )
\frac{1-2^{-(\hat{\beta}-\gamma)(L+1)/2}}{1-2^{-(\hat{\beta}-\gamma)/2}} 
< \frac{\epsilon}{2}.
\ee
Also we have
\be
N_l = 2
\epsilon^{-1} (1-2^{-(\hat{\beta}-\gamma)/2} )^{-1} 2^{-(\hat{\beta}+\gamma)l/2} + 1,
\ee
where the ``+1" term is caused by the ceiling function.
The cost is bounded by
\beas
\sum_{l=0}^L N_lC_l 
&=& O\left(\sum_{l=0}^L \left( \epsilon^{-1} (1-2^{-(\hat{\beta}-\gamma)/2} )^{-1} 2^{-(\hat{\beta}+\gamma)l/2} + 1\right) 2^{\gamma l} \right)\\
&=& O\left(\epsilon^{-1}(1-2^{-(\hat{\beta}-\gamma)/2} )^{-1} \sum_{l=0}^L 2^{-({\hat{\beta}-\gamma})l/2}
+ \sum_{l=0}^L2^{\gamma l} \right)   \\
&=& O\left( \epsilon^{-1}(1-2^{-(\hat{\beta}-\gamma)/2} )^{-1} 
\frac{1-2^{-(\hat{\beta}-\gamma)(L+1)/2} }{1-2^{-(\hat{\beta}-\gamma)/2}}
+ \epsilon^{-\gamma/\alpha} \right) \\
&=& O(\epsilon^{-1}),
\eeas
In the above, we used $\sum_{l=0}^L 2^{\gamma l} = O(2^{\gamma L}) = O(\epsilon^{-\gamma/\alpha})$ by our choice \eq{choice of L} and the fact that $\alpha\ge\gamma$.

(b). If $\hat{\beta}=\gamma$, then
set $\epsilon_l= \epsilon/(2(L+1))$.
This ensures that the error of the second term of
\eq{error} is bounded by $\epsilon/2$.
Thus $N_l= 2(L+1)\epsilon^{-1}  2^{-\hat{\beta} l} + 1 $. 
The cost is bounded by
\beas
\sum_{l=0}^L N_lC_l 
&=& O\left( \sum_{l=0}^L \left( (L+1) \epsilon^{-1}  2^{-\hat{\beta} l} + 1\right) 2^{\gamma l} \right)  \\
&=&  O\left((L+1)^2  \epsilon^{-1} 
+ \sum_{l=0}^L 2^{\gamma l}  \right)\\
&=& O\Bigl(\epsilon^{-1} (\log 1/\epsilon)^2 \Bigr) .
\eeas

(c). If $\hat{\beta}<\gamma$, then we set $\epsilon_l = \frac{\epsilon}{2} 2^{-(\gamma-\hat{\beta})L/2}
(1-2^{-(\gamma-\hat{\beta})/2}) 2^{(\gamma-\hat{\beta})l/2}$.
So the second error term of \eq{error} is bounded by 
\beas
\frac{\epsilon}{2} 2^{-(\gamma-\hat{\beta})L/2}
(1-2^{-(\gamma-\hat{\beta})/2}) \sum_{l=0}^L 2^{(\gamma-\hat{\beta})l/2}
&=& \frac{\epsilon}{2} 2^{-(\gamma-\hat{\beta})L/2}
(1-2^{-(\gamma-\hat{\beta})/2}) 
\frac{2^{(\gamma-\hat{\beta})(L+1)/2} - 1}{
2^{(\gamma-\hat{\beta})/2}-1} \\
&=& \frac{\epsilon}{2} (1-2^{-(\gamma-\hat{\beta})/2})
\frac{ 2^{(\gamma-\hat{\beta})/2} -  2^{-(\gamma-\hat{\beta})L/2}}{2^{(\gamma-\hat{\beta})/2}-1} \\
&<& \frac{\epsilon}{2} .
\eeas
Moreover, we have
\[
N_l = 2
\epsilon^{-1} 2^{(\gamma-\hat{\beta})L/2} (1-2^{-(\gamma-\hat{\beta})/2})^{-1} 2^{-(\hat{\beta}+\gamma)l/2} + 1
.
\]
The cost is bounded by
\beas
\sum_{l=0}^L N_lC_l 
&=& O\left(\sum_{l=0}^L ( \epsilon^{-1} 2^{(\gamma-\hat{\beta})L/2} (1-2^{-(\gamma-\hat{\beta})/2})^{-1} 2^{-(\hat{\beta}+\gamma)l/2} + 1) 2^{\gamma l} \right)\\
&=& O\left( \epsilon^{-1} 2^{(\gamma-\hat{\beta})L/2} (1-2^{-(\gamma-\hat{\beta})/2})^{-1} \sum_{l=0}^L 2^{(\gamma-\hat{\beta})l/2} +   \sum_{l=0}^L2^{\gamma l} \right) \\
&=& O\left( \epsilon^{-1} 2^{(\gamma-\hat{\beta})L/2} (1-2^{-(\gamma-\hat{\beta})/2})^{-1}
\frac{2^{(\gamma-\hat{\beta})(L+1)/2}-1}{2^{(\gamma-\hat{\beta})/2}-1} +  
\epsilon^{-\gamma/\alpha} \right)  \\
&=& O(\epsilon^{-1-(\gamma-\hat{\beta})/\alpha}) .
\eeas
In the above, we used equation \eq{choice of L} and
\[
2^{(\gamma-\hat{\beta})L} 
<2^{(\gamma-\hat{\beta})(\frac{\log(2 \epsilon^{-1})}{\alpha}+1)}
=2^{(\gamma-\hat{\beta})} 
2^{(\gamma-\hat{\beta})/\alpha}
\epsilon^{-(\gamma-\hat{\beta})/\alpha}.
\]

Since $L \approx \alpha^{-1} \log(2/\epsilon)$ and $\delta = 1/(100(L+1))$, we have
$\log(1/\delta)
=\log\log (2/\epsilon) + \log (100/\alpha) = O(\log\log 1/\epsilon)$.
Each estimation of the cost
should be multiplied by \\
$O((\log 1/\epsilon)^{3/2} (\log\log 1/\epsilon)^2)$.
\end{proof}

\section{Quantum-accelerated MLMC for solving SDEs}
\label{sec:SDE}

Let us discuss how to apply MLMC to solve \prb{problem} with stochastic differential equation~\eq{SDE}. 

\subsection{Preliminary}

Throughout the paper we make the following assumptions on the coefficients 
of the SDE and the payoff function. 
\begin{assumption}\label{ass:A1}
    We assume $\mu$ and $\sigma$ are globally Lipschitz continuous, \emph{i.e.}, 
    there exists a constant $L$ such that 
    \be \label{eq:A1}
    |\mu(t,x)-\mu(s,y)| \leq L(|t-s|+|x-y|),\quad  |\sigma(t,x)-\sigma(s,y)| \leq L(|t-s|+|x-y|)
    \ee
    hold for all $s, t \in [0,T], x,y \in \mathbb{R}$. We further assume the initial value $X_0$ satisfies $\E[X_0^m] \leq C_m$ for 
    constants $C_m \geq 0$. 
\end{assumption}
We remark that \ass{A1} implies at most linear growth of 
$\mu$ and $\sigma$, and there exists a unique strong solution of 
SDE~\eq{SDE}~\cite{KP13}. 

We say a numerical approximation $\widehat{X}_k$ with time step size $h=T/n$ is of strong order $r$, if for any $m \geq 1$, there exists a constant $C_m$ such that 
\begin{equation}\label{eq:def_strong_r}
    \mathbb{E}\left(\sup_{0\leq kh \leq T} |\widehat{X}_k-X_{kh}|^m\right) \leq C_mh^{rm}.
\end{equation}
One class of general high order schemes is the Taylor-It\^{o} scheme of the general form~\cite{KP13}
\begin{equation}
    \widehat{X}_{k+1} = \sum_{\alpha \in \mathcal{A}_{m}} f_{\alpha}(kh,\widehat{X}_k)I_{\alpha}
\end{equation}
where $f_{\alpha}$'s are the coefficient functions (depending on $\mu$ and $\sigma$) and $I_{\alpha}$ are 
multiple It\^{o} integrals over the time interval $[kh,(k+1)h]$. 
For instance, we may consider the Euler-Maruyama scheme (of strong order $1/2$)
\begin{equation}
\widehat{X}_{k+1} = \widehat{X}_k + \mu(\widehat{X}_k,t)h + \sigma(\widehat{X}_k,t) \Delta W_k,
\label{eq:Euler}
\end{equation}
for $k\in\rangez{n}$, or the Milstein scheme (of strong order $1$)
\begin{equation}
\widehat{X}_{k+1} = \widehat{X}_k + \mu(\widehat{X}_k,t)h + \sigma(\widehat{X}_k,t) \Delta W_k + \frac{1}{2}\sigma(\widehat{X}_k,t)\partial_X \sigma(\widehat{X}_k
,t)((\Delta W_k)^2-h),
\label{eq:Milstein}
\end{equation}
for $k\in\rangez{n}$, where $\Delta W_k$ are i.i.d. normal random variables with expected value zero and variance $h$.
We remark that there exists another kind of general high order schemes called Taylor-Stratonovich schemes ~\cite{KP13}, which is easier to implement. We will discuss it in \app{numerical results}.

\begin{assumption}\label{ass:A2}
    The coefficient functions $f_{\alpha}$ are globally Lipschitz 
    continuous with respect to $x$.
\end{assumption}

There exist a Taylor-It\^{o} scheme and a Taylor-Stratonovich scheme satisfying \ass{A2}, which can achieve strong order $r = k/2$ for all $k \geq 1$. We refer to~\cite[Section 10]{KP13} for more details. 

We are also given an assumption for the final payoffs:

\begin{assumption}\label{ass:A3}
We assume the payoff function
    \be \label{eq:P1}
    \P=\P(X_T)
    \ee
is piecewise Lipschitz continuous, \emph{i.e.}, there exist constants $-\infty = l_0<l_1<\cdots<l_q<l_{q+1}= +\infty$ and $L > 0$, such that 
    \be 
    |\P(x)-\P(y)| \leq L|x-y|, \qquad \forall x,y\in (l_j,l_{j+1}).
    \ee
\end{assumption}

In the reminder of this paper, we consider the SDEs, stochastic schemes, and payoffs satisfying \ass{A1}, \ass{A2}, \ass{A3}, respectively.
We remark that our results also hold true for high-dimensional systems of SDEs, given that the payoff function is ``piecewise Lipschitz continuous'' in some sense (\emph{e.g.} all the discontinuous points are jump discontinuous points and form several separable hyperplanes). 
For technical simplicity, we will only focus on the analysis of SDE in one dimension in this section. 

\subsection{Method and theory}

To solve a SDE problem, we apply the standard multilevel Monte Carlo method and regard the Taylor-It\^{o} scheme as the discretization subroutine \cite{Gil08, Gil15}.

At the high level, we estimate the discretized path $\widehat{X}_k$ ($k\in\rangez{n}$) for different numbers of iterations $n$, and perform the quantum oracle
\begin{equation}
U_P(|x\rangle|0\rangle) = |x\rangle|\P(x)\rangle
\label{eq:oracle}
\end{equation}
to evaluate $\P(x)$ for any $x$. 
Setting $n_l=2^l$ for $l=1,\ldots,L$, we apply the quantum-accelerated multilevel Monte Carlo
\begin{equation}
\E[\P(\widehat{X}_{n_L})] = \E[\P(\widehat{X}_{n_1})] + \sum_{l=1}^L[\E[\P(\widehat{X}_{n_l})-\P(\widehat{X}_{n_{l-1}})]]
\label{eq:multilevel}
\end{equation}
to estimate $\E[\P(X_T)]$. At the lower level, we divide $[0,T]$ by a uniform partition $0=t_0<t_1<\ldots<t_{n_l}=T$ with $h=T/n_l=T/2^l$ on the $l$-level discretization of \eq{SDE}, and perform stochastic numerical schemes to approximate $X_T$ by $\widehat{X}_{n_l}$. 

To estimate the complexity of QA-MLMC, we need to figure out the 
parameters $\alpha,\beta,\gamma$ in \thm{QMLMC}.

\begin{proposition}
\label{prop:alpha_beta_gamma_general}
Under \ass{A1}, \ass{A2} and \ass{A3}, for QA-MLMC with a numerical scheme of strong order $r$, we have $\alpha = r-o(1)$, $\beta = r-o(1)$, and $\gamma = 1$. Here $o(1)$ refers to an arbitrarily small real positive number. 
Furthermore, if the payoff function $\P$ is globally Lipschitz continuous, then the estimates on the parameters can be improved to 
$\alpha = r, \beta = 2r, \gamma = 1$. 
\end{proposition}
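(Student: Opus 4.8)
The plan is to identify the three MLMC convergence parameters by tracking how the strong order $r$ of the scheme propagates through (i) the weak bias $|\E[P_l-P]|$, (ii) the variance $V_l$ of the level increment, and (iii) the per-sample cost $C_l$. The cost estimate is immediate: at level $l$ we have $n_l=2^l$ timesteps, each costing $O(1)$ to advance under \ass{A2}, so $C_l=O(2^l)$ and $\gamma=1$. The bias and variance are where the two cases (Lipschitz versus piecewise Lipschitz payoff) diverge, so I would handle the globally Lipschitz case first as a warm-up and then explain the loss of $o(1)$ in the discontinuous case.

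For a globally Lipschitz $\P$, I would argue as follows. By \eq{def_strong_r}, the scheme satisfies $\E|\widehat X_{n_l}-X_T|^m\le C_m (T/2^l)^{rm}$ for every $m\ge 1$. Then $|\E[P_l-P]|=|\E[\P(\widehat X_{n_l})-\P(X_T)]|\le L\,\E|\widehat X_{n_l}-X_T|=O(2^{-rl})$, giving $\alpha=r$. For the variance, write $P_l-P_{l-1}=(\P(\widehat X_{n_l})-\P(X_T))-(\P(\widehat X_{n_{l-1}})-\P(X_T))$; since $\V[P_l-P_{l-1}]\le \E[(P_l-P_{l-1})^2]\le 2\E[(\P(\widehat X_{n_l})-\P(X_T))^2]+2\E[(\P(\widehat X_{n_{l-1}})-\P(X_T))^2]\le 2L^2(\E|\widehat X_{n_l}-X_T|^2+\E|\widehat X_{n_{l-1}}-X_T|^2)=O(2^{-2rl})$, we get $\beta=2r$, i.e.\ $\hat\beta=r$. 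These match the stated improved values, and the admissibility condition $\alpha\ge\min(\hat\beta,\gamma)$ reads $r\ge\min(r,1)$, which always holds.

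For the piecewise Lipschitz case the Lipschitz bound fails near the finitely many break points $l_1,\dots,l_q$, so one must estimate the probability that $\widehat X_{n_l}$ or $X_T$ lands within a small window of a discontinuity. The strategy: split $\R$ into the ``good'' region, at distance $>\rho$ from every $l_j$, and the ``bad'' region within $\rho$. On the good region, if both $\widehat X_{n_l}$ and $X_T$ are good and the scheme error is smaller than $\rho$, they lie in the same Lipschitz piece and the earlier bound applies; on the bad region one uses that the payoff is bounded (piecewise Lipschitz on a compact-ish set plus the moment bounds on $X$ and $\widehat X$) together with a bound on $\Pr[\text{dist}(X_T,\{l_j\})\le\rho]=O(\rho)$ (from a density/regularity assumption on the law of $X_T$, which holds for nondegenerate SDEs) and a Markov-type bound $\Pr[|\widehat X_{n_l}-X_T|>\rho]\le \rho^{-m}\E|\widehat X_{n_l}-X_T|^m = O(\rho^{-m}2^{-rml})$. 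Optimising $\rho$ as a power of $2^{-rl}$ — for instance $\rho\sim 2^{-rl(1-1/m)}$ and then letting $m\to\infty$ — yields $|\E[P_l-P]|=O(2^{-(r-o(1))l})$ and similarly $V_l=O(2^{-(2r-o(1))l})$ once one is slightly careful to keep the exponent as $\beta=r-o(1)$ consistent with what the proposition claims (the cruder bound only gives $\beta=r-o(1)$ rather than $2r-o(1)$ because in the bad-region contribution the payoff difference is $O(1)$ rather than $O(\rho)$, so the variance scales like the probability, not its square). Finally I would check $\alpha=r-o(1)\ge\min(r-o(1),1)$, which again holds.

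The main obstacle is the bad-region estimate in the discontinuous case: one needs a quantitative anti-concentration bound for the law of $X_T$ near the jump points and a matching tail bound for the discretisation error, and then a clean optimisation over the window width $\rho$ and moment order $m$ to extract the $-o(1)$ in the exponent. The cost and bias arguments, and the entire globally-Lipschitz case, are routine; the care lies in making the ``$o(1)$'' uniform and in correctly attributing which terms scale like $\rho$ versus like the hitting probability, since that is exactly what distinguishes $\beta=r-o(1)$ (general) from $\beta=2r$ (Lipschitz).
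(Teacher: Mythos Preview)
Your proposal is correct and follows essentially the same route as the paper: the paper conditions on the number $M$ of discontinuities of $\P$ crossed by the segment $[X_T,\widehat X_n]$ (which is morally your good/bad split with window $\rho$), bounds $\Pr[M\ge 1]$ by combining an $O(h^{\alpha})$ density estimate for $X_T$ near the jump points with a high-moment Markov bound $\Pr[|\widehat X_n-X_T|\ge h^\alpha]\le h^{m(r-\alpha)}$, and optimizes over $m$ to get $\alpha=\beta=rm/(m+1)=r-o(1)$; the globally Lipschitz case and $\gamma=1$ are handled exactly as you describe. The only cosmetic difference is that on the bad event the paper bounds the payoff difference by $qJ+L|\widehat X_n-X_T|$ (finite jump sizes plus the Lipschitz remainder) rather than invoking boundedness of $\P$, which sidesteps the compactness/moment issue you flagged.
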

\begin{proof}
    The estimate on $\gamma$ comes from the construction of the algorithm. 
    At the level $l$, we use a time step size $T/2^l$ to discretize the path, and compute the expectation at the final time. 
    The dominant computational cost comes from simulating the path, which requires $2^l$ time steps. 
    At the level $l+1$, we halve the time step size, and the number of the time steps is doubled. 
    Since we are using the same numerical scheme at each level, the computational cost of propagating a single step remains the same, thus the total computational cost at the level $l+1$ is doubled. 
    This indicates that $\gamma = 1$. 
    We now focus on the estimate of $\alpha$ and $\beta$. 
    We first consider the general payoff function satisfying \ass{A3}. 
    
    The proof is inspired by \cite{giles2009analysing}. 
    For a sample of $X_T$ and a sample of numerical approximation 
    $\widehat{X}_n$ with time step size $h$, we define a linear path $\Lambda(\lambda) = \lambda \widehat{X}_n + (1-\lambda)X_T$ for $0 \leq \lambda \leq 1$. 
    Let $M(\widehat{X}_n,X_T) \in \{0,1,\cdots,q\}$ be the number of 
    the discontinuity points along the path. 
    Note that $\P$ is at least piecewise Lipschitz continuous, all the discontinuity points are of jump discontinuity. We hereby define the maximum size of the jump to be $J$, \emph{i.e.} 
    $$J = \max_{1\leq j\leq q}|\lim_{x\rightarrow l_j+}\P(x) - \lim_{x\rightarrow l_j-}\P(x)|.$$
    Then 
    \begin{align*}
        & \quad \mathbb{E}|\P(\widehat{X}_n)-\P(X_T)| \\
        &= \mathbb{E}\left[\mathbb{E}[|\P(\widehat{X}_n)-\P(X_T)|\Big|M(\widehat{X}_n,X_T)]\right]  \\
        &= \mathbb{E}[|\P(\widehat{X}_n)-\P(X_T)|\Big|M(\widehat{X}_n,X_T) = 0]\times \mathbb{P}(M(\widehat{X}_n,X_T) = 0) \\
        & \quad + \sum_{j=1}^q \mathbb{E}[|\P(\widehat{X}_n)-\P(X_T)|\Big|M(\widehat{X}_n,X_T) = j]\times \mathbb{P}(M(\widehat{X}_n,X_T) = j) \\
        & \leq L\mathbb{E}[|\widehat{X}_n-X_T|\Big|M(\widehat{X}_n,X_T) = 0]\times \mathbb{P}(M(\widehat{X}_n,X_T) = 0) \\
        & \quad + \sum_{j=1}^q\left[ qJ +  L\mathbb{E}[|\widehat{X}_n-X_T|\Big|M(\widehat{X}_n,X_T) = j]\right]\times \mathbb{P}(M(\widehat{X}_n,X_T) = j) \\
        & = L\mathbb{E}|\widehat{X}_n-X_T| + qJ\mathbb{P}(M(\widehat{X}_n,X_T) \geq 1). 
    \end{align*}
    The first part is bounded by $O(h^r)$. 
    The second part can be bounded as, for a large integer $m$, 
    \begin{align*}
        & \quad \mathbb{P}(M(\widehat{X}_n,X_T) \geq 1) \\
        &\leq \mathbb{P}(\min_{1\leq j\leq q} |X_T-l_j| \leq h^{\alpha}) + 
        \mathbb{P}(|\widehat{X}_n-X_T|\geq h^{\alpha}) \\
        & \leq O(h^{\alpha}) + \frac{\mathbb{E}|\widehat{X}_n-X_T|^m}{h^{\alpha m}} \\
        & \leq O(h^{\alpha}) + O(h^{m(r-\alpha)}).
    \end{align*}
    In the second inequality,
    the first term $O(h^{\alpha})$ 
    follows from that $X_T$ is a continuously-distributed random variable with a bounded density due to the
    Picard iteration used to establish existence and uniqueness \cite{mao2007stochastic} 
    under the global Lipschitz continuous assumption. 
    The second term follows from the Markov inequality.
    Therefore we have 
    \begin{equation}
        \mathbb{E}|\P(\widehat{X}_n)-\P(X_T)| \leq O(h^r) + O(h^{\alpha}) + O(h^{m(r-\alpha)})
    \end{equation}
    holds for all $m$, which implies $\alpha = rm/(m+1) = r - o(1)$. 
    
    The estimate of $\beta$ is similar to that of $\alpha$. 
    Note that $V_l = \V[\P_l-\P_{l-1}]$, so
    $V_l \leq (\sqrt{\V[\P-\P_l]} + \sqrt{\V[\P-\P_{l-1}]})^2$.
    It suffices to bound $\E[|\P-\P_l|^2]$, which is larger than $\V[\P-\P_l]$. 
    Using the same technique of estimating $\alpha$, 
    \begin{align*}
        & \quad \mathbb{E}|\P(\widehat{X}_n)-\P(X_T)|^2 \\
        &= \mathbb{E}[|\P(\widehat{X}_n)-\P(X_T)|^2\Big|M(\widehat{X}_n,X_T) = 0]\times \mathbb{P}(M(\widehat{X}_n,X_T) = 0) \\
        & \quad + \sum_{j=1}^q \mathbb{E}[|\P(\widehat{X}_n)-\P(X_T)|^2\Big|M(\widehat{X}_n,X_T) = j]\times \mathbb{P}(M(\widehat{X}_n,X_T) = j) \\
        & \leq L^2\mathbb{E}[|\widehat{X}_n-X_T|^2\Big|M(\widehat{X}_n,X_T) = 0]\times \mathbb{P}(M(\widehat{X}_n,X_T) = 0) \\
        & \quad + \sum_{j=1}^q\left[ 2q^2J^2 +  2L^2\mathbb{E}[|\widehat{X}_n-X_T|^2\Big|M(\widehat{X}_n,X_T) = j]\right]\times \mathbb{P}(M(\widehat{X}_n,X_T) = j) \\
        & \leq 2L^2\mathbb{E}|\widehat{X}_n-X_T|^2 + 2q^2J^2\mathbb{P}(M(\widehat{X}_n,X_T) \geq 1). 
    \end{align*}
    The first part is bounded by $O(h^{2r})$. 
    The second part is bounded by $O(h^{\beta})+O(h^{m(r-\beta)})$ for an arbitrarily large integer $m$, for the same reason in estimating $\alpha$. 
    It follows that 
    \begin{equation}
        \mathbb{E}|\P(\widehat{X}_n)-\P(X_T)|^2 \leq O(h^{2r}) + O(h^{\beta})+O(h^{m(r-\beta)}), 
    \end{equation}
    which implies $\beta = rm/(m+1) = r - o(1)$. 
    
    Finally, if further the payoff function $\P$ is globally Lipschitz 
    everywhere, then we have \\
    $\mathbb{P}(M(\widehat{X}_n,X_T) \geq 1) = 0$. 
    It is straightforward to conclude from the previous analysis 
    that $\alpha = r, \beta = 2r$. 
\end{proof}
    We remark that the estimates of $\alpha$ and $\beta$ are 
    possibly not sharp for some of the Taylor-It\^{o} schemes. 
    For example, if the payoff function $\P$ is a linear function, 
    then $\alpha$ will be exactly the weak convergence order, which is, 
    for many numerical schemes, larger than the strong convergence order. 
    Nevertheless, our \prop{alpha_beta_gamma_general} holds 
    true for more general payoff functions and general high order schemes, 
    and it suffices for QA-MLMC to achieve speedup over classical 
    algorithms. In \app{numerical results}, 
    we perform careful numerical tests of the values of $\alpha$ and $\beta$ under different smoothness assumptions. 
    We observe that our estimate for $\beta$ is sharp for both Lipschitz continuous payoff function and discontinuous payoff function, and our estimate for $\alpha$ is sharp for discontinuous payoff functions, while larger $\alpha$ is observed for smoother payoff functions.  
    
    \smallskip 
    
    \prop{alpha_beta_gamma_general} is  sufficient to determine the complexity of both classical and quantum-accelerated MLMC for solving SDEs. 
    We start with the classical case. A discussion about certain numerical schemes for typical payoffs has been proposed in Section 5 of \cite{Gil15}. For high-order schemes and general payoffs, we state the result as follows.
    
    \begin{proposition}\label{prop:MLMC_SDE}
        Consider \prb{problem} for the 
        stochastic differential equation \eq{SDE} under \ass{A1}, \ass{A2} and \ass{A3}. Then MLMC 
        with a numerical scheme for SDE of strong order $r$ estimates 
        $\mathbb{E}[\P]$ up to additive error $\epsilon$ with probability 
        at least 0.99 in cost 
         \be
         \begin{cases}
         O(\epsilon^{-2}), & r>1, \\
         O(\epsilon^{-1-1/r-o(1)}), & r\le 1.
         \end{cases}
         \ee
         Furthermore, if the payoff function $\P$ is globally Lipschitz 
         continuous everywhere, then the cost can be improved to 
         \be
         \begin{cases}
         O(\epsilon^{-2}), & r>1/2, \\
         O(\epsilon^{-2}(\log\epsilon)^2), & r=1/2, \\
         O(\epsilon^{-1/r}), & r<1/2.
         \end{cases}
         \ee
    \end{proposition}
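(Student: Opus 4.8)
The plan is to read the complexity off from the two results already established: \prop{alpha_beta_gamma_general}, which supplies the MLMC parameters $\alpha,\beta,\gamma$, and \lem{MLMC}, which converts those parameters into a cost bound. The only extra ingredient is the passage from the mean-squared-error guarantee of \lem{MLMC} to the additive-error-with-probability-$0.99$ guarantee claimed here, which is the elementary equivalence recorded in \app{errors}.

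First I would set up the estimator exactly as in \eq{multilevel}: take $P=\P(X_T)$, $P_l=\P(\widehat{X}_{n_l})$ with $n_l=2^l$, and let $Y_l$ be the plain Monte Carlo average approximating $\E[P_l-P_{l-1}]$, which is automatically unbiased, so the hypothesis $\E[Y_l]=\E[P_l-P_{l-1}]$ of \lem{MLMC} holds. Under \ass{A1}, \ass{A2}, \ass{A3}, \prop{alpha_beta_gamma_general} gives the remaining hypotheses $|\E[P_l-P]|=O(2^{-\alpha l})$, $V_l=O(2^{-\beta l})$, $C_l=O(2^{\gamma l})$ with $\alpha=r-o(1)$, $\beta=r-o(1)$, $\gamma=1$ in the general case, and $\alpha=r$, $\beta=2r$, $\gamma=1$ when $\P$ is globally Lipschitz. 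A quick check confirms the structural condition $\alpha\ge\tfrac12\min(\beta,\gamma)$ in both cases: in the general case, if $r\le 1$ then $\min(\beta,\gamma)=r-o(1)$ and $\alpha=r-o(1)\ge\tfrac12(r-o(1))$, while if $r>1$ then $\min=1$ and $\alpha=r-o(1)\ge\tfrac12$; the Lipschitz case is identical with $\beta=2r$ in place of $\beta$.

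Next I would apply the trichotomy of \lem{MLMC}. In the general case ($\gamma=1$, $\beta=r-o(1)$): for $r>1$ one fixes the $o(1)$ small enough that $\beta>\gamma$, giving cost $O(\epsilon^{-2})$; for $r\le 1$ one is in the regime $\beta\le\gamma$ (with $\beta=\gamma$ ruled out for strictly positive $o(1)$), so the cost is $O(\epsilon^{-2-(\gamma-\beta)/\alpha})$, and since $(\gamma-\beta)/\alpha=(1-r+o(1))/(r-o(1))=1/r-1+o(1)$ this equals $O(\epsilon^{-1-1/r-o(1)})$. In the globally Lipschitz case ($\gamma=1$, $\beta=2r$): for $r>1/2$ we have $\beta>\gamma$ and cost $O(\epsilon^{-2})$; for $r=1/2$ we have $\beta=\gamma$ and cost $O(\epsilon^{-2}(\log\epsilon)^2)$; for $r<1/2$ we have $\beta<\gamma$ with $(\gamma-\beta)/\alpha=(1-2r)/r=1/r-2$, hence cost $O(\epsilon^{-2-(1/r-2)})=O(\epsilon^{-1/r})$. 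This reproduces both displays in the statement.

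Finally, \lem{MLMC} controls $\E(Y-\E[P])^2\le\epsilon^2$, whereas the proposition is phrased in terms of the additive error holding with probability at least $0.99$; by Markov's inequality $\mathbb{P}(|Y-\E[P]|\ge 10\epsilon)\le \E(Y-\E[P])^2/(100\epsilon^2)\le 1/100$, so running the argument with $\epsilon$ replaced by $\epsilon/10$ yields the additive bound at the same asymptotic cost, as in \app{errors}. The proof is thus essentially a bookkeeping exercise; the only place requiring a little care is keeping the $o(1)$ exponents consistent — in particular ensuring the $o(1)$'s in $\alpha$ and $\beta$ can be taken simultaneously small, and checking that the boundary cases $r=1$ (general payoff) and $r=1/2$ (Lipschitz payoff) fall into the intended regime of \lem{MLMC} — but none of this is substantive once \prop{alpha_beta_gamma_general} is available.
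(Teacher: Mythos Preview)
Your proposal is correct and follows exactly the same approach as the paper: the paper's own proof is a single sentence stating that the result is a straightforward consequence of \lem{MLMC} and \prop{alpha_beta_gamma_general}. Your elaboration of the case analysis and the passage from mean-squared to additive error via \app{errors} simply spells out what the paper leaves implicit.
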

    \begin{proof}
    This is a straightforward result from \lem{MLMC} and \prop{alpha_beta_gamma_general}. 
    \end{proof}
    
    \prop{MLMC_SDE} tells that, for general SDE \eq{SDE} and payoff function 
    satisfying \ass{A1}, \ass{A2}, and \ass{A3}, it suffices for MLMC 
    to use a numerical scheme of strong order $r>1$ 
    to obtain the complexity $O(\epsilon^{-2})$ , and using a numerical scheme of strong order $1$, \emph{e.g.}, Milstein scheme, will lead to the complexity $O(\epsilon^{-2-o(1)})$,
    \emph{i.e.} almost quadratic dependence on $1/\epsilon$. 
    We note that if the payoff function is globally Lipschitz continuous 
    everywhere, 
    then it suffices to use a numerical scheme of strong order $1/2$, \emph{e.g.}, Euler-Maruyama scheme, to achieve the complexity $\widetilde{O}(\epsilon^{-2})$.
    
    As a counterpart, we are ready to state our main theorem regarding the complexity 
    of QA-MLMC for solving SDE. 
    
    \begin{theorem}\label{thm:QMLMC_SDE}
        Consider \prb{problem} for the 
        stochastic differential equation \eq{SDE} under \ass{A1}, \ass{A2} and \ass{A3}. Then QA-MLMC 
        with a numerical scheme for SDE of strong order $r$ estimates 
        $\mathbb{E}[\P]$ up to additive error $\epsilon$ with probability 
        at least 0.99 in cost 
        \be
        \begin{cases} \vspace{.1cm}
          O\Bigl(\epsilon^{-1} (\log 1/\epsilon)^{3/2} (\log\log 1/\epsilon)^2\Bigr), & r  > 2, \\
          O\Bigl(\epsilon^{-1/2-1/r-o(1)}(\log 1/\epsilon)^{3/2} (\log\log 1/\epsilon)^2\Bigr), & r \leq 2.
         \end{cases}
         \ee
         Furthermore, if the payoff function $\P$ is globally Lipschitz 
         continuous everywhere, then the cost can be improved to 
         \be
         \begin{cases} \vspace{.1cm}
          O\Bigl(\epsilon^{-1} (\log 1/\epsilon)^{3/2} (\log\log 1/\epsilon)^2\Bigr), & r > 1, \\ \vspace{.1cm}
          O\Bigl(\epsilon^{-1}(\log 1/\epsilon)^{7/2} (\log\log 1/\epsilon)^2\Bigr)
          , & r = 1, \\
          O\Bigl(\epsilon^{-1/r} (\log 1/\epsilon)^{3/2} (\log\log 1/\epsilon)^2\Bigr), & r < 1.
          \end{cases}
          \ee
    \end{theorem}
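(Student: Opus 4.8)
The plan is to derive \thm{QMLMC_SDE} as a direct corollary of the general quantum-accelerated multilevel bound \thm{QMLMC}, instantiated with the decay exponents supplied by \prop{alpha_beta_gamma_general}. All the genuinely analytic content — the weak error $|\E[P_l-P]| = O(2^{-\alpha l})$, the variance decay $V_l = O(2^{-\beta l})$, and the per-level cost $C_l = O(2^{\gamma l})$ for a Taylor--It\^{o} scheme of strong order $r$ — has already been established, so what remains is to check the hypothesis of \thm{QMLMC}, substitute the exponents, and simplify, keeping track of the $o(1)$ terms.

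First I would record the parameters. For a payoff satisfying only \ass{A3}, \prop{alpha_beta_gamma_general} gives $\alpha = r - o(1)$, $\beta = r - o(1)$, hence $\hat\beta := \beta/2 = r/2 - o(1)$, and $\gamma = 1$; since one may take the same arbitrarily small exponent loss in $\alpha$ and $\beta$, one has $\alpha > \hat\beta \ge \min(\hat\beta,\gamma)$, so the hypothesis of \thm{QMLMC} is met for every $r>0$. The relevant regime of \thm{QMLMC} is then governed by the sign of $\hat\beta - \gamma = r/2 - 1 - o(1)$: for fixed $r>2$ the $o(1)$ is dominated, so $\hat\beta > \gamma$ and the first case of \thm{QMLMC} gives cost $O(\epsilon^{-1}(\log 1/\epsilon)^{3/2}(\log\log 1/\epsilon)^2)$; for $r \le 2$ the $o(1)$ keeps us in the third case, whose exponent is
\[
-1 - \frac{\gamma - \hat\beta}{\alpha} = -1 - \frac{1 - r/2 + o(1)}{r - o(1)} = -\frac12 - \frac1r - o(1),
\]
giving $O(\epsilon^{-1/2 - 1/r - o(1)}(\log 1/\epsilon)^{3/2}(\log\log 1/\epsilon)^2)$. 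Note that even at the threshold $r=2$ the perturbation keeps us strictly in the third case and the $\epsilon^{-o(1)}$ factor absorbs any polylogarithmic overhead, so no separate equality case needs to be written.

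For a globally Lipschitz payoff, \prop{alpha_beta_gamma_general} gives the sharp values $\alpha = r$, $\beta = 2r$ (so $\hat\beta = r$) and $\gamma = 1$, with $\alpha = \hat\beta \ge \min(\hat\beta,\gamma)$, so \thm{QMLMC} again applies. Now there is no $o(1)$, so the case split is exactly according to whether $r>1$, $r=1$, or $r<1$: for $r>1$ the first case gives $O(\epsilon^{-1}(\log 1/\epsilon)^{3/2}(\log\log 1/\epsilon)^2)$; for $r=1$ the equality case gives $O(\epsilon^{-1}(\log 1/\epsilon)^{7/2}(\log\log 1/\epsilon)^2)$; and for $r<1$ the third case, with $(\gamma-\hat\beta)/\alpha = (1-r)/r$, has exponent $-1-(1-r)/r = -1/r$, giving $O(\epsilon^{-1/r}(\log 1/\epsilon)^{3/2}(\log\log 1/\epsilon)^2)$. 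These reproduce the three lines of the second display.

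The argument is thus a mechanical substitution into \thm{QMLMC}, and I do not expect a real obstacle. The one point I would state carefully — the main subtlety, such as it is — is the handling of the $o(1)$ from \prop{alpha_beta_gamma_general}: for any fixed target strong order $r$ one must fix the arbitrarily small exponent loss small enough not to cross the thresholds $r=2$ (general payoff) or $r=1$ (globally Lipschitz payoff), so that the case split in \thm{QMLMC} is driven by $r$ itself and the polylogarithmic factor arising in the equality case of \thm{QMLMC} is harmlessly subsumed into $\epsilon^{-o(1)}$ in the general-payoff regime.
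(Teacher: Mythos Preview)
Your proposal is correct and matches the paper's own proof, which simply states that the result follows directly from \thm{QMLMC} and \prop{alpha_beta_gamma_general}. You have carried out exactly this substitution and case analysis, with appropriate care about the $o(1)$ terms and the threshold cases.
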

    \begin{proof}
    This is a straightforward result from \thm{QMLMC} and \prop{alpha_beta_gamma_general}. 
    \end{proof}
    
    \thm{QMLMC_SDE} tells that, for general SDE \eq{SDE} and payoff function 
    satisfying \ass{A1}, \ass{A2}, and \ass{A3}, it suffices for QA-MLMC 
    to use a numerical scheme of strong order $r>2$ 
    to obtain the complexity $\widetilde{O}(\epsilon^{-1})$, and using a numerical scheme of strong order $2$ will lead to the complexity $\widetilde{O}(\epsilon^{-1-o(1)})$, \emph{i.e.} almost linear dependence on $1/\epsilon$. 
    We note that if the payoff function is globally Lipschitz continuous 
    everywhere, 
    then it suffices to use a numerical scheme of strong order $1$, \emph{e.g.}, Milstein scheme, to achieve the complexity $\widetilde{O}(\epsilon^{-1})$. 
    
    Compared with the classical MLMC, the convergence order required to achieve possibly optimal complexity is higher. This is due to the tighter requirement on the parameters $\alpha,\beta$ and $\gamma$. Nevertheless, once the optimal complexity is reached, QA-MLMC achieves a quadratic speedup over the classical MLMC in terms of $\epsilon$. 
    
\subsection{Generalization to the entire path dependence case}    

So far we have confined ourselves to the case that the payoff function $\P$ only depends on the final value $X_T$ of the stochastic process. 
However, the payoff functions of some options of widely practical interest, including Asian call and put options, are in general dependent on the stochastic integral along the entire trajectory of $X_t$ for any $0 \leq t \leq T$. 
Furthermore, as being discussed later in \sec{Greeks}, one of the most efficient ways to estimate the sensitivity of the portfolio involves  calculating the expectation of a stochastic integral. 
Fortunately, QA-MLMC can be straightforwardly generalized to this situation, and we will elaborate it in this subsection. 

We first allow the payoff function $\P$ to be possibly dependent on the stochastic integral of the entire trajectory through the following assumption: 
\begin{assumption}\label{ass:A4}
    We assume the payoff function has the form 
    \begin{equation}\label{eq:P2}
    \P = \P\left(X_T, \int_0^T f(X_t)\d t, \int_0^T g(X_t)\d W_t\right),
    \end{equation}
    where $\P(X,y,z)$ is piecewise Lipschitz continuous with respect to $X,y,z$, and $f$ and $g$ are two globally Lipschitz continuous functions. 
\end{assumption}
Such a payoff function can be transformed to only depend on the final value of another stochastic process by extending the system to higher  dimension. 
Precisely, let us define $$Y_t = \int_0^t f(X_s)ds, \quad Z_t = \int_0^t g(X_s)\d W_s, $$
then $Y_t$ and $Z_t$, together with $X_t$, satisfy the system of stochastic differential equations 
\begin{equation}
    \begin{split}
        dX_t &= \mu(X_t,t)\d t + \sigma(X_t,t)\d W_t, \\
        dY_t &= f(X_t)\d t, \\
        dZ_t &= g(X_t)\d W_t. 
    \end{split}
\end{equation}
This can be formally written as a system of stochastic differential equations in higher dimension, \emph{i.e. } $$d\widetilde{X}_t = \widetilde{\mu}(\widetilde{X}_t,t)\d t + \widetilde{\sigma}(\widetilde{X}_t,t)d\widetilde{W}_t,$$ 
where $\widetilde{X}_t^{\top} = (X_t^{\top},Y_t^{\top},Z_t^{\top})$, $\widetilde{W}_t$ a Brownian motion in the dimension of $\widetilde{X}_t$, 
\begin{equation}
    \widetilde{\mu} = \left(\begin{array}{c}
         \mu(X_t,t) \\
         f(X_t) \\
         0 
    \end{array}\right), \quad 
    \widetilde{\sigma} = \left(\begin{array}{ccc}
        \sigma(X_t,t) & 0 & 0 \\
        0 & 0 & 0 \\
        g(X_t) & 0 & 0
    \end{array}\right).
\end{equation}
Define 
$$\widetilde{\P}(\widetilde{X}_t) =  \P\left(X_t, Y_t, Z_t\right),$$ 
then 
$$\widetilde{\P}(\widetilde{X}_T) =  \P\left(X_T, Y_T, Z_T\right) = \P\left(X_T, \int_0^T f(X_t)\d t, \int_0^T g(X_t)\d W_t\right), $$ 
and thus estimating the expectation of $\P$ is equivalent to solving \prb{problem} in terms of $\widetilde{X}_t$ and $\widetilde{\P}$. 
By \ass{A4}, we can check that $\widetilde{X}_t$ and $\widetilde{\P}$ satisfy the \ass{A1}, \ass{A2} and \ass{A3}. 
Therefore, according to \prop{MLMC_SDE} and \thm{QMLMC_SDE}, we have the following result. 
\begin{corollary}\label{cor:MLMC_SDE_integral}
        Consider the payoff function satisfying \ass{A4}. Then, further under \ass{A1} and \ass{A2}, MLMC 
        with a numerical scheme for SDE of strong order $r$ estimates 
        $\mathbb{E}[\P]$ up to additive error $\epsilon$ with probability 
        at least 0.99 in cost 
         \be
         \begin{cases}
         O(\epsilon^{-2}), & r>1, \\
         O(\epsilon^{-1-1/r-o(1)}), & r\le 1.
         \end{cases}
         \ee
         Furthermore, if the payoff function $\P$ is globally Lipschitz 
         continuous everywhere, then the cost can be improved to 
         \be
         \begin{cases}
         O(\epsilon^{-2}), & r>1/2, \\
         O(\epsilon^{-2}(\log\epsilon)^2), & r=1/2, \\
         O(\epsilon^{-1/r}), & r<1/2.
         \end{cases}
          \ee
\end{corollary}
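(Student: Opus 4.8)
The plan is to reduce the path-dependent problem to the final-value problem already settled in \prop{MLMC_SDE} (and \thm{QMLMC_SDE} for the quantum analogue), via exactly the dimension-lifting construction written out in the paragraph preceding the statement. Concretely, set $Y_t = \int_0^t f(X_s)\,\d s$ and $Z_t = \int_0^t g(X_s)\,\d W_s$, assemble $\widetilde X_t = (X_t^\top, Y_t^\top, Z_t^\top)^\top$, and note that $\widetilde X_t$ solves the (degenerate) system $\d\widetilde X_t = \widetilde\mu(\widetilde X_t,t)\,\d t + \widetilde\sigma(\widetilde X_t,t)\,\d\widetilde W_t$ with $\widetilde\mu,\widetilde\sigma$ as displayed. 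Then $\widetilde\P(\widetilde X_T) = \P(X_T,Y_T,Z_T)$ is precisely the original path-dependent payoff, so estimating $\E[\P]$ is the same as solving \prb{problem} for the pair $(\widetilde X_t,\widetilde\P)$. Once this identification is made, the costs are read off directly from \prop{MLMC_SDE}: the strong order $r$ of the scheme is preserved under the lift and the $\epsilon$-dependence is unaffected, since passing to a fixed higher dimension only changes constants.

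The bulk of the work is checking that $(\widetilde X_t,\widetilde\P)$ inherits \ass{A1}, \ass{A2}, \ass{A3}. For \ass{A1}: $\widetilde\mu$ and $\widetilde\sigma$ are obtained by stacking $\mu,\sigma$ together with $f,g$ and zeros, so global Lipschitz continuity of $\mu,\sigma,f,g$ yields global Lipschitz continuity of $\widetilde\mu,\widetilde\sigma$ (with a constant depending only on the individual ones and the dimension), and the moment bounds $\E[\widetilde X_0^m]\le C_m$ hold because $\widetilde X_0 = (X_0^\top,0,0)^\top$ and $X_0$ already satisfies them. For \ass{A2}: the Taylor--It\^o coefficient functions $f_\alpha$ for the lifted system are polynomial expressions in $\widetilde\mu,\widetilde\sigma$ and their spatial derivatives; because the $Y$- and $Z$-components of the drift and diffusion are affine in $X$ (of the simple form $f(X)$, $g(X)$, with zeros elsewhere), no products are introduced that would destroy global Lipschitz continuity, and one invokes \cite[Section 10]{KP13} to get a strong-order-$r$ Taylor--It\^o or Taylor--Stratonovich scheme for the lifted (degenerate) system satisfying \ass{A2}.

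The step I expect to be the main obstacle is \ass{A3} for $\widetilde\P$, together with making the analysis of \prop{alpha_beta_gamma_general} go through in the lifted dimension. By hypothesis $\P(X,y,z)$ is piecewise Lipschitz, so its discontinuity set is a finite union of hyperplanes $\{X=l_j\}$, $\{y=m_j\}$, $\{z=p_j\}$; this should be checked to fit the ``piecewise Lipschitz in higher dimensions'' framework flagged in the remark after \prop{alpha_beta_gamma_general}. The estimate there then repeats verbatim provided that, along the linear interpolation path between $\widehat{\widetilde X}_n$ and $\widetilde X_T$, the probability of crossing such a hyperplane is $O(h^{\alpha})$, which in turn requires each relevant marginal ($X_T$ and, for thresholds in $y$ or $z$, the marginals of $Y_T$ and $Z_T$) to have a bounded density near the finitely many thresholds. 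For $X_T$ this is the Picard-iteration argument already used; for $Y_T$ and $Z_T$ one argues similarly (note the lift places $Z_t$ in the range of the noise so its marginal is well behaved, while a degenerate direction such as a constant $f$ can only coincide with a threshold on a measure-zero set). The remaining ingredient, the bound on $\E|\widehat{\widetilde X}_n-\widetilde X_T|^m$, is the standard strong-convergence estimate for the lifted scheme. With \ass{A1}--\ass{A3} verified, \prop{alpha_beta_gamma_general} gives $\alpha=\beta=r-o(1)$, $\gamma=1$ (and $\alpha=r,\beta=2r,\gamma=1$ in the globally Lipschitz case), and substituting into \lem{MLMC} (resp. \thm{QMLMC}) produces exactly the stated costs.
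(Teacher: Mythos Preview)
Your proposal is correct and follows exactly the paper's approach: the paragraph preceding the corollary carries out the dimension-lifting $\widetilde X_t=(X_t,Y_t,Z_t)$, asserts that $\widetilde X_t$ and $\widetilde\P$ satisfy \ass{A1}--\ass{A3}, and then simply invokes \prop{MLMC_SDE} (resp.\ \thm{QMLMC_SDE}). You have in fact been more careful than the paper in spelling out the verification of the assumptions and flagging the higher-dimensional density issue for \prop{alpha_beta_gamma_general}; the paper leaves these as ``we can check that.''
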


\begin{corollary}\label{cor:QMLMC_SDE_integral}
        Consider the payoff function satisfying \ass{A4}. Then, further under \ass{A1} and \ass{A2}, QA-MLMC 
        with a numerical scheme for SDE of strong order $r$ estimates 
        $\mathbb{E}[\P]$ up to additive error $\epsilon$ with probability 
        at least 0.99 in cost 
        \be
        \begin{cases} \vspace{.1cm}
          O\Bigl(\epsilon^{-1} (\log 1/\epsilon)^{3/2} (\log\log 1/\epsilon)^2\Bigr), & r  > 2, \\
          O\Bigl(\epsilon^{-1/2-1/r-o(1)}(\log 1/\epsilon)^{3/2} (\log\log 1/\epsilon)^2\Bigr), & r \leq 2.
         \end{cases}
         \ee
         Furthermore, if the payoff function $\P$ is globally Lipschitz 
         continuous everywhere, then the cost can be improved to 
         \be
         \begin{cases} \vspace{.1cm}
          O\Bigl(\epsilon^{-1} (\log 1/\epsilon)^{3/2} (\log\log 1/\epsilon)^2\Bigr), & r > 1, \\ \vspace{.1cm}
          O\Bigl(\epsilon^{-1}(\log 1/\epsilon)^{7/2} (\log\log 1/\epsilon)^2\Bigr)
          , & r = 1, \\
          O\Bigl(\epsilon^{-1/r} (\log 1/\epsilon)^{3/2} (\log\log 1/\epsilon)^2\Bigr), & r < 1.
          \end{cases}
          \ee
\end{corollary}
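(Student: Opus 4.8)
The plan is to deduce \cor{QMLMC_SDE_integral} --- and, \emph{mutatis mutandis}, the classical \cor{MLMC_SDE_integral} --- directly from \thm{QMLMC_SDE} (resp.\ \prop{MLMC_SDE}) via the dimension-lifting reduction already sketched before the statement: the corollary asserts \emph{exactly} the complexity bounds of \thm{QMLMC_SDE}, so no new estimate has to be produced, only a new class of payoffs must be shown to be covered. Concretely, given a payoff of the form \eq{P2}, I would introduce $Y_t = \int_0^t f(X_s)\,\d s$ and $Z_t = \int_0^t g(X_s)\,\d W_s$, form $\widetilde{X}_t = (X_t, Y_t, Z_t)$, and check that $\widetilde{X}_t$ solves the autonomous system $\d\widetilde{X}_t = \widetilde{\mu}(\widetilde{X}_t,t)\,\d t + \widetilde{\sigma}(\widetilde{X}_t,t)\,\d\widetilde{W}_t$ with the block coefficients $\widetilde{\mu},\widetilde{\sigma}$ displayed above, while the functional $\widetilde{\P}(\widetilde{X}_T) := \P(X_T,Y_T,Z_T)$ reproduces the original path-dependent payoff \eq{P2}. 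It then remains to verify that $(\widetilde{X}_t,\widetilde{\mu},\widetilde{\sigma},\widetilde{\P})$ satisfies \ass{A1}, \ass{A2}, \ass{A3} and to invoke the earlier results.

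For \ass{A1}: each entry of $\widetilde{\mu},\widetilde{\sigma}$ is one of $\mu,\sigma$ (globally Lipschitz by \ass{A1}), one of $f,g$ (globally Lipschitz by \ass{A4}), or identically $0$, each precomposed only with the projection $\widetilde{X}\mapsto X$; hence $\widetilde{\mu},\widetilde{\sigma}$ are globally Lipschitz in $(\widetilde{X},t)$, and the moment bounds on $\widetilde{X}_0=(X_0,0,0)$ are inherited from those on $X_0$. For \ass{A2}: since the coefficients of the lifted SDE are built only from $\mu,\sigma,f,g$ in the $X$-coordinate, the It\^{o}--Taylor coefficient functions $f_\alpha$ for $\widetilde{X}$ are the usual iterated-operator expressions in $\mu,\sigma,f,g$ and their $x$-derivatives, so for the same choice of scheme that realizes \ass{A2} and strong order $r$ in one dimension (\cite[Section~10]{KP13}) they remain globally Lipschitz; moreover the $Y$-component has no diffusion term and the $Z$-component has diffusion coefficient $g(X)$ independent of $(Y,Z)$, so the two added equations do not degrade the truncation error and the strong order $r$ carries over to $\widetilde{X}$. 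For \ass{A3}: the discontinuities of $\widetilde{\P}$ lie on the finitely many coordinate hyperplanes coming from the piecewise-Lipschitz structure of $\P(X,y,z)$ in each argument, and these are pairwise separable, i.e.\ exactly the high-dimensional ``piecewise Lipschitz in some sense'' situation noted in the remark after \ass{A3}; with this the higher-dimensional analogue of \prop{alpha_beta_gamma_general} still yields $\alpha=r-o(1)$, $\beta=r-o(1)$, $\gamma=1$ in general, and $\alpha=r$, $\beta=2r$, $\gamma=1$ when $\P$ is globally Lipschitz everywhere.

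Having made these checks, \thm{QMLMC_SDE} applied to $(\widetilde{X}_t,\widetilde{\P})$ delivers the claimed costs, and \prop{MLMC_SDE} applied to the same lifted problem delivers \cor{MLMC_SDE_integral}. The main obstacle is buried in the third check: re-running the proof of \prop{alpha_beta_gamma_general} for $\widetilde{X}\in\R^3$ requires bounding the probability $\mathbb{P}(M(\widehat{\widetilde{X}}_n,\widetilde{X}_T)\ge 1)$ that the straight segment from $\widehat{\widetilde{X}}_n$ to $\widetilde{X}_T$ crosses a discontinuity hyperplane, which in turn needs a small-tube (transverse bounded-density) estimate for the law of $\widetilde{X}_T$ near each such hyperplane. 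For the $X$-coordinate this is the Picard-iteration density bound already used in \prop{alpha_beta_gamma_general}; for the $Y$- and $Z$-coordinates it needs a mild non-degeneracy of the map from the Brownian path to $(Y_T,Z_T)$, a H\"ormander-type condition that holds whenever $\sigma$ and $g$ are non-degenerate, which I would either record as an implicit regularity requirement on $f,g$ in \ass{A4} or circumvent by restricting to payoffs whose set of $(y,z)$-discontinuities is null with respect to the law of $(Y_T,Z_T)$. The remaining Lipschitz bookkeeping and the order-preservation claim for the lifted It\^{o}--Taylor scheme are routine.
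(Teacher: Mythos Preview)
Your approach is exactly the paper's: lift to $\widetilde{X}_t=(X_t,Y_t,Z_t)$, verify that the lifted coefficients and $\widetilde{\P}$ satisfy \ass{A1}--\ass{A3}, and invoke \thm{QMLMC_SDE}. If anything you are more careful than the paper, which simply asserts ``we can check that $\widetilde{X}_t$ and $\widetilde{\P}$ satisfy \ass{A1}, \ass{A2} and \ass{A3}'' without discussing the transverse-density issue for $(Y_T,Z_T)$ that you correctly flag as the only non-routine point.
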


We next discuss several applications arising in mathematical finance.

\section{Black-Scholes option pricing model}
\label{sec:BSM}

We consider the Black-Scholes model for option pricing \cite{Bod09,BS73,Hul03}, which is at the core of quantitative finance, as the first application of \prb{problem}.

\subsection{Black-Scholes equation}
The Black-Scholes model is used to price a variety of financial derivatives via a simple and analytical solvable model using a small number of input parameters. In Black-Scholes model, we are mainly interested in the following Geometric Brownian Motion
\begin{equation}
\d{S_t} = \mu S_t\d t + \sigma S_t\d W_t,
\label{eq:BSM}
\end{equation} 
where $S_t$ is the asset price, $\sigma$ the velocity of the asset and $\mu$ is the risk-free interest rate. The rate of return on the asset $\mu$ is assumed to be a constant. There are some other assumptions on the model, which Fischer Black and Myron Scholes have thoroughly discussed in \cite{BS73}.  For path-independent option, the price of option $V(s,t)$ follows the Black-Scholes equation
\begin{equation}
\frac{\partial V}{\partial t} + \frac{\sigma^2s^2}{2}\frac{\partial^2 V}{\partial s^2} + \mu s\frac{\partial V}{\partial s} = \mu V
\label{eq:BSE}
\end{equation}
for $s>0$, $t\le T$, with a terminal condition $V(s,T)=\psi(s)$, where $\psi$ is the final payoff. The link between \eq{BSE} and \eq{BSM} is established by the Feyman-Kac formula. The solution of \eq{BSE} can be represented as the expectation of the solution of \eq{BSM}.

\begin{lemma}\label{lem:BSE}
Consider a PDE defined in \eq{BSE} subject to a terminal condition $V(s,T)=\psi(s)$. The solution $V(s,t)$ can be written as a conditional expectation
\begin{equation}
V(s,t) = \E[ e^{-\mu (T-t)}\psi(S_T) ~|~ S_t = s],
\label{eq:FK_BSE}
\end{equation}
where $S_t$ is an It\^{o} process driven by \eq{BSM}.
\end{lemma}

Using It\^{o} lemma, \eq{BSM} under the condition $S_t=s$ can be solved as 
\begin{equation}
    S_T=s e^{\sigma W_{T-t} +(\mu -\sigma^2/2)(T-t)}.
\label{eq:BSMsol}
\end{equation}
For European options, one can analytically solve for $V(s,t)$, which has a deterministic expression. However, in the case of more complex payoff functions, one needs to resort to the Monte Carlo method, which gives an approximation of $V(s,t)$ by averaging the payoff over samples $W_t$. The quantum version of the Monte Carlo method can be applied to reduce the complexity from $\widetilde O (\epsilon^{-2})$ to $\widetilde O( \epsilon^{-1})$, and the examples of Black-Scholes model for European options and Asian options are thoroughly discussed in \cite{RGB18}.

Nevertheless, the analytical solution \eq{BSMsol} can be used for benchmarking numerical simulations to understand the scaling behavior of discretization schemes. In \app{numerical results}, we present some numerical tests on the Black-Scholes equation to demonstrate our theory results. In our method, we apply discretization schemes to \eq{BSM} to simulate random paths of $S_t$, such as Euler-Maruyama scheme
\begin{equation}
\widehat{S}_{k+1} = \widehat{S}_k + \mu \widehat{S}_kh + \sigma \widehat{S}_k\Delta W_k,
\label{eq:Euler_BS}
\end{equation}
and Milstein scheme
\begin{equation}
\widehat{S}_{k+1} = \widehat{S}_k + \mu \widehat{S}_kh + \sigma \widehat{S}_k\Delta W_k + \frac{\sigma^2}{2} \widehat{S}_k((\Delta W_k)^2-h).
\label{eq:Milstein_BS}
\end{equation}
Here, \eq{Euler_BS} and \eq{Milstein_BS} are reduced from \eq{Euler} and \eq{Milstein} respectively. Then we estimate $\E[P(S_T)]$ with $P(S_t)=e^{-\mu (T-t)}\psi(S_T)$, where the quantum oracle \eq{oracle} can be reduced to
\begin{equation}
U_{\psi}(|x\rangle|0\rangle) = |x\rangle|\psi(x)\rangle.
\label{eq:oracle_BS}
\end{equation}
It is worthwhile to mention that our method is suitable for general SDEs and payoff functions. 

\subsection{Option pricing}

Black-Scholes model provides a useful tool for option pricing, which was a longstanding problem in finance. Briefly speaking, an option is a contract that allows the holder to buy or sell a financial asset at a fixed price in the future. A call option is an option to buy an asset and a put option is an option to sell it. In this subsection, we would like to present some well-known call options as examples. Put options can be developed in a similar way.

We first consider Lipschitz continuous options. One of the famous options is \textit{European option}, in which the final payoff is given by
\begin{equation}
\psi(S_T) = (S_T-K)^+ := \max\{S_T-K , 0\},
\label{eq:European}
\end{equation}
where $K>0$ is the strike price of the option. That means that if $S_T \leq K$, the option is worthless, and if $S_T>K$, the holder can buy the asset for $K$ dollars and sell it at market price, making a profit of $S_T-K$. Note that the European option is path independent and only relies on the terminal price $S_T$, without the consideration of the whole path.  

There also exist path-dependent options relying on the path $\{S_t\}$. One example is \textit{Asian option}, in which the final payoff is considered to be
\begin{equation}
\psi(S_T) = (\frac{1}{T}\int_0^TS_t\d{t}-K)^+
\label{eq:Asian}
\end{equation}
with the strike $K>0$. The payoff of Asian option is determined by the average of the asset price over $[0,T]$.



It follows from a theorem that the classical multilevel Monte Carlo method with Euler-Maruyama scheme achieves the complexity $\widetilde O(\epsilon^{-2})$ for globally Lipschitz continuous options, which improves the complexity of standard Monte Carlo method $\widetilde O(\epsilon^{-3})$. \cite{Gil08, Gil15}.

We then consider piecewise Lipschitz continuous options. A typical option is \textit{Digital option}, in which the digital option (Cash-or-nothing option) gives the final payoff as the form
\begin{equation}
\psi(S_T) = \mathcal{H}(S_T-K),
\label{eq:digital}
\end{equation}
with the strike $K>0$, where $\mathcal{H}$ is the Heaviside function. Clearly it is a non-Lipschitz continuous option.

For piecewise Lipschitz continuous options, classical multilevel Monte Carlo method with Euler-Maruyama scheme achieves the complexity $\widetilde O(\epsilon^{-2.5})$, and it can be further improved to $\widetilde O(\epsilon^{-2})$ by Milstein scheme and extreme paths~\cite{Gil15}.

By \thm{QMLMC_SDE} and \cor{QMLMC_SDE_integral}, the quantum-accelerated multilevel Monte Carlo method can be used to obtain the following result:

\begin{corollary}\label{cor:piecewise_BS}
We consider \prb{problem} given by a Geometric Brownian Motion \eq{BSM} with an initial condition $S_t=s_0$. We are given the zeroth-order quantum oracle \eq{oracle_BS} for a final piecewise Lipschitz continuous payoff $\P$ as defined in \eq{P1} or \eq{P2}. Then QA-MLMC with a numerical scheme for SDE of strong order $r$ estimates $\mathbb{E}[P]$ up to additive error $\epsilon$ with probability at least 0.99 in cost 
\be
\begin{cases}
\widetilde O(\epsilon^{-1}), & r  > 2, \\
\widetilde O(\epsilon^{-1/2-1/r-o(1)}), & r \leq 2.
\end{cases}
\ee
Furthermore, if the payoff function $\P$ is globally Lipschitz continuous everywhere, then the cost can be improved to 
\be
\begin{cases}
\widetilde O(\epsilon^{-1}), & r  \geq 1, \\
\widetilde O(\epsilon^{-1/r}), & r < 1.
\end{cases}
\ee
\end{corollary}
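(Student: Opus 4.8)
The plan is to verify that the Geometric Brownian Motion \eq{BSM} together with the payoff under consideration satisfies the hypotheses \ass{A1}, \ass{A2} and \ass{A3} (or \ass{A4} in the path-dependent case), and then simply invoke \thm{QMLMC_SDE} for payoffs of the form \eq{P1} and \cor{QMLMC_SDE_integral} for payoffs of the form \eq{P2} to read off the stated complexities. Thus the entire proof reduces to a routine check of the assumptions; there is no new analytical content.

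First I would check \ass{A1}. The drift and volatility of \eq{BSM} are $\mu(x,t)=rx$ and $\sigma(x,t)=\sigma x$, which are linear in $x$ and independent of $t$, hence globally Lipschitz with constant $\max\{|r|,|\sigma|\}$. Since the initial condition is the deterministic value $S_t=s_0$, all moments are trivially finite, so \ass{A1} holds; in particular the SDE has a unique strong solution, given explicitly by \eq{BSMsol}, with finite moments of every order.

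The main (and only) point requiring care is \ass{A2}. For a Taylor-It\^o scheme the coefficient functions $f_\alpha$ are built by repeatedly applying the operators $L^0=\partial_t+\mu\partial_x+\tfrac12\sigma^2\partial_x^2$ and $L^1=\sigma\partial_x$ to $\mu$ and $\sigma$. Because $\mu$ and $\sigma$ are \emph{linear} in $x$, I would argue by induction that every such iterated application again yields a function that is linear (indeed homogeneous of degree one) in $x$: differentiating a linear function of $x$ gives a constant, multiplying a constant by $\mu$ or $\sigma$ gives a linear function, and the second-derivative term in $L^0$ annihilates linear functions. Consequently every $f_\alpha$ is linear in $x$ and hence globally Lipschitz, so \ass{A2} is satisfied for every Taylor-It\^o scheme of every strong order $r$. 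This is exactly the place where the special structure of the Black-Scholes model is used; for a general SDE with, say, polynomial coefficients the $f_\alpha$ would be higher-degree polynomials and \ass{A2} could fail.

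Finally I would handle the payoff. For a path-independent payoff of the form \eq{P1} the quantity actually estimated is $P(S_T)=e^{-r(T-t)}\psi(S_T)$; multiplication by the constant discount factor $e^{-r(T-t)}$ preserves (piecewise) Lipschitz continuity and does not change the number of discontinuity points, so $P$ satisfies \ass{A3} whenever $\psi$ does (covering the European payoff \eq{European}, globally Lipschitz, and the digital payoff \eq{digital}, piecewise Lipschitz with one jump). For a payoff of the form \eq{P2} the hypotheses of \ass{A4} are assumed directly and are again unaffected by the discount factor (covering the Asian payoff \eq{Asian} with $f(x)=x/T$ globally Lipschitz and outer function $(y-K)^+$ globally Lipschitz). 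With \ass{A1}--\ass{A3} verified, \thm{QMLMC_SDE} gives the first complexity dichotomy and, in the globally Lipschitz case, the improved one; with \ass{A1}, \ass{A2}, \ass{A4} verified, \cor{QMLMC_SDE_integral} gives the same bounds. Here one combines the $r>1$ and $r=1$ cases of those results into the single claim $\widetilde O(\epsilon^{-1})$ for $r\ge1$, since both carry only polylogarithmic overhead. This completes the proof.
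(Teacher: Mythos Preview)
Your proposal is correct and follows the paper's own approach: the paper simply states that the corollary follows ``By \thm{QMLMC_SDE} and \cor{QMLMC_SDE_integral}'' without giving any further argument. Your write-up is in fact more thorough than the paper's, since you explicitly verify \ass{A1}--\ass{A3} (and \ass{A4}) for the Black--Scholes model, including the nice inductive observation that the Taylor--It\^o coefficient functions $f_\alpha$ remain linear in $x$ because $\mu,\sigma$ are linear; the paper leaves all of this implicit.
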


For globally Lipschitz continuous $P$, it suffices to perform the Milstein scheme \eq{Milstein_BS} with strong order $r=1$ to achieve the complexity $\tilde{O}(\epsilon^{-1})$. While for piecewise Lipschitz continuous $P$, it requires to use a numerical scheme of strong order $5/2$ to obtain the complexity $\tilde{O}(\epsilon^{-1})$.

\section{Local Volatility model}
\label{sec:LVM}

The Local Volatility model generalizes the Black-Scholes model \eq{BSM} by treating volatility $\sigma$ as a function of the asset $S_t$ and the time $t$ \cite{DK94,Dup94}. The Local Volatility model is a kind of simplification of the stochastic volatility model, which assumes $\sigma$ has a randomness of its own. Differing from the analytically solvable Black-Scholes model, we are required to simulate the SDE, as there is a lack of explicit solutions to estimate the price of the Local Volatility model.

The Local Volatility model characterizes $S_t$ by a generalized Geometric Brownian Motion as the form
\begin{equation}
\d{S_t} = \mu S_t\d t + \sigma(S_t,t) S_t\d W_t
\label{eq:LVM}
\end{equation}
for $t\le T$, with an initial condition $S_t=s_0$. Here $\mu$ is the instantaneous risk-free interest rate, $\sigma(S_t,t)$ is the instantaneous volatility of the risky asset, which is sufficiently smooth with respect to $S_t$ and $t$, and $W_t$ is a standard Brownian motion. We further require \eq{LVM} satisfies the assumptions in \sec{SDE}.

By Feynman-Kac Formula, the SDE \eq{LVM} corresponds to the Black-Scholes equation that describes the price of the option $V=V(s,t)$ by
\begin{equation}
\frac{\partial V}{\partial t} + \frac{\sigma^2(s,t)s^2}{2}\frac{\partial^2 V}{\partial s^2} + \mu s\frac{\partial V}{\partial s} = \mu V
\label{eq:LVE}
\end{equation}
for $s>0$, $t\le T$, with a terminal condition $V(s,T)=\psi(s)$, where $\psi$ is the final payoff. 

We establish  the  link  between \eq{LVM} and \eq{LVE} by Feynman-Kac Formula.

\begin{lemma}\label{lem:LVE}
Consider a PDE defined in \eq{LVE} subject to a terminal condition $V(s,T)=\psi(s)$. The solution $V(s,t)$ can be written as a conditional expectation
\begin{equation}
V(s,t) = \E[ e^{-\mu (T-t)}\psi(S_T) ~|~ S_t = s ],
\label{eq:FK_LVE}
\end{equation}
where $S_t$ is an It\^{o} process driven by \eq{LVM}.
\end{lemma}

Thus, we can estimate $\E[P(S_T)]$ with $P(S_t)=e^{-\mu (T-t)}\psi(S_T)$ to obtain $V(s,t)$ for general options.

We consider the same setting for payoffs in \sec{BSM}. By \thm{QMLMC_SDE} and \cor{QMLMC_SDE_integral}, the quantum-accelerated multilevel Monte Carlo method can be used to obtain the following results:

\begin{corollary}\label{cor:piecewise_LV}
We consider \prb{problem} given by a SDE \eq{LVM} with an initial condition $S_t=s_0$. We are given the zeroth-order quantum oracle \eq{oracle_BS} for a final piecewise Lipschitz continuous payoff $\P$ as defined in \eq{P1} or \eq{P2}. Then QA-MLMC with a numerical scheme for SDE of strong order $r$ estimates $\mathbb{E}[P]$ up to additive error $\epsilon$ with probability at least 0.99 in cost 
\be
\begin{cases}
\widetilde O(\epsilon^{-1}), & r  > 2, \\
\widetilde O(\epsilon^{-1/2-1/r-o(1)}), & r \leq 2.
\end{cases}
\ee
Furthermore, if the payoff function $\P$ is globally Lipschitz continuous everywhere, then the cost can be improved to 
\be
\begin{cases}
\widetilde O(\epsilon^{-1}), & r  \geq 1, \\
\widetilde O(\epsilon^{-1/r}), & r < 1.
\end{cases}
\ee
\end{corollary}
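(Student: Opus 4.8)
The plan is to verify that the Local Volatility SDE~\eq{LVM} together with the discounted payoff $P(S_T) = e^{-r(T-t)}\psi(S_T)$ (or its path-dependent analogue) satisfies the hypotheses of~\thm{QMLMC_SDE} and~\cor{QMLMC_SDE_integral}, and then simply read off the claimed complexities.

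First I would match~\eq{LVM} to the general form~\eq{SDE}, identifying the drift $\mu(s,t) = rs$ and the volatility coefficient $\widehat\sigma(s,t) = \sigma(s,t)\,s$. The drift is linear, hence globally Lipschitz, and the deterministic initial condition $S_t=s_0$ trivially satisfies the moment bound $\E[X_0^m]\le C_m$. For the volatility coefficient I would invoke the standing regularity hypothesis of~\sec{LVM} --- that $\sigma$ is sufficiently smooth and that~\eq{LVM} is required to satisfy the assumptions of~\sec{SDE} --- which gives~\ass{A1}; concretely it suffices that $\sigma$ and $s\,\partial_s\sigma$ are bounded so that $\sigma(s,t)\,s$ is Lipschitz in $s$ uniformly in $t$, with the analogous bound in $t$. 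Since~\eq{LVM} is scalar with smooth coefficients, the Taylor-It\^{o} scheme of strong order $r$ of~\cite[Section~10]{KP13} applies and its coefficient functions $f_\alpha$ are globally Lipschitz in $s$, which is~\ass{A2}. The hypothesis of the corollary that $\psi$ is piecewise Lipschitz continuous --- and hence so is $P=e^{-r(T-t)}\psi$, since the discount factor is a positive constant in $s$ --- is exactly~\ass{A3}; in the path-dependent case the payoff has the form~\eq{P2} with globally Lipschitz $f,g$, which is~\ass{A4}.

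Having checked the assumptions, for a payoff depending only on $S_T$ I would apply~\thm{QMLMC_SDE}: a scheme of strong order $r>2$ gives cost $\widetilde O(\epsilon^{-1})$ and a scheme of strong order $r\le 2$ gives $\widetilde O(\epsilon^{-1/2-1/r-o(1)})$, which is the first displayed bound. If $\psi$ is globally Lipschitz everywhere then $\mathbb{P}(M(\widehat X_n,X_T)\ge 1)=0$ in the proof of~\prop{alpha_beta_gamma_general}, so $\alpha=r$ and $\beta=2r$; plugging these into~\thm{QMLMC} (as already done inside~\thm{QMLMC_SDE}) merges the regimes $r>1$ and $r=1$ into $\widetilde O(\epsilon^{-1})$ for $r\ge 1$, and leaves $\widetilde O(\epsilon^{-1/r})$ for $r<1$, the second displayed bound. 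For path-dependent payoffs of the form~\eq{P2}, the same complexities follow from~\cor{QMLMC_SDE_integral} after lifting to the extended system $\widetilde X_t=(X_t,Y_t,Z_t)$ exactly as in~\sec{SDE}.

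I do not expect a genuine obstacle: the corollary is a direct specialization of the general SDE results, and the only delicate point is the global Lipschitz condition~\ass{A1} for the coefficient $\sigma(s,t)\,s$, which is precisely why~\sec{LVM} imposes smoothness of $\sigma$ and explicitly requires~\eq{LVM} to inherit the assumptions of~\sec{SDE} (in full generality one may have to truncate $\sigma$ for large $s$, or argue that the Geometric-Brownian-type paths remain in a region where the truncation is inactive). Absorbing the $(\log 1/\epsilon)^{3/2}(\log\log 1/\epsilon)^2$ and $(\log 1/\epsilon)^{7/2}$ factors into the $\widetilde O$ notation is routine bookkeeping.
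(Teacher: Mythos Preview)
Your proposal is correct and matches the paper's approach exactly: the paper simply states that the corollary follows ``by \thm{QMLMC_SDE} and \cor{QMLMC_SDE_integral}'' without writing out any further details. Your verification of \ass{A1}--\ass{A3} (resp.\ \ass{A4}) for the Local Volatility coefficients and payoff, followed by direct invocation of those results, is precisely what the paper intends, and in fact spells out more than the paper does.
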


\section{Sensitivity of price}
\label{sec:Greeks}

Besides the price, another important factor of financial products is the risk. If two financial products have similar expected return, the one with lower risk is usually preferable because people would like to ensure a positive rate of return earnings under most situations and are not willing to take a risk. One common way to reduce the risk is hedging, that is, forming a portfolio by buying several financial products with different reaction and sensitivity to the change of the market. For example, if there are two stocks, and one is positively correlated with the global economy and the other one is negatively correlated, then one may purchase both of them with a specific ratio to guarantee the return no matter whether the global economy flourishes or declines. Correctly identifying and measuring the sensitivity is crucial to hedging portfolios. 

``The Greeks" label the sensitivity of the price of the option \cite{Hul03}. They are partial derivatives with respect to the parameters such as the initial price, the starting time, the risk-free rate and the volatility. Computing these is essential to hedge portfolios, and therefore it is even more important than pricing the option itself. Different from the Black-Scholes model, there's no closed form for Greeks with general payoff functions and coefficients in SDEs. Let $u(s,t)$ denote $\E(\P(X_T))$ where $X_T$ is the solution of SDE
\begin{equation}
\d{X_\tau} = \mu(X_\tau)\d \tau + \sigma(X_\tau)\d W_\tau, \quad \tau \in [t,T]
\end{equation}
with starting time $t$ and initial condition $X_t = s$. 
Different types of Greeks are as follow: 
\begin{itemize}
    \item Delta:
\begin{equation}
\frac{\partial u(s,t)}{\partial s},
\end{equation}
    \item Gamma:
\begin{equation}
\frac{\partial^2 u(s,t)}{\partial s^2},
\end{equation}
    \item Vega:
\begin{equation}
\frac{\partial u(s,t)}{\partial \sigma},
\end{equation}
    \item Theta:
\begin{equation}
\frac{\partial u(s,t)}{\partial t},
\end{equation}
    \item Rho:
\begin{equation}
\frac{\partial u(s,t)}{\partial r}.
\end{equation}
\end{itemize}

Numerical treatments of Greeks have been widely studied in \emph{e.g. }\cite{Fournie1999Greeks, Malliavin2006Greeks, Paskov1995Greeks, Liu2010Greeks, Giles2012Greeks}. 
A natural way to compute the Greeks is finite difference, that is, to compute the expectation of the payoff function with different parameters and use finite difference to approximate the derivatives. 
Although finite difference approximation of derivatives suffers from numerical instability, using common stochastic paths for both estimators in finite difference can reduce the classical complexity. 
However, if the payoff function is not smooth enough, finite difference might not be effective. 
Here we consider an alternative approach using Malliavin calculus~\cite{Fournie1999Greeks} to compute the Greeks. 
Although rigorous derivation of using Malliavin calculus to compute Greeks is technically involved, the basic idea is simply using an analog of calculus of variations 
and integration by parts formula in the stochastic calculus setting, and the outcome formula to compute Greeks is amazingly concise thus easy to implement. 
We will show how to combine it with QA-MLMC to achieve quadratic speedup. 

For expository purpose let us consider the example of computing Delta in one dimension and omit a few technical details. 
We refer interested readers to~\cite{Fournie1999Greeks} for elaborations and other types of the Greeks. 
Assume the process $X_t$ is given by \eq{SDE}, and for simplicity we assume that the system is autonomous, \emph{i.e. }there is no explict time dependence in $\mu$ and $\sigma$. 
Define another stochastic process $Y_t$ to be 
\begin{equation}\label{eq:Greeks_SDE_Y}
    dY_t = \mu'(X_t)Y_t\d t + \sigma'(X_t)Y_t\d W_t, \quad Y_0 = 1. 
\end{equation}
Under \ass{A1}, \ass{A2}, \ass{A3}, and several further technical assumptions that $\mu,\sigma$ are $C^1$ functions, $\sigma$ is uniformly bounded away from 0 and the payoff function $\P$ has uniformly bounded second moment,~\cite[Proposition 3.2]{Fournie1999Greeks} tells that Delta can be represented via the following formula: 
\begin{equation}\label{eq:Greeks_exp}
    \frac{\partial u(s,0)}{\partial s} = \E \left[\frac{1}{T}\P(X_T)\int_0^TY_t\sigma^{-1}(X_t)\d W_t ~|~ X_0=s,Y_0=1\right]. 
\end{equation}
Notice that here we only assume the payoff function $\P$ to be piecewise Lipschitz continuous. 
Therefore, we can compute Delta by sampling $(X_t,Y_t)$ according to \eq{SDE} and \eq{Greeks_SDE_Y}, and then use QA-MLMC to compute the expectation \eq{Greeks_exp}, which is in the form of \ass{A4}. 
By \cor{QMLMC_SDE_integral}, we have the following complexity estimate: 
\begin{corollary}\label{cor:Greeks}
Assume we are given the zeroth-order quantum oracle \eq{oracle_BS} for a final piecewise Lipschitz continuous payoff $\P$ as defined in \eq{P2}. Then by sampling $(X_t,Y_t)$ according to \eq{SDE} and \eq{Greeks_SDE_Y} with a numerical scheme of strong order $r$ and estimating Delta via \eq{Greeks_exp} using QA-MLMC, the Delta can be approximated up to an additive error $\epsilon$ with probability at least 0.99 in cost 
\be
\begin{cases}
\widetilde O(\epsilon^{-1}), & r  > 2, \\
\widetilde O(\epsilon^{-1/2-1/r-o(1)}), & r \leq 2.
\end{cases}
\ee
Furthermore, if the payoff function $\P$ is globally Lipschitz continuous, then the cost can be improved to 
\be
\begin{cases}
\widetilde O(\epsilon^{-1}), & r  \geq 1, \\
\widetilde O(\epsilon^{-1/r}), & r < 1.
\end{cases}
\ee
\end{corollary}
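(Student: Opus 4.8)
The plan is to recognize the Malliavin representation \eq{Greeks_exp} for Delta as an instance of \prb{problem} with an entire-path-dependent payoff of the type covered by \ass{A4}, and then to invoke \cor{QMLMC_SDE_integral} essentially verbatim. First I would set up the augmented state: alongside $X_t$ from \eq{SDE} and $Y_t$ from \eq{Greeks_SDE_Y}, introduce $Z_t = \int_0^t Y_s\,\sigma^{-1}(X_s)\,\d W_s$, so that $(X_t,Y_t,Z_t)$ solves an autonomous system of SDEs with drift $(\mu(X),\ \mu'(X)Y,\ 0)^{\top}$ and the obvious diffusion matrix. The right-hand side of \eq{Greeks_exp} is then $\E[\widetilde{\P}(X_T,Z_T)\mid X_0=s,Y_0=1]$ with $\widetilde{\P}(X,z)=\tfrac1T\P(X)\,z$, which is of the form \eq{P2} (the $\int_0^T f(X_t)\,\d t$ slot is unused, $f\equiv 0$, and the integrand in the $\d W_t$ term is $g(X_t,Y_t)=Y_t\sigma^{-1}(X_t)$). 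Since the augmentation adds only a constant number of SDEs whose coefficients are evaluated at cost $O(1)$, the per-step cost is unchanged and the parameter $\gamma=1$ of \prop{alpha_beta_gamma_general} is preserved.

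Second, I would check that the augmented system and payoff meet the hypotheses of \cor{QMLMC_SDE_integral}, namely \ass{A1}, \ass{A2} for the augmented SDE and the (piecewise) Lipschitz requirement on $\widetilde{\P}$. Under the standing assumptions of this section — $\mu,\sigma$ are $C^1$ (with the higher smoothness already needed for a strong order-$r$ Taylor–It\^o scheme as in \ass{A2}), $\sigma$ uniformly bounded away from $0$, and $\P$ piecewise Lipschitz with uniformly bounded second moment — the new coefficients $\mu'(X)Y$, $\sigma'(X)Y$ and $Y\sigma^{-1}(X)$ are locally Lipschitz, and the linearity of the $Y$- and $Z$-equations gives (via Gr\"onwall and Burkholder–Davis–Gundy estimates) polynomial moment bounds $\E[|Y_t|^m],\E[|Z_t|^m]\le C_m$ matching those assumed in \ass{A1}. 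With these in hand, \prop{alpha_beta_gamma_general} supplies $\alpha=r-o(1)$, $\beta=r-o(1)$, $\gamma=1$ for the augmented problem (and $\alpha=r$, $\beta=2r$ when $\P$ is globally Lipschitz), and reading off \cor{QMLMC_SDE_integral} gives the claimed $\widetilde O(\epsilon^{-1})$ for $r>2$ and $\widetilde O(\epsilon^{-1/2-1/r-o(1)})$ for $r\le 2$, with the stated improvement in the globally Lipschitz case. I would also cite \cite[Proposition 3.2]{Fournie1999Greeks} for the validity of \eq{Greeks_exp} and remark that Gamma, Vega, Theta and Rho are handled identically using their own Malliavin weights, cf.\ \cite{Giles2012Greeks}.

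The part that requires genuine care rather than bookkeeping is that $\widetilde{\P}(X,z)=\tfrac1T\P(X)z$ is \emph{not} globally (piecewise) Lipschitz: its modulus of continuity in one variable grows with the other, so \ass{A4} is not literally satisfied, and likewise the augmented coefficients are only locally Lipschitz. I expect to handle both points with a single localisation argument: truncate $(X_T,Y_T,Z_T)$ to a ball of radius $R=\poly(\log(1/\epsilon))$, on which $\widetilde{\P}$ and all augmented coefficients are Lipschitz with constants that are only polylogarithmic in $1/\epsilon$, and use the polynomial moment bounds above to show the tail contribution to both the bias $|\E[P_l-P]|$ and the variances $V_l$ is $O(\epsilon^{\mathrm{const}})$, hence negligible. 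This degrades the effective Lipschitz constants in \prop{alpha_beta_gamma_general} only by factors already absorbed into the $o(1)$ exponent and the $(\log 1/\epsilon)$ prefactors in the statement, so the final complexities are unaffected. Once this localisation is in place the corollary follows immediately from \cor{QMLMC_SDE_integral}.
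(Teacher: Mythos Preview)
Your approach is exactly the paper's: the paper gives no proof beyond the sentence ``which is in the form of \ass{A4}. By \cor{QMLMC_SDE_integral}, we have the following complexity estimate,'' so your reduction of \eq{Greeks_exp} to the path-dependent framework and invocation of \cor{QMLMC_SDE_integral} is precisely what the authors intend.

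Where you go further than the paper is in your last paragraph, and this is worth noting. You correctly observe that $\widetilde{\P}(X,z)=\tfrac{1}{T}\P(X)\,z$ is not piecewise Lipschitz in the sense of \ass{A4} (the Lipschitz constant in one variable grows with the other), that the integrand $Y_t\sigma^{-1}(X_t)$ is a function of the augmented state $(X_t,Y_t)$ rather than $X_t$ alone, and that the augmented drift and diffusion coefficients $\mu'(X)Y,\ \sigma'(X)Y$ are only locally Lipschitz. The paper does not acknowledge any of these points; it treats the application of \cor{QMLMC_SDE_integral} as immediate. Your proposed localisation to a ball of polylogarithmic radius, combined with polynomial moment bounds on $(Y_t,Z_t)$, is a standard and reasonable way to close this gap, and the resulting logarithmic overhead is indeed absorbed by the $\widetilde{O}$ and $o(1)$ in the statement. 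So your proposal is both aligned with the paper's argument and more careful than it.
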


\section{Binomial option pricing model}
\label{sec:BLM}

The binomial option pricing model, a.k.a. the binomial lattice model,  provides a discrete time (lattice-based) approximation to the Black–Scholes model \cite{CRR79,Hul03,Ren79}. The binomial model assumes that movements in the price follow a binomial distribution, which approaches the log-norm distribution of the Geometric Brownian Motion.

Given a risk-neutral measure and an initial condition $S_0$, we divide $[0,T]$ by a uniform partition $0=t_0<t_1<\ldots<t_n=T$ with $h=T/n$, and model the price of a stock in discrete time by a Markov chain
\begin{equation}
S_{k+1} = S_kY_{k+1},
\label{eq:BLM}
\end{equation}
where $\{Y_k\}$ $k\in\rangez{n}$ are i.i.d. with a two points ``up'' $U$ and ``down'' $D$ distribution
\begin{equation}
\begin{aligned}
& \mathrm{Pr}(Y_k = U) = p;\\
& \mathrm{Pr}(Y_k = D) = 1-p.\\
\end{aligned}
\end{equation}
Under this setting, the probability of the value $S_N$ is given by
\begin{equation}
\mathrm{Pr}(S_n = U^kD^{n-k}S_0) = \Bigl(^n_k\Bigr) p^k(1-p)^{n-k}.
\end{equation}
Thus, we create a binomial tree with distribution $B(n,p)$ that describes the prices over time.

We require the conditional expectation of \eq{BLM} matches the Geometric Brownian Motion \eq{BSM}, giving
\begin{equation}
pS_kU+(1-p)S_kD = \E[S_{k+1}~|~S_k] = S_ke^{rh},
\label{eq:BLM_mean}
\end{equation}
which is
\begin{equation}
p = \frac{e^{rh}-D}{U-D}.
\end{equation}
Practially, we could use $\E[S_{k+1}~|~S_k] \approx S_k(1+rh)$ instead, which in fact corresponds with the expectation of the Euler-Maruyama scheme \eq{Euler_BS}. Similarly, we require the conditional variance of \eq{BLM} matches the geometry Brownian motion \eq{BSM},
\begin{equation}
pS_k^2U^2+(1-p)S_k^2D^2-[pS_k^2U+(1-p)S_k^2D]^2 = \mathrm{Var}(S_{k+1}~|~S_k) = S_k^2e^{2rh}(e^{\sigma^2h}-1) \approx S_k^2\sigma^2h.
\label{eq:BLM_var}
\end{equation}

There are various binomial lattice models satisfying \eq{BLM_mean} and \eq{BLM_var}. The first and the most famous model is the CRR model \cite{CRR79}. If we choose step size $h \le \sigma^2/r^2$ , and assume $D = 1/U$, we obtain
\begin{equation}
U = e^{\sigma\sqrt{h}}, \qquad D = e^{-\sigma\sqrt{h}}.
\label{eq:CCR}
\end{equation}
Alternatively, we can also consider an equal probabilities model, the JR model \cite{JR83}. If we choose $p=1/2$ to determine $U$ and $D$, we would obtain
\begin{equation}
U = e^{(r-\sigma^2/2)h+\sigma\sqrt{h}}, \qquad D = e^{(r-\sigma^2/2)h-\sigma\sqrt{h}}.
\label{eq:JR}
\end{equation}
Similar as before, we can use first-order approximations of exponential factors as values of $U$ and $D$ in practice.

Now we perform BOPM to estimate expectation of option pricing $\E[P(S_T)]$ The error of approximating general Black-Scholes option prices by CCR model \eq{CCR} is $O(1/\sqrt{n})$ \cite{HZ00}, while it can be further improved to $O(1/n)$ for European option \cite{LR96}. 

In general, BOPM works for various options with complexity $O(n)$, since we need to calculate $S_n = U^kD^{n-k}S_0$. Even if $U=1/D$, we should still multiphy $U$ in total $2k-n$ times. In general, we can apply multilevel MC to reduce the cost of BOPM. Since we are required to choose $n=O(1/\epsilon^2)$, its complexity is generally no better than Euler-Maruyama scheme.

But for the piecewise constant payoff, such as the digital option \eq{digital}, it allows us to develop classical and quantum algorithms that only require $O(\log N)$ time complexity. Given $\overline S_0 < \cdots < \overline S_m$, we define a piecewise constant payoff by
\begin{equation}
\psi(S_n) =
\begin{cases}
\psi_L, & S_n < \overline S_0, \\
\psi_j, & \overline S_{j-1} \le S_n < \overline S_j, \\
\psi_R, & S_n \ge \overline S_m,
\end{cases}
\label{eq:constant}
\end{equation}
where $\psi_j$, $\psi_L$, $\psi_R$ are constant. Noting that \eq{constant} is a special case of \ass{A3}. We are also given a zeroth-order classical oracle
\begin{equation}
O_{\psi}(S_0,k) = \psi(U^kD^{n-k}S_0),
\label{eq:constant_classical_oracle}
\end{equation}
which outputs the final payoff \eq{constant}. The procedure is described as follows: given $S_0$, $U$, $D$, $N$, we can determine the criteria $k^{\ast}_j$ corresponding with $\overline S_j$, by
\begin{equation}
k^{\ast}_j\log U + (n-k^{\ast}_j)\log D + \log S_0 = \log \overline S_j. 
\end{equation}
When we input $n$, we just compare $n$ with $n^{\ast}_j$, and determine the payoff \eq{constant}. Under this setting, the complexity of \eq{constant_classical_oracle} is independent on $n$, by avoiding $n$ times multiplication for calculating $S_n=U^kD^{n-k}S_0$.

We follow Theorem 12 of \cite{LMS20}, in which we replace the fast random walk by BOPM.

\begin{proposition}\label{prop:BOPM}
We consider \prb{problem} given by a Geometric Brownian Motion \eq{BSM} with an initial condition $S_0$. We are given the zeroth-order classical oracle \eq{constant_classical_oracle} for a final piecewise constant payoff as defined in \eq{constant}, which has a bounded variance independent of $h$. There exists a classical algorithm that estimates $\E[P]$up to additive error $\epsilon$ with probability at least 0.99 in cost 
\begin{equation}
\widetilde O(\epsilon^{-2}).
\end{equation}
\end{proposition}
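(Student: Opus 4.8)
The plan is to assemble three ingredients, following Theorem~12 of \cite{LMS20} with the fast random walk there replaced by the binomial lattice \eq{BLM}: the convergence rate of the binomial tree to the true Black--Scholes price, sublinear binomial sampling, and an ordinary Monte Carlo average. First I would bound the discretization bias. By \cite{HZ00}, the option price produced by the CRR tree \eq{CCR} with $n$ steps differs from the true Black--Scholes price of \eq{BSM} by $O(1/\sqrt{n})$; hence taking $n = \Theta(\epsilon^{-2})$ guarantees $\lvert \E[\psi(S_n)] - \E[\psi(S_T)]\rvert \leq \epsilon/2$, where $S_n = U^k D^{n-k} S_0$ and $k \sim B(n,p)$. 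Note that this forces the ``number of levels'' $n$ to be polynomially large, so the whole point of the argument is to make the cost \emph{per sample} only polylogarithmic in $n$.

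Next I would show that a single sample of $\psi(S_n)$ can be produced in time $\widetilde O(\operatorname{polylog}(1/\epsilon))$, essentially independent of $n$. The key observation is that, since the payoff \eq{constant} is piecewise constant and $S_n = D^n (U/D)^k S_0$ depends on the path only through the number $k$ of ``up'' moves, there is no need to simulate the whole trajectory: it suffices to draw $k \sim B(n,p)$ directly, which sublinear binomial sampling \cite{BKP15,FT15} accomplishes using only $\operatorname{polylog}(n)$ operations, with the total-variation distance to the exact binomial law driven below any prescribed inverse polynomial in $\epsilon$ at merely polylogarithmic cost (so the resulting perturbation of $\E[\psi(S_n)]$ is absorbed into the $\epsilon/2$ budget). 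Given $k$, the oracle \eq{constant_classical_oracle} evaluates $\psi$ by precomputing the thresholds $k^{\ast}_j$ from $k^{\ast}_j \log U + (n-k^{\ast}_j)\log D + \log S_0 = \log \overline S_j$ and locating $k$ among them by binary search, which costs $O(\log m)$ arithmetic operations and, as noted in \sec{BLM}, is independent of $n$.

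Finally I would run the plain Monte Carlo estimator: generate $N$ i.i.d.\ copies of $\psi(S_n)$ and output their mean $Y$. Since by hypothesis $\psi(S_n)$ has variance bounded by some $\sigma^2$ independent of $h$ (hence of $n$), Chebyshev's inequality gives $\lvert Y - \E[\psi(S_n)]\rvert \leq \epsilon/2$ with constant probability once $N = O(\sigma^2/\epsilon^2)$, and this can be boosted to probability $0.99$ via the powering lemma (\lem{powering lemma}) at only a logarithmic overhead; combining with the bias bound yields $\lvert Y - \E[\psi(S_T)]\rvert \leq \epsilon$ with probability at least $0.99$. The total cost is $N \cdot \widetilde O(\operatorname{polylog}(1/\epsilon)) = \widetilde O(\epsilon^{-2})$. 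The main obstacle, and the only point that needs genuine care, is the second step: one must verify that sublinear binomial sampling really produces a sample whose law is close enough to $B(n,p)$ in total variation that the induced error in $\E[\psi(S_n)]$ is $o(\epsilon)$, while keeping the per-sample running time polylogarithmic in $n$ — this is exactly the content imported from \cite[Theorem~12]{LMS20}.
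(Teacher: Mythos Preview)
Your proposal is correct and follows essentially the same approach as the paper's own proof: choose $n=\Theta(\epsilon^{-2})$ so that the binomial tree bias is $O(\epsilon)$, draw $k\sim B(n,p)$ in $O(\log n)$ time via sublinear binomial sampling \cite{BKP15,FT15}, evaluate the piecewise constant payoff via the threshold comparison oracle \eq{constant_classical_oracle}, and average $O(\epsilon^{-2})$ independent samples. If anything, you are more explicit than the paper about splitting the error into bias and sampling parts and about invoking Chebyshev plus the powering lemma to obtain the stated success probability.
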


\begin{proof}
We aim to perform $n = O(1/\epsilon^2)$ steps BOPM \eq{BLM} to obtain $S_T$ within error $\epsilon$, and then calculate $P(S_T)$ as one sample of Monte Carlo simulation. Given $U$, $D$, $S_0$ sampled from $\pi_0$ in $O(1)$, and the oracle \eq{constant_classical_oracle} for \eq{constant}, we do the sampling procedure as follows.

Inspired by Lemma 11 and Theorem 12 of \cite{LMS20}, we determine $n$ by sampling a binomial distribution $B(n,p)$, and then compare to $n^{\ast}_j$ to determine the payoff by \eq{constant_classical_oracle}. The sampling from a binomial distribution requires $O(\log n)$ expected samples and $O(\log n)$ expected time \cite{BKP15,FT15}. Thus, the cost of each iteration is $O(\log n)$. 

By Monte Carlo simulation, we repeat the above sampling procedure $O(\epsilon^{-2})$ times. Thus, we can estimate $\E[P]$ with the final complexity $O(\epsilon^{-2}\log n) = O(\epsilon^{-2}\log 1/\epsilon)$ in time.
\end{proof}

Similarly, we are given a zeroth-order quantum oracle by modifying \eq{oracle_BS} to be
\begin{equation}
U_{\psi}(|S_0\rangle|k\rangle|0\rangle) = |S_0\rangle|k\rangle|\psi(U^kD^{n-k}S_0)\rangle.
\label{eq:constant_quantum_oracle}
\end{equation}
where $\psi$ is the final payoff \eq{constant}. The same as the procedure of \eq{constant_classical_oracle}, the complexity of \eq{constant_quantum_oracle} is independent on $N$. 

We follow Theorem 22 of \cite{LMS20}, in which we replace the quantum walk by BOPM.

\begin{theorem}\label{thm:QBOPM}
We consider \prb{problem}  given by a Geometric Brownian Motion \eq{BSM} with an initial condition $S_0$. We are given the zeroth-order quantum oracle \eq{constant_quantum_oracle} for a final piecewise constant payoff as defined in \eq{constant}, which has a bounded variance independent of $h$. There exists a quantum algorithm that estimates $\E[P]$ up to additive error $\epsilon$ with probability at least 0.99 in cost 
\begin{equation}
\widetilde O(\epsilon^{-1}).
\end{equation}
\end{theorem}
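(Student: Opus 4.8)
The plan is to mirror the classical argument of \prop{BOPM}, but replace the outer Monte Carlo averaging by the quantum-accelerated Monte Carlo of \lem{Montanaro2015}, exactly following the template of Theorem 22 of \cite{LMS20} with the quantum walk there replaced by the binomial tree \eq{BLM}. The two ingredients are: (i) a quantum subroutine that produces one coherent ``binomial sample plus payoff'' in cost polylogarithmic in $n$ (hence in $1/\epsilon$), which is where the $O(n)\to\widetilde O(1)$ per-sample saving comes from, and (ii) the quadratic reduction in the number of samples from $\widetilde O(\epsilon^{-2})$ to $\widetilde O(\epsilon^{-1})$ supplied by \lem{Montanaro2015} and \lem{powering lemma}.

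First I would fix $n = \Theta(\epsilon^{-2})$, so that the bias of the binomial model, $|\E[\psi(S_n)]-\E[P]| = O(1/\sqrt{n}) = O(\epsilon)$ by \cite{HZ00}, is at most $\epsilon/2$. I would then precompute the $m+1$ thresholds $k^{\ast}_j$ determined by $k^{\ast}_j\log U + (n-k^{\ast}_j)\log D + \log S_0 = \log\overline{S}_j$, at negligible $\widetilde O(1)$ cost, so that evaluating the piecewise constant payoff \eq{constant} reduces to comparing the sampled index $k$ against the $k^{\ast}_j$'s, never forming $S_n = U^kD^{n-k}S_0$ explicitly.

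Next I would assemble a quantum subroutine $\mathcal{A}$ realizing one sample coherently: (a) query $O_I$ to prepare the superposition over $S_0\sim\pi_0$; (b) prepare the binomial superposition $\sum_k \sqrt{\binom{n}{k}p^k(1-p)^{n-k}}\,|k\rangle$ using the coherent analog of the sublinear binomial sampler of \cite{BKP15,FT15} --- since that sampler uses only $\widetilde O(1)$ random bits and $\widetilde O(1)$ time, it compiles into a reversible circuit of size $\widetilde O(1)$ acting on a uniform superposition over its random bits, exactly as the fast-random-walk construction in \cite{LMS20}; (c) apply the quantum oracle \eq{constant_quantum_oracle}, which by the piecewise constant structure only compares $k$ to the $k^{\ast}_j$'s and writes $\psi(U^kD^{n-k}S_0)$, hence costs $\widetilde O(1)$ independent of $n$. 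The output random variable $v(\mathcal{A})$ is the payoff $\psi(S_n)$, with variance bounded by some $\sigma^2 = O(1)$ by hypothesis and per-invocation cost $\widetilde O(1)$. Applying \lem{Montanaro2015} with \lem{powering lemma} to $\mathcal{A}$ then yields an estimate $Y$ of $\E[v(\mathcal{A})] = \E[\psi(S_n)]$ with $|Y-\E[\psi(S_n)]|\le \epsilon/2$ with probability at least $0.99$, using $\widetilde O(\sigma/\epsilon) = \widetilde O(\epsilon^{-1})$ invocations of $\mathcal{A}$; since each invocation costs $\widetilde O(1)$, the total cost is $\widetilde O(\epsilon^{-1})$. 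Finally, the triangle inequality $|Y-\E[P]| \le |Y-\E[\psi(S_n)]| + |\E[\psi(S_n)]-\E[P]| \le \epsilon/2 + \epsilon/2 = \epsilon$ gives the claim.

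The step I expect to be the main obstacle is (b): verifying that the sublinear binomial sampling procedure of \cite{BKP15,FT15}, which is a classical randomized algorithm, really can be compiled into a polylogarithmic-size quantum circuit that \emph{prepares the binomial superposition} (not merely a classical sample), so that the Montanaro framework applies and the per-sample cost stays $\widetilde O(1)$. The remaining pieces --- threshold bookkeeping, the $O(1/\sqrt n)$ bias estimate, and the final triangle inequality --- are routine, but this coherence/uniformity point is exactly what rules out the naive $O(n)$ cost, and it should be argued carefully by appeal to the analogous fast quantum walk construction in \cite{LMS20}.
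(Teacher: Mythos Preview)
Your proposal is correct and follows essentially the same approach as the paper: fix $n=\Theta(\epsilon^{-2})$ to control bias, run the sublinear binomial sampler coherently on a uniform superposition over its random seed (exactly the ``apply QA-MC to the random seed'' move the paper invokes, citing Theorem~22 of \cite{LMS20}), evaluate the piecewise-constant payoff via threshold comparisons at $\widetilde O(1)$ cost, and then apply \lem{Montanaro2015} with \lem{powering lemma}. The paper's own proof is considerably terser than yours, but your expansion---including the explicit bias/variance split via the triangle inequality and the identification of the seed-superposition compilation as the only non-routine step---accurately reconstructs what the paper leaves implicit.
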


\begin{proof}
Inspired by Theorem 22 of \cite{LMS20}, we apply QA-MC to the random seed used as input to a procedure for sampling from the binomial distributions. According to \prop{BOPM}, the cost of each iteration is $O(\log n)$ with $n = O(1/\epsilon^2)$, and amplitude estimation gives the final complexity $O(\epsilon^{-1}(\log 1/\epsilon)^{3/2} (\log\log 1/\epsilon)^2)$ in time.
\end{proof}

\section{Discussion}
\label{sec:discussion}

We have presented quantum-accelerated multilevel Monte Carlo methods for stochastic processes. We apply our algorithm to several applications arising in mathematical finance, in which we classify different financial models corresponding with different payoffs in detail. We have shown a quadratically improved dependence on precision can be achieved by our algorithm.

This work raises several natural open problems. First, from the PDE perspective, we only deal with parabolic PDEs as an application of simulating SDEs. However, Poisson’s equation and elliptic PDEs can be solved by classical multilevel Monte Carlo methods \cite{CGST11}. Can we apply our algorithm to Poisson's equation, elliptic PDEs, or more general PDEs?

Second, we consider several types of financial models that can be presented as a SDE model \prb{problem}. However, we only consider time-independent payoffs $\P(X_T)$ that only rely on $X_T$. For more general time-dependent payoffs $\P(X_t)$ where $X_t$ is the stochastic path in time $t$, such as the Lookback option \cite{Gil15}, can we achieve such a quadratic speed-up for this general model? Furthermore, there are some financial models that can be solved by variational inequalities, such as American option. Can we develop corresponding quantum algorithms for such a generalization? 

Finally, we aim to output a classical value for estimating the mean of a payoff (\prb{problem}). Can we provide other meaningful characteristics of stochastic processes, or some processes beyond the SDE modelling by quantum computer? And can we find more practical quantum input-output models for potential applications in finance or other fields?

\section*{Acknowledgments}

The authors thank Andrew M.\ Childs, Lin Lin, and Nick Whiteley for valuable discussions and comments. JPL did part of this work while visiting the Simons Institute for the
Theory of Computing in Berkeley and gratefully acknowledge its hospitality. 
We thank the National Energy Research Scientific Computing (NERSC)
center and the Berkeley Research Computing (BRC) program at the
University of California, Berkeley for making computational resources
available. 

The authors acknowledge support from National Science Foundation grant CCF-1813814, the U.S. Department of Energy, Office of Science, Office of Advanced Scientific Computing Research, Quantum Algorithms Teams and Accelerated Research in Quantum Computing programs. 
This work was also partially supported by the Department of Energy under Grant No. DE-SC0017867 and the National Science Foundation under the Quantum Leap
Challenge Institutes (D.A.,J.W.).  
We acknowledge support from the QuantERA ERA-NET Cofund in Quantum Technologies implemented within the European Union’s Horizon 2020 Programme (QuantAlgo project) and EPSRC grants EP/R043957/1 and EP/T001062/1. This project has received funding from the European Research Council (ERC) under the European Union’s Horizon 2020 research and innovation programme (grant agreement No.\ 817581).

\bibliographystyle{abbrvurl}
\bibliography{qparabolic}

\newpage
\appendix

\section{Different types of errors}
\label{app:errors}

Let a random variable $Y$ be an estimator of some unknown quantity $a$. 
In this paper we have considered two different types of errors, 
mean-squared error $\E[(Y-a)^2]$ and additive error $|Y-a|$. 
Here we will show that these two types of errors are indeed almost  equivalent. 
More precisely, we will show that, up to some absolute pre-constants in the errors and logarithmic factors in the cost, the situation that mean-squared error is on the level of $\epsilon^2$ indicates that the additive error is on the level of $\epsilon$ with probability at least 0.99, and \emph{vice versa}. 

\begin{proposition}
    Let $\mathcal{A}$ be a (classical or quantum) algorithm that generates a random variable $Y$ to estimate some unknown quantity $a$. Assuming that $\E[|Y|^{2+\delta}] < \infty$ for some $\delta > 0$, we have:  
    \begin{enumerate}
        \item If $\E[(Y-a)^2] \leq \epsilon^2$, then there exists an algorithm which repeats $\mathcal{A}$ a constant number of times and outputs $\widehat{Y}$ such that $|\widehat{Y}-a| \leq 3\epsilon$ with probability at least 0.99.
        \item If $|Y-a| \leq \epsilon$ holds with probability at least 0.99, then there exists an algorithm which repeats $\mathcal{A}$ $O(\log(1/\epsilon))$ times and outputs $\widehat{Y}$ such that $\E[(\widehat{Y}-a)^2] \leq 2\epsilon^2$.
    \end{enumerate}
\end{proposition}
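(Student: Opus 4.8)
The plan is to handle the two directions separately, using Markov/Chebyshev-type inequalities in one direction and the powering (median-of-means) trick in the other. For the first direction, suppose $\E[(Y-a)^2]\le\epsilon^2$. By Markov's inequality applied to $(Y-a)^2$, a single copy of $\mathcal{A}$ already satisfies $|Y-a|\le c\epsilon$ with probability at least $1-1/c^2$; taking $c$ a small constant such as $c=4$ gives success probability $\ge 15/16$, but not yet $0.99$. To boost this I would run $\mathcal{A}$ a constant number $k$ of times to get independent outputs $Y_1,\dots,Y_k$ and take the median $\widehat Y$. A standard Chernoff bound on the number of ``bad'' copies (those with $|Y_i-a|>c\epsilon$) shows that for $k$ a suitable absolute constant the median lies within $c\epsilon$ of $a$ with probability at least $0.99$; choosing $c=3$ from the start and $k$ accordingly yields $|\widehat Y-a|\le 3\epsilon$ with probability $\ge 0.99$, as claimed. (The hypothesis $\E[|Y|^{2+\delta}]<\infty$ is not even needed here; it is used only in the second direction.)

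For the second direction, suppose $|Y-a|\le\epsilon$ with probability at least $0.99$. The obstacle is that a single copy has no control over its $L^2$ norm on the (probability-$\le 0.01$) bad event: the tail of $Y$ could contribute an arbitrarily large amount to $\E[(Y-a)^2]$. To fix this I would first apply \lem{powering lemma} (the powering lemma, Lemma~1 of~\cite{montanaro2015quantum}): repeating $\mathcal{A}$ $O(\log(1/\epsilon))$ times and taking the median produces $\widehat Y$ with $|\widehat Y-a|\le\epsilon$ except with probability at most $\epsilon^2$ (say). Then I would decompose
\begin{equation*}
\E[(\widehat Y-a)^2] = \E[(\widehat Y-a)^2 \mathbf{1}_{|\widehat Y-a|\le \epsilon}] + \E[(\widehat Y-a)^2 \mathbf{1}_{|\widehat Y-a|> \epsilon}] \le \epsilon^2 + \E[(\widehat Y-a)^2 \mathbf{1}_{\text{bad}}].
\end{equation*}
The first term is $\le\epsilon^2$ trivially; the second must be shown to be $\le\epsilon^2$ as well. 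For this, note the bad event has probability $\le\epsilon^2$, and $\widehat Y$ is a median of copies of $Y$, so $|\widehat Y-a|\le \max_i|Y_i-a|$; applying Hölder's inequality with exponents $(2+\delta)/2$ and $(2+\delta)/\delta$,
\begin{equation*}
\E[(\widehat Y-a)^2\mathbf{1}_{\text{bad}}] \le \left(\E|\widehat Y-a|^{2+\delta}\right)^{2/(2+\delta)} \left(\Pr[\text{bad}]\right)^{\delta/(2+\delta)} \le C\, \epsilon^{2\delta/(2+\delta)},
\end{equation*}
where $C$ depends on the (finite, by hypothesis) $(2+\delta)$-th moment of $Y$ and on the number of repetitions. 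This bound alone is not quite $\epsilon^2$, so the final step is to sharpen the powering stage: by repeating $\mathcal{A}$ a constant multiple more times one drives the failure probability down to $\epsilon^{M}$ for a large enough constant $M=M(\delta)$, making $\left(\Pr[\text{bad}]\right)^{\delta/(2+\delta)}$ small enough that the whole second term is $\le\epsilon^2$; the number of repetitions is still $O(\log(1/\epsilon))$. Combining the two terms gives $\E[(\widehat Y-a)^2]\le 2\epsilon^2$.

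The main obstacle is exactly this tail-control issue in the second direction: without the moment hypothesis $\E[|Y|^{2+\delta}]<\infty$ one cannot convert a high-probability bound into an $L^2$ bound at all, and even with it one needs the interplay between the $(2+\delta)$-moment and the exponentially small failure probability (achievable via $O(\log 1/\epsilon)$ repetitions of the median trick) to close the gap. The first direction and the bookkeeping of absolute constants ($3\epsilon$, $2\epsilon^2$) are routine.
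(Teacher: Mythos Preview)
Your proposal is correct and follows essentially the same approach as the paper: for part~1 a Chebyshev/Markov bound to get constant success probability followed by the median trick, and for part~2 the powering lemma to drive the failure probability down to a suitable power of $\epsilon$, then the good/bad decomposition of $\E[(\widehat Y-a)^2]$ combined with H\"older's inequality (exponents $(2+\delta)/2$ and $(2+\delta)/\delta$) on the bad part. The only cosmetic differences are that the paper chooses a specific failure probability (namely $2^{-2-2/\delta}(\E|Y|^{2+\delta}+|a|^{2+\delta})^{-2/\delta}\epsilon^{2+4/\delta}$) so that the H\"older estimate comes out exactly $\epsilon^2$, whereas you argue more loosely with $\epsilon^M$ for $M$ large; and you are actually more careful than the paper in tracking the $(2+\delta)$-moment of the median $\widehat Y$ via $|\widehat Y-a|\le\max_i|Y_i-a|$.
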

\begin{proof}
    1. By $\E[(Y-a)^2] \leq \epsilon^2$ and the bias-variance decomposition
    \begin{equation}
        \E[(Y-a)^2] = \E [(Y-\E [Y])^2] + (\E [Y] - a)^2,
    \end{equation}
    we have $\E [(Y-\E [Y])^2] \leq \epsilon^2$ and $|\E [Y] - a| \leq \epsilon$. 
    Since 
    \begin{equation}
        |Y-\E [Y]| \geq  |Y- a| -  |a-\E [Y]| \geq |Y- a| - \epsilon, 
    \end{equation}
    the event $\{|Y-a| \geq 3\epsilon\}$ is a subset of the event $\{|Y-\E [Y]| \geq 2\epsilon\}$. 
    Together with Chebyshev's inequality, we have 
    \begin{equation}
    \begin{split}
        \mathbb{P}(|Y-a| \geq 3\epsilon) &\leq \mathbb{P}(|Y-\E [Y]| \geq 2\epsilon) \\
        &\leq \frac{\E [(Y-\E [Y])^2] }{(2\epsilon)^2} \\
        & \leq \frac{1}{4}.
    \end{split}
    \end{equation}
    This indicates that a single sample of $Y$ can estimate $a$ up to additive error $3\epsilon$ with probability at least $3/4$. 
    Then by \lem{powering lemma}, it suffices to repeat $\mathcal{A}$ a constant number of times to boost the success probability to $0.99$. 
    This completes the proof of the first part. 
    
    2. First, according to \lem{powering lemma}, we can construct an estimator $\widehat{Y}$ such that $|\widehat{Y} - a| \leq \epsilon$ with probability at least
    $$1-2^{-2-2/\delta}\left(\E |Y|^{2+\delta}+|a|^{2+\delta}\right)^{-2/\delta}\epsilon^{2+4/\delta} $$ 
    by repeating the algorithm $\mathcal{A}$ $O(\log(1/\epsilon))$ times. 
    Let $\psi_A$ be the indicator function of a set $A$, \emph{i.e. }$\psi_A(x) = 1$ if $x \in A$ and $\psi_A(x) = 0$ if $x \notin A$. 
    We split the mean-squared error into two parts, 
    \begin{equation}
        \begin{split}
            \E[(\widehat{Y}-a)^2] &= \int (y-a)^2 \d \mathbb{P}_{\widehat{Y}}(y) \\
            &= \int (y-a)^2 \psi_{\{|y-a| < \epsilon\}}(y) \d \mathbb{P}_{\widehat{Y}}(y) + \int (y-a)^2 \psi_{\{|y-a| \geq \epsilon\}}(y) \d \mathbb{P}_{\widehat{Y}}(y). 
        \end{split}
    \end{equation}
    The first integral can be bounded as 
    \begin{equation}
        \int (y-a)^2 \psi_{\{|y-a| < \epsilon\}}(y) \d \mathbb{P}_{\widehat{Y}}(y) \leq \int \epsilon^2 \psi_{\{|y-a| < \epsilon\}}(y) \d \mathbb{P}_{\widehat{Y}}(y) 
        \leq \int \epsilon^2 \d \mathbb{P}_{\widehat{Y}}(y) =\epsilon^2.
    \end{equation}
    By H\"older's inequality, the second integral can be bounded as 
    \begin{equation}
    \begin{split}
         & \quad \int (y-a)^2 \psi_{\{|y-a| \geq \epsilon\}}(y) \d \mathbb{P}_{\widehat{Y}}(y) \\
        &\leq \left[\int |y-a|^{2+\delta}\d \mathbb{P}_{\widehat{Y}}(y) \right]^{\frac{2}{2+\delta}}\left[\int\psi_{\{|y-a| \geq \epsilon\}}(y)  \d \mathbb{P}_{\widehat{Y}}(y) \right]^{\frac{\delta}{2+\delta}} \\
        & = \left[\int |y-a|^{2+\delta}\d \mathbb{P}_{\widehat{Y}}(y) \right]^{\frac{2}{2+\delta}}\left[\mathbb{P}(|\widehat{Y}-a|\geq \epsilon) \right]^{\frac{\delta}{2+\delta}} \\
        & \leq 2^{-\frac{2+2\delta}{2+\delta}}\left(\E |Y|^{2+\delta}+|a|^{2+\delta}\right)^{-\frac{2}{2+\delta}}\epsilon^2\left[\int |y-a|^{2+\delta}\d \mathbb{P}_{\widehat{Y}}(y) \right]^{\frac{2}{2+\delta}}  \\
        & \leq 2^{-\frac{2+2\delta}{2+\delta}}\left(\E |Y|^{2+\delta}+|a|^{2+\delta}\right)^{-\frac{2}{2+\delta}}\epsilon^2 \left[\int (|y|+|a|)^{2+\delta}\d \mathbb{P}_{\widehat{Y}}(y) \right]^{\frac{2}{2+\delta}} \\
        & \leq \left(\E |Y|^{2+\delta}+|a|^{2+\delta}\right)^{-\frac{2}{2+\delta}}\epsilon^2 \left[\int |y|^{2+\delta}\d \mathbb{P}_{\widehat{Y}}(y) + \int |a|^{2+\delta}\d \mathbb{P}_{\widehat{Y}}(y) \right]^{\frac{2}{2+\delta}} \\
        & = \left(\E |Y|^{2+\delta}+|a|^{2+\delta}\right)^{-\frac{2}{2+\delta}}\epsilon^2  \left[\E |Y|^{2+\delta} + |a|^{2+\delta} \right]^{\frac{2}{2+\delta}}\\
        &= \epsilon^2 .
    \end{split}
    \end{equation}
    Therefore the mean-squared error $\E[(\widehat{Y}-a)^2]$ is bounded by $2\epsilon^2$. 
    This completes the proof of the second part. 
\end{proof}

\section{Numerical Results}
\label{app:numerical results}
In this part, we test several numerical schemes of SDE on Black-Scholes option pricing model to obtain a classical estimate of the parameters $\alpha$ and $\beta$ in  \prop{alpha_beta_gamma_general}. We consider five numerical schemes, including Euler-Maruyama scheme \eq{Euler}, Milstein scheme \eq{Milstein}, strong order 1.5 scheme\eq{strong1.5}, strong order 2 scheme\eq{strong2} \eq{strong2_2} and strong order 3 scheme\eq{strong3}. The strong order 1.5 scheme is based on 
Taylor-It\^{o} expansion
\begin{equation}\label{eq:strong1.5}
\begin{split}
    \widehat{X}_{k+1} &= \widehat{X}_k + \mu_k h + \sigma_k \Delta W_k +\sigma_k \sigma_k' I_{(1,1)}+\sigma_k\mu_k'I_{(1,0)}+(\mu_k\mu_k' +\frac{1}{2}\sigma_k^2 \mu_k) \frac{h^2}{2}\\
    &\quad +(\mu_k\sigma_k'+\frac{1}{2}\sigma_k^2\sigma_k'')I_{(0,1)}+\sigma_k(\sigma_k\sigma_k''+(\sigma_k')^2)I_{(1,1,1)}.
\end{split}
\end{equation}
Here, \begin{equation}
    \begin{split}
        &\mu_k:=\mu(\widehat{X}_k,t),\qquad\mu_k':=\partial_X \mu(\widehat{X}_k,t),\\
        &\sigma_k:=\sigma(\widehat{X}_k,t),\qquad \sigma_k':=\partial_X \sigma(\widehat{X}_k,t),\qquad\sigma_k'':=\partial_X^2 \sigma(\widehat{X}_k,t).
    \end{split}
\end{equation}
Without other notice, for any function $f:=f(x,t)$, $f'$ denote the partial derivative of function $f$ with respect to $x$. $I_{(1,1)}, I_{(1,0)}, I_{(0,1)}$ and $I_{(1,1,1)}$ are multiple It\^{o} integrals:

\begin{equation}
    \begin{split}
        I_{(1)} &=\int_{t_n}^{t_{n+1}} \d W_{s_1} =\Delta W_k,\\
        I_{(1,1)}&=\int_{t_n}^{t_{n+1}}  \int_{t_n}^{s_2} \d W_{s_1} \d W_{s_2}=\frac{1}{2}(I_{(1)}^2-h),\\
        I_{(1,0)}&=\int_{t_n}^{t_{n+1}}  \int_{t_n}^{s_2} \d W_{s_1} ds_2,\\
        I_{(0,1)}&=\int_{t_n}^{t_{n+1}}  \int_{t_n}^{s_2} ds_1 \d W_{s_2}=h I_{(1)}-I_{(1,0)},\\
        I_{(1,1,1)}&=\int_{t_n}^{t_{n+1}}  \int_{t_n}^{s_3} \int_{t_n}^{s_2} \d W_{s_1} \d W_{s_2} \d W_{s_3}=\frac{1}{2}(\frac{1}{3}I_{(1)}^2-h)I_{(1)}.\\
    \end{split}
\end{equation}
In practice, we use two independent $\mathcal{N}(0,1)$ distributed random variables $U_1$ and $U_2$ to approximate $I_{(1)}$ and $I_{(1,0)}$ with
\begin{equation}
    \begin{split}
         I_{(1)} =\sqrt{h}U_1,\qquad I_{(1,0)}=\frac{1}{2}h^{\frac{3}{2}}(U_1+\frac{1}{\sqrt{3}}U_2).
    \end{split}
\end{equation}

The integer strong order Taylor schemes can be conveniently derived form a Taylor-Stratonovich expansion, and thus we use the following strong order 2 scheme \cite{Kuz18}: 
\begin{equation}\label{eq:strong2}
\begin{split}
    \widehat{X}_{k+1} &= \widehat{X}_k + \bar{\mu}_k h + \sigma_k \Delta W_k +\sigma_k \sigma_k' J_{(1,1)}+\sigma_k \bar{\mu}_k' J_{(1,0)}+ \bar{\mu}_k\sigma_k' J_{(0,1)}+ \bar{\mu}_k \bar{\mu}_k' \frac{h^2}{2}\\
    &\quad +\sigma_k(\sigma_k\sigma_k')'J_{(1,1,1)}+ \bar{\mu}_k (\sigma_k\sigma_k')' J_{(0,1,1)}+\sigma_k( \bar{\mu}_k \sigma_k')' J_{(1,0,1)}+\sigma_k(\sigma_k \bar{\mu}_k')'J_{(1,1,0)}\\
    &\quad +\sigma_k(\sigma_k (\sigma_k \sigma_k')')' J_{(1,1,1,1)},
\end{split}
\end{equation}
where 
\begin{equation}
     \bar{\mu}=\mu-\frac{1}{2}\sigma \sigma',
\end{equation}
and $J_{(1,1)}, J_{(1,0)}, J_{(0,1)}, J_{(1,1,1)}, J_{(1,1,0)}, J_{(1,0,1)}, J_{(0,1,1)}$ and $J_{(1,1,1,1)}$ are multiple Stratonovich integrals:
\begin{equation}
    \begin{split}
        J_{(1)} &=\int_{t_n}^{t_{n+1}} \circ \d W_{s_1} = I_{(1)}=\Delta W_k,\\
        J_{(1,1)}&=\int_{t_n}^{t_{n+1}}  \int_{t_n}^{s_2}\circ \d W_{s_1} \circ \d W_{s_2}=\frac{1}{2!} (J_{(1)})^2,\\
        J_{(1,0)}&=\int_{t_n}^{t_{n+1}}  \int_{t_n}^{s_2} \circ \d W_{s_1}  ds_2,\\
        J_{(0,1)}&=\int_{t_n}^{t_{n+1}}  \int_{t_n}^{s_2}  ds_1 \circ \d W_{s_2},\\
        J_{(1,1,1)}&=\int_{t_n}^{t_{n+1}}  \int_{t_n}^{s_3} \int_{t_n}^{s_2}\circ \d W_{s_1}\circ \d W_{s_2} \circ \d W_{s_3}=\frac{1}{3!} (J_{(1)})^3,\\
        J_{(1,1,0)}&=\int_{t_n}^{t_{n+1}}  \int_{t_n}^{s_3} \int_{t_n}^{s_2} ds_1\circ \d W_{s_2}\circ \d W_{s_3},\\
        J_{(1,0,1)}&=\int_{t_n}^{t_{n+1}}  \int_{t_n}^{s_3} \int_{t_n}^{s_2}\circ \d W_{s_1} ds_2\circ \d W_{s_3},\\
    \end{split}
\end{equation}

\begin{equation*}
    \begin{split}
        J_{(0,1,1)}&=\int_{t_n}^{t_{n+1}}  \int_{t_n}^{s_3} \int_{t_n}^{s_2}\circ \d W_{s_1}\circ \d W_{s_2}  ds_3,\\
        J_{(1,1,1,1)}&=\int_{t_n}^{t_{n+1}} \int_{t_n}^{s_3} \int_{t_n}^{s_3} \int_{t_n}^{s_2}\circ \d W_{s_1}\circ \d W_{s_2}\circ \d W_{s_3}\circ \d W_{s_4}=\frac{1}{4!} (J_{(1)})^4.\\
    \end{split}
\end{equation*}
We remark that there also exists the strong order 2 scheme derived from Taylor-It\^{o} expansion~\cite{KP13}. Typically, we approximate multiple Stratonovich integrals and multiple It\^{o} integrals by introducing additional random variables. However, the approximation of multiple Stratonovich integrals is simpler and requires minimal set of random variables. That is also why we prefer high order numerical schemes derived from Taylor-Stratonovich expansion, such as the strong order 2 scheme and strong order 3 scheme in the following. 

Since we consider Black-Scholes model,
\begin{equation}\label{eq:BSMgeneral}
\d{S_t} = \mu S_t\d t + \sigma S_t\d W_t,
\end{equation} 
the strong order 2 scheme can be simplified to 
\begin{equation}\label{eq:strong2_2}
    \begin{split}
         \widehat{X}_{k+1} &= \widehat{X}_k + \bar{\mu}\widehat{X}_k h + \sigma \widehat{X}_k \Delta W_k +\sigma^2 \widehat{X}_k J_{(1,1)}+\bar{\mu}\sigma \widehat{X}_k h\Delta W_k+ \bar{\mu}^2 \widehat{X}_k \frac{h^2}{2}\\
    &\quad +\sigma^3 \widehat{X}_k J_{(1,1,1)}+ \bar{\mu}\sigma^2 \widehat{X}_k h J_{(1,1)}+\sigma^4 \widehat{X}_k J_{(1,1,1,1)},
    \end{split}
\end{equation}
where $\bar{\mu}=\mu-\frac{1}{2}\sigma^2$ and we use a nice property of multiple Stratonovich integral that
\begin{equation}
    J_{(1,0)}+J_{(0,1)}=h J_{(1)},\quad \quad J_{(1,1,0)}+J_{(1,0,1)}+ J_{(0,1,1)}= h J_{(1,1)}.
\end{equation}

The strong order 3 schemes for general SDEs are thoroughly discussed in \cite{Kuz18}. We choose the strong order 3.0 scheme based on Taylor-Stratonovich expansion and the corresponding discretization scheme for \eq{BSMgeneral} is as follow.
\begin{equation}\label{eq:strong3}
\begin{split}
    \widehat{X}_{k+1}&=\widehat{X}_k+\bar{\mu} \widehat{X}_k h+\sigma \widehat{X}_k \Delta W_k + \sigma^2 \widehat{X}_k J_{(1,1)}+\bar{\mu} \sigma \widehat{X}_k h \Delta W_k +\bar{\mu}^2 \widehat{X}_k\frac{h^2}{2}+\sigma^3 \widehat{X}_k J_{(1,1,1)} \\
    &\quad +\bar{\mu} \sigma^2 \widehat{X}_k h J_{(1,1)} + \sigma^4 \widehat{X}_k J_{(1,1,1,1)}+ \bar{\mu}^2\sigma\widehat{X}_k  \frac{h^2}{2}\Delta W_k+\bar{\mu} \sigma^3 \widehat{X}_k h J_{(1,1,1)}+ \sigma^5 \widehat{X}_k J_{(1,1,1,1,1)}\\
    & \quad + \bar{\mu}^3\widehat{X}_n\frac{h^3}{6} +\sigma^6 \widehat{X}_k J_{(1,1,1,1,1,1)}+\bar{\mu}^2 \sigma^2\widehat{X}_k\frac{h^2}{2} J_{(1,1)}+\bar{\mu}\sigma^4 \widehat{X}_k h J_{(1,1,1,1)}
\end{split}
\end{equation}
where $J_{(1,1,1,1,1)}$ and $J_{(1,1,1,1,1,1)}$ are multiple Stratonovich integrals with
\begin{equation}
    J_{(1,1,1,1,1)}=\frac{J_{(1)}}{5!}, \qquad J_{(1,1,1,1,1,1)}=\frac{J_{(1)}}{6!}. 
\end{equation}

As for options, we consider European option and Digital option pricing. For European option, the payoff function is 
 \begin{equation}
     \psi (S_T) = \exp (-\mu T) \max\{S_T-K,0\}
 \end{equation}
 For Digital option, the payoff function is 
 \begin{equation}
     \psi (S_T)= 5\exp(-\mu T)(1+ \mathcal{H}(S_T-K))
 \end{equation}
 where $\mathcal{H}$ is Heaviside step function.
 
 \tab{tab2}, \ref{tab:tab3} and \fig{fig1}, \ref{fig:fig2}, \ref{fig:fig3}, \ref{fig:fig4}, \ref{fig:fig5} show our numerical results on estimating $\alpha$ and $\beta$ as well as the convergence performance of the errors at each layer of MLMC. 
 The five schemes mentioned above are tested by running respectivly $10^6$, $10^6$, $10^7$, $10^8$, $10^9$ independent simulations with the following choice of parameters: 
 \begin{equation}
     \mu=0.05,\qquad \sigma=0.2,\qquad T=1,\qquad S_0=100,\qquad K=100.
 \end{equation}
 The numerical scalings of $\alpha$ and $\beta$ are estimated using linear regression based on the last four points, in which the time step size $h$ is small enough such that the corresponding numerical scheme has already well converged (as shown in the figures) and the numerical errors are dominated by the leading order term. 
 
 \begin{table}[ht]
    \centering
\begin{tabular}{|c|c|c|c|c|c|}
\hline  
Option &Euler Maruyama & Milstein & strong order 1.5  & strong order 2 & strong order 3\\
\hline 
European & 0.976999  & 1.962848 & 2.970166 & 3.964626 & 5.958417\\
\hline 
Digital & 0.473426  & 0.869393  & 1.452448  & 1.775679  & NAN\\
\hline
\end{tabular}
    \caption{numerical estimates of $\beta$ based on linear regression for five schemes}
    \label{tab:tab2}
\end{table}

\begin{table}[ht]
    \centering
\begin{tabular}{|c|c|c|c|c|c|}
\hline  
Option &Euler Maruyama & Milstein & strong order 1.5  & strong order 2 & strong order 3 \\
\hline 
European & 1.136214 & 0.979572  & 1.747239  & 1.970829 & 2.961041 \\
\hline 
Digital & 1.023176 & 0.791818  & 1.853158 & 1.827618 & NAN\\
\hline
\end{tabular}
    \caption{numerical estimates of $\alpha$ based on linear regression for five schemes}
    \label{tab:tab3}
\end{table}
 
 \tab{tab2} and \tab{tab3} shows the estimate of $\beta$ and $\alpha$ of these schemes respectively. 
 In the case of European option, the estimates of $\beta$ for five schemes all approximately equal twice the corresponding strong order, which implies that our theoretical estimate for $\beta$ in \prop{alpha_beta_gamma_general} is sharp for Lipschitz continuous functions. 
 The estimates of $\alpha$ with higher order schemes are approximately equal to the corresponding strong order, which also agrees well with our theoretical estimate for $\alpha$. 
 In the case of Digital option, other than the strong order 3 scheme (which will be discussed later), the estimates of $\beta$ are roughly equal to the corresponding strong order of the schemes, which again verifies our theoretical results for non-Lipschitz payoff functions. 
 We remark that, similarly to the European option case, some of the estimates for $\alpha$ are larger than the strong order. 
 This is because our theoretical estimate for $\alpha$ is quite conservative that we only employ properties of strong order, while the convergence of payoff function is more related to the weak order, which describes the convergence order of the moments of the stochastic process. 
 This also implies that our theoretical estimate for $\alpha$ might be improvable for particular schemes and payoff functions. 

 We notice that for Digital option, strong order 3 scheme cannot output the estimate of $\alpha$ and $\beta$. That is because strong order 3 scheme is of pretty high accuracy. When we compute the option price, since the expectation of $S_T$ under our parameter choices is larger than $K$, the Heaviside step function $\mathcal{H}$ outputs 1 with extremely high probability and the option price is nearly  deterministic. The outcome of numerical experiment has such a small variance that the influence of the increase in level $l$ on the mean and variance is hard to distinguish, which explains the results shown in \fig{fig5}. 

 So we make some changes to the choice of parameters. We increase $\sigma$ to 1.5, which increase the randomness, and choose $K$ to be  exactly the theoretical expectation $100e^{-0.05}$. The numerical results are shown in \fig{fig6}. The estimate of $\beta$ is 2.957982, which agrees well with theoretical result, and the estimate of $\alpha$ is 2.438328, which might be estimated more accurately by using more samplings as well as smaller time step sizes. 

\begin{figure}[htbp]
        \centering
        \includegraphics[scale=0.65]{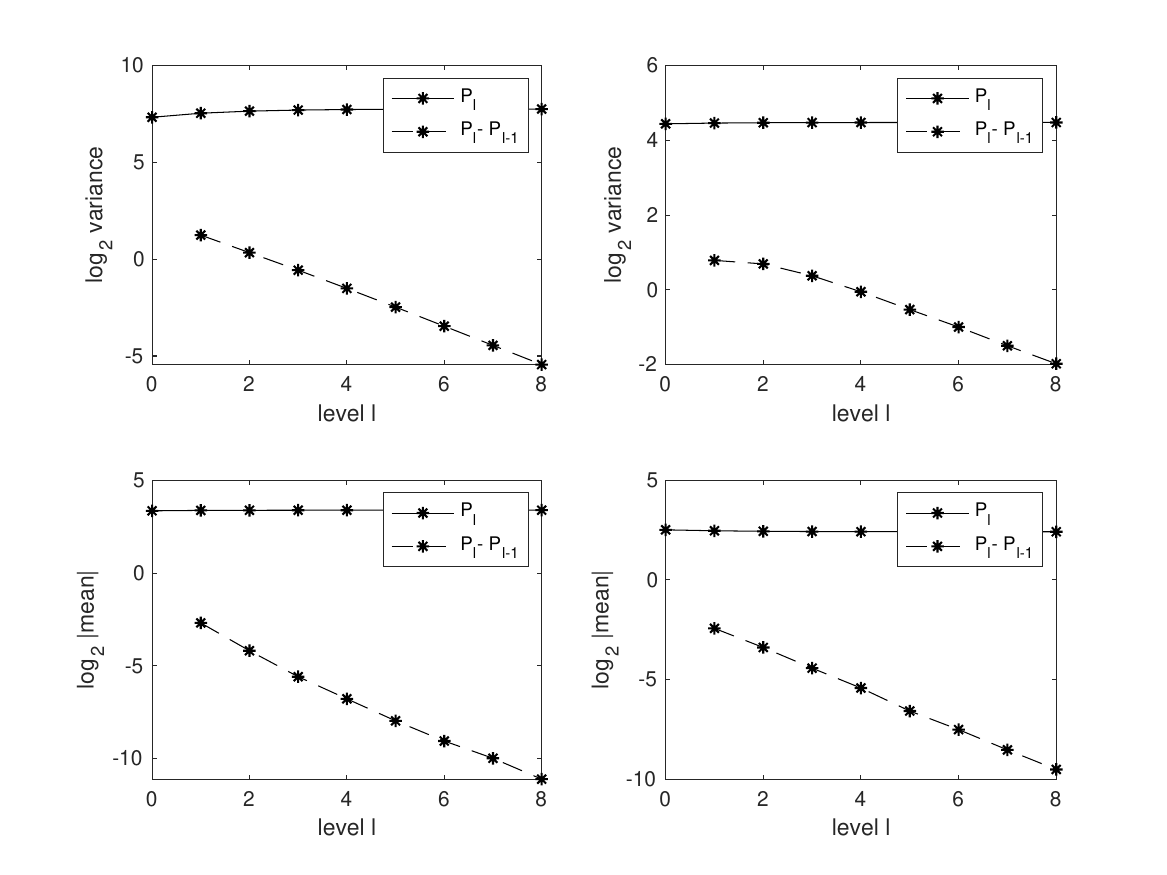}
        \caption{estimates of $\beta$ and $\alpha$ when using Euler Maruyama Scheme for European option (left) and Digital option (right). The top two plots show the variance versus level $l$, and the slope gives an estimate of $\beta$. The bottom two plots show the mean versus level $l$ and the slope gives an estimate of $\alpha$. }
        \label{fig:fig1}
\end{figure}
\begin{figure}[htbp]
       \centering
        \includegraphics[scale=0.65]{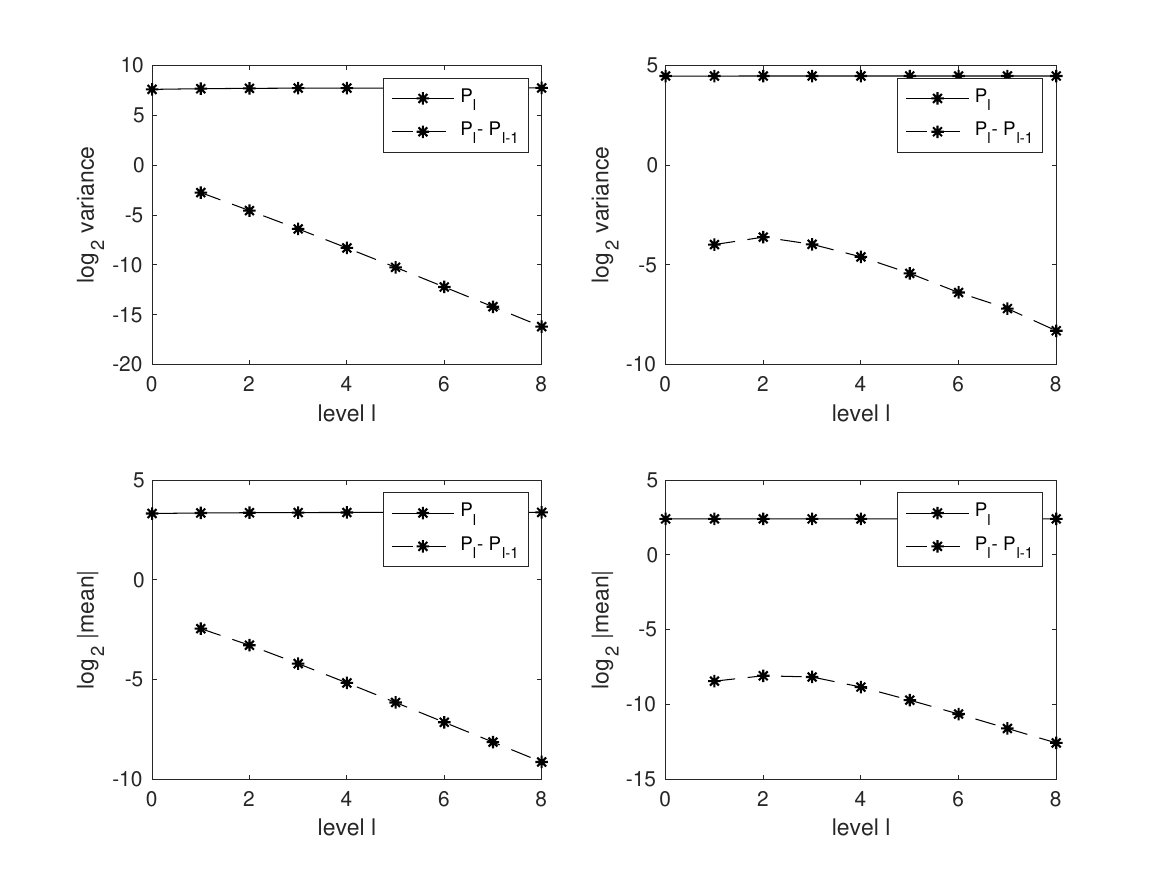}
        \caption{estimates of $\beta$ and $\alpha$ when using Milstein Scheme for European option (left) and Digital option (right). The top two plots show the variance versus level $l$, and the slope gives an estimate of $\beta$. The bottom two plots show the mean versus level $l$ and the slope gives an estimate of $\alpha$.}
        \label{fig:fig2}
\end{figure}
\begin{figure}[htbp]
       \centering
        \includegraphics[scale=0.65]{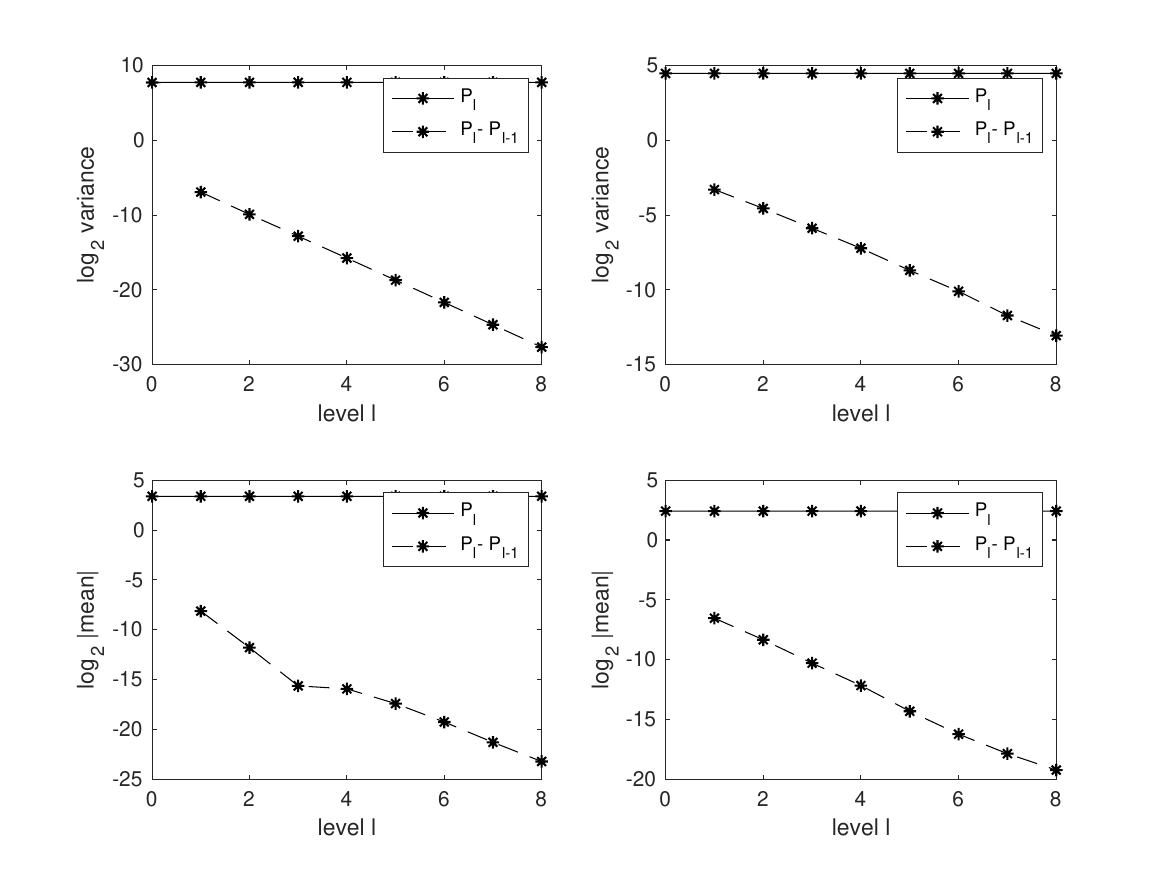}
        \caption{estimates of $\beta$ and $\alpha$ when using Strong order 1.5 Scheme for European option (left) and Digital option (right). The top two plots show the variance versus level $l$, and the slope gives an estimate of $\beta$. The bottom two plots show the mean versus level $l$ and the slope gives an estimate of $\alpha$.}
        \label{fig:fig3}
\end{figure}
\begin{figure}[htbp]
        \centering
        \includegraphics[scale=0.65]{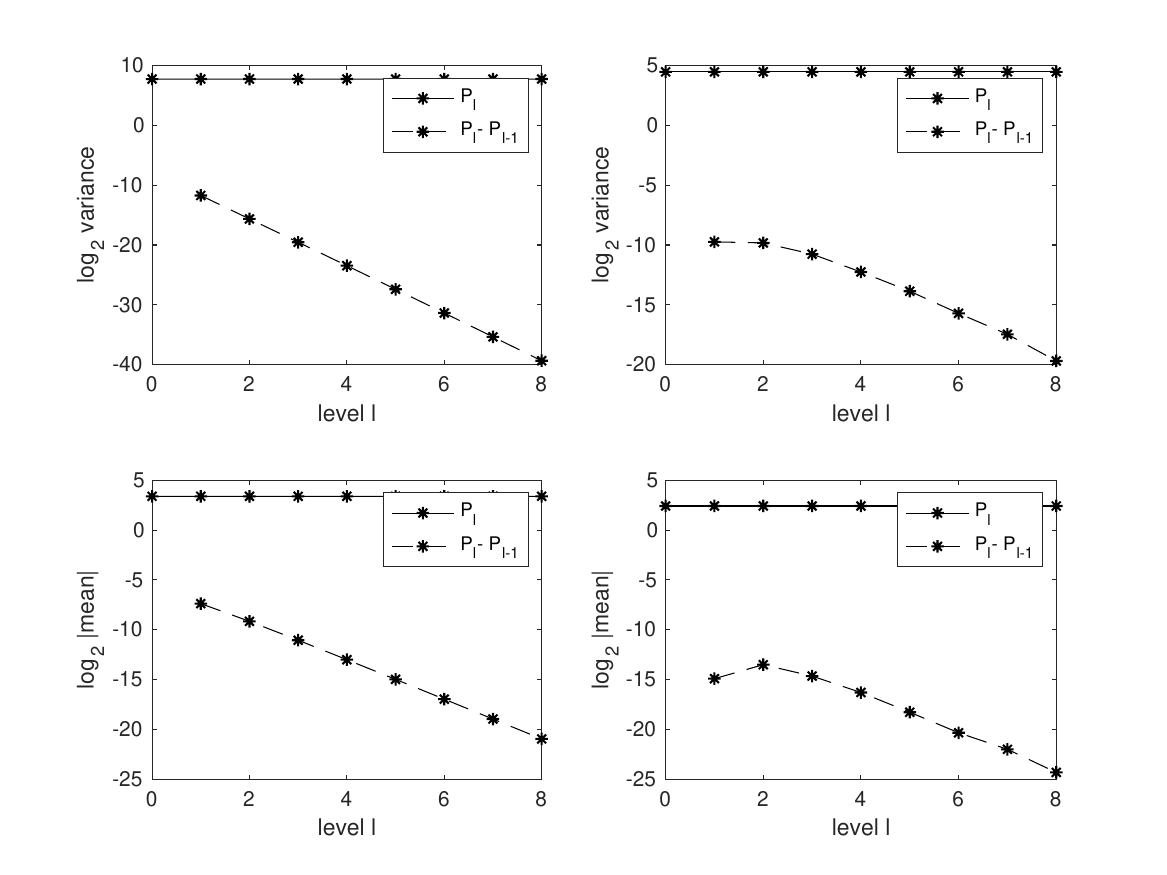}
        \caption{estimates of $\beta$ and $\alpha$ when using Strong order 2 Scheme for European option (left) and Digital option (right). The top two plots show the variance versus level $l$, and the slope gives an estimate of $\beta$. The bottom two plots show the mean versus level $l$ and the slope gives an estimate of $\alpha$.}
        \label{fig:fig4}
\end{figure}
\begin{figure}[htbp]
        \centering
        \includegraphics[scale=0.65]{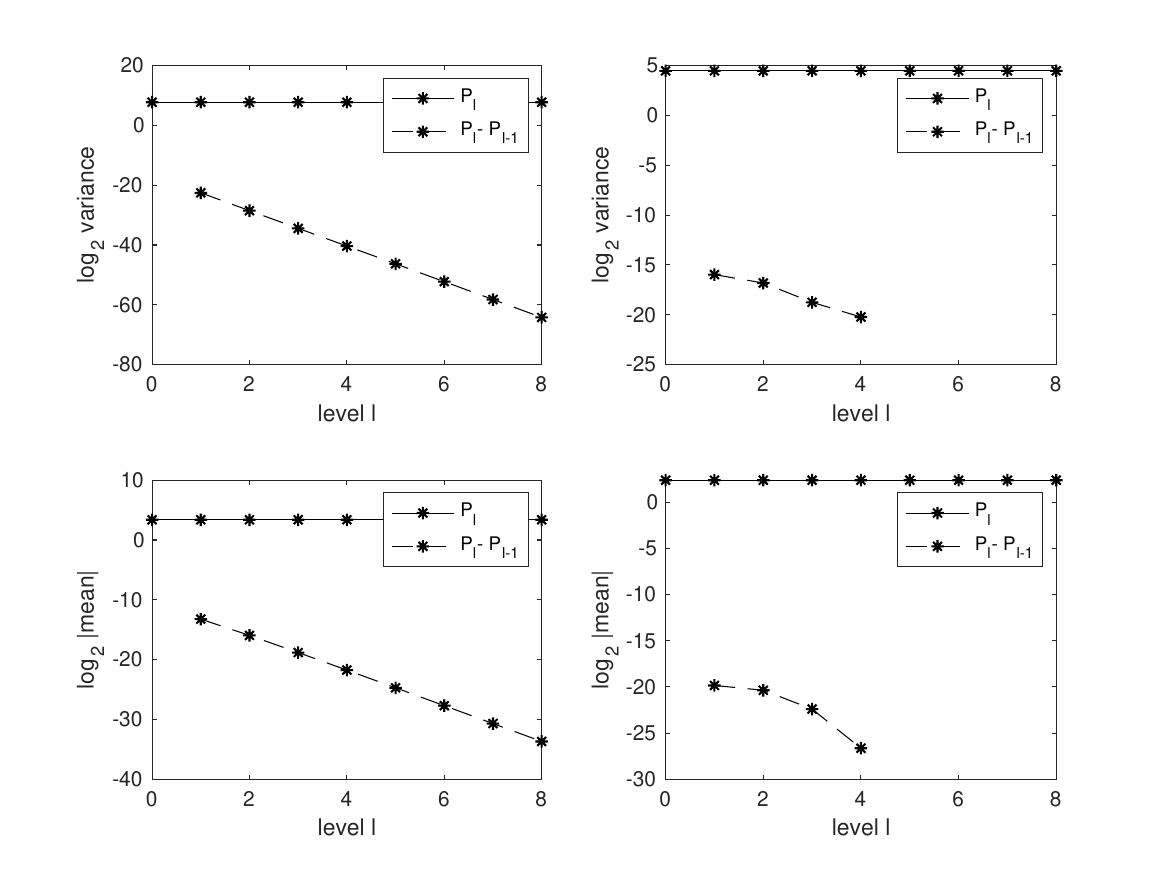}
        \caption{estimates of $\beta$ and $\alpha$ when using Strong order 3 Scheme for European option (left) and Digital option (right). The top two plots show the variance versus level $l$, and the slope gives an estimate of $\beta$. The bottom two plots show the mean versus level $l$ and the slope gives an estimate of $\alpha$.}
        \label{fig:fig5}
\end{figure}

\begin{figure}[htbp]
        \centering
        \includegraphics[scale=0.54]{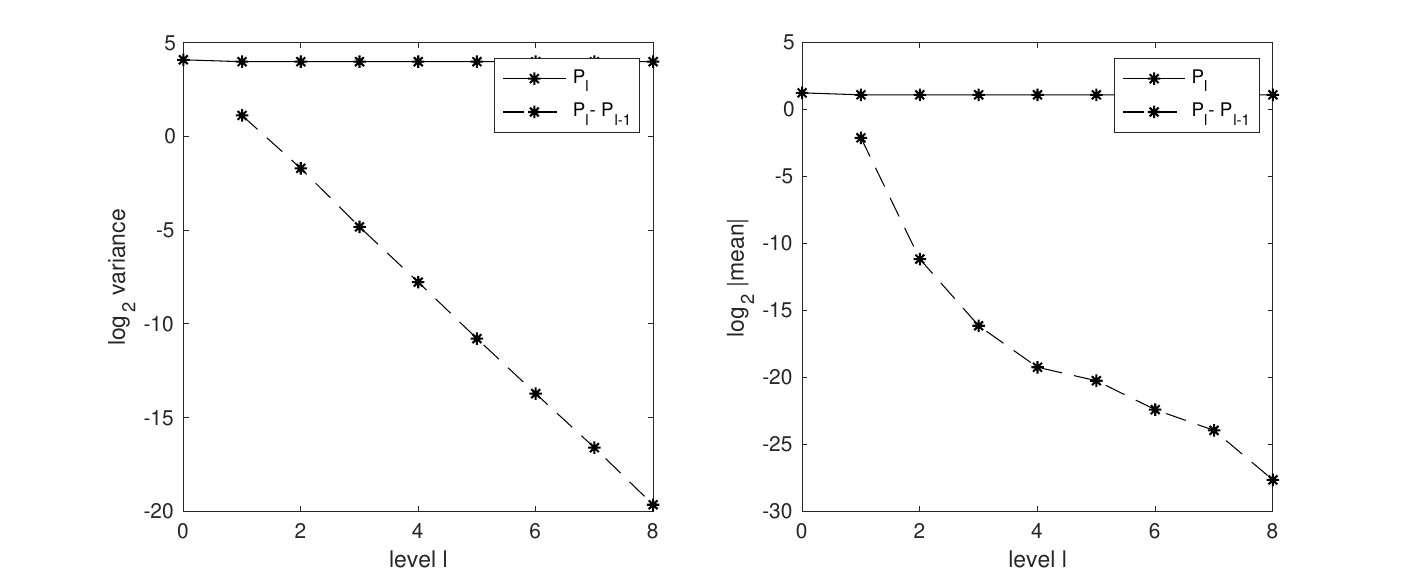}
        \caption{estimates of $\beta$ and alpha when using Strong order 3 Scheme for Digital option with new parameters $\sigma=1.5$ and $K=100e^{-0.05}$. The left plot shows the variance versus level $l$, and the slope gives an estimate of $\beta$. The right plot shows the mean versus level $l$ and the slope gives an estimate of $\alpha$.}
        \label{fig:fig6}
\end{figure}

\end{document}